\crefname{figure}{fig.}{figs.}
\Crefname{figure}{Fig.}{Figs.}
\crefname{equation}{eq.}{eqs.}
\Crefname{equation}{Eq.}{Eqs.}
\crefname{algorithm}{alg.}{algs.}
\Crefname{algorithm}{Alg.}{Algs.}
\Crefname{definition}{Def.}{Def.}
\crefname{section}{\S}{\S\S}
\Crefname{section}{\S}{\S\S}
\crefname{lemma}{lem.}{lems.}
\Crefname{lemma}{Lem.}{Lems.}
\crefname{theorem}{thm.}{thms.}
\Crefname{theorem}{Thm.}{Thms.}
\crefname{proposition}{prop.}{props.}
\Crefname{proposition}{Prop.}{Props.}
\newtheorem{theorem}{Theorem}
\newtheorem{definition}{Definition}
\newtheorem{lemma}{Lemma}
\newtheorem{proposition}{Proposition}
\newtheorem{assumption}{Assumption}
\newcommand{\half}{\frac{1}{2}}
\newcommand{\bas}[1]{\begin{align*}#1\end{align*}}
\newcommand{\bac}[1]{\begin{equation}\begin{aligned}#1\end{aligned}\end{equation}}
\newcommand{\ba}[1]{\begin{align}#1\end{align}}
\newcommand{\norm}[1]{\left\lVert#1\right\rVert}
\newcommand{\idx}[1]{\{1,2,\cdots,#1\}}
\newcommand{\distas}[1]{\mathbin{\overset{#1}{\kern\z@\sim}}}
\newcommand{\indep}{\rotatebox[origin=c]{90}{$\models$}}
\newcommand{\g}{\,\vert\,}
\newcommand{\cN}{\mathcal{N}}
\newcommand{\wv}{\boldsymbol{w}}
\newcommand{\bR}{\mathbb{R}}
\newcommand{\E}{\mathbb{E}}
\newcommand{\cE}{\mathcal{E}}
\newcommand{\Z}{\mathcal{Z}}
\newcommand{\I}{\mathcal{I}}
\newcommand{\bP}{\mathbb{P}}
\def\NoNumber#1{{\def\alglinenumber##1{}\State #1}\addtocounter{ALG@line}{-1}}
\newenvironment{proof-sketch}{\noindent{\it Proof Sketch}\hspace*{1em}}{\qed\bigskip}
\titleformat{\subsubsection}[runin]{\normalfont\bfseries}{\thesubsubsection}{1em}{}
\newtheoremstyle{bfnoteonly}%
{}{}%
{\itshape}{}%
{}{.}%
{ }%
{\thmnote{#3}}
\let\oldnl\nl%
\newcommand{\nonl}{\renewcommand{\nl}{\let\nl\oldnl}}%
\theoremstyle{bfnoteonly}
\newcommand{\blind}{1}
\begin{document}

\def\spacingset#1{\renewcommand{\baselinestretch}%
{#1}\small\normalsize} \spacingset{1}

\if1\blind
{
  \title{\bf Conformal Sensitivity Analysis for \\Individual Treatment Effects}
\author{
Mingzhang Yin \thanks{Columbia University, Data Science Institute and Irving Institute for Cancer Dynamics; 
{Correspondence to:}~\url{mzyin11@gmail.com}.
}
\and  Claudia Shi \thanks{Columbia University, Department of Computer Science;
{\em email:}~\url{claudia.j.shi@gmail.com}.
}
\and Yixin Wang  \thanks{University of Michigan, Department of Statistics;
{\em email:}~\url{yixinw@umich.edu}.
}
\and David M. Blei \thanks{Columbia University, Department of Computer Science and Department of Statistics;
{\em email:}~\url{david.blei@columbia.edu}.
}
}
  \maketitle
} 
\fi

\if0\blind
{
  \bigskip
  \bigskip
  \bigskip
  \begin{center}
    {\LARGE\bf Conformal Sensitivity Analysis for \\ Individual Treatment Effects}
\end{center}
  \medskip
} \fi

\bigskip
\begin{abstract}
  Estimating an individual treatment effect (ITE) is essential to
  personalized decision making.  However, existing methods for
  estimating the ITE often rely on unconfoundedness, an assumption
  that is fundamentally untestable with observed data.  To assess the robustness of individual-level causal conclusion with unconfoundedness,
  this paper proposes a method for sensitivity analysis of the ITE, a
  way to estimate a range of the ITE under unobserved confounding.
  The method we develop quantifies unmeasured confounding through a
  marginal sensitivity
  model~\citep{rosenbaumn2002observational,tan2006distributional}, and
   adapts the framework of conformal inference to estimate an ITE
  interval at a given confounding strength.  In particular, we
  formulate this sensitivity analysis  as a conformal
  inference problem under distribution shift, and we extend existing methods
  of covariate-shifted conformal inference to this more general
  setting.  The resulting predictive interval  has guaranteed
  nominal coverage of the ITE and provides this coverage with
  distribution-free and nonasymptotic guarantees.  We evaluate the
  method on synthetic data and illustrate its application in an
  observational study.
\end{abstract}

\noindent%
{\it Keywords: }  Predictive inference, Unconfoundedness, Distribution shift, Uncertainty quantification
\vfill

\newpage
\spacingset{1.9} %
\addtolength{\textheight}{.5in}%
\addtolength{\textheight}{-.3in}%
\section{Introduction}
Consider a person who ponders whether to take the COVID-19 vaccine.  She is
interested in understanding how much risk the COVID vaccine can reduce
for her. However, most large-scale observational studies are conducted
to estimate the average vaccine efficacy over a whole population
\citep{haas2021impact}, and such population-level estimates provide a
summary that cannot reflect individual heterogeneity. %

The causal estimand that captures individual heterogeneity is the
individual treatment effect (ITE), the per-individual difference
between the potential outcomes. However, estimation of the ITE is
fundamentally challenging, even beyond the usual population-level
estimands, because of its inherent uncertainty. To address this
challenge, researchers have recently adapted the method of conformal
inference~\citep{vovk2005algorithmic} to estimate ITE intervals with
good theoretical guarantees
\citep{kivaranovic2020conformal,lei2020conformal}. Conformal inference
helps estimate an interval that contains the true ITE with a
guaranteed minimal probability.

Conformal inference for ITE estimation is an important innovation, but
it comes with assumptions. In particular, it relies on the usual
assumption of unconfoundedness \citep{kivaranovic2020conformal,lei2020conformal} that the treatment assignment is
conditionally independent of the potential outcomes. In practice, this
assumption can be difficult to accept for many observational
studies~\citep{greenland1999confounding}, and violations of
unconfoundedness will introduce hidden biases into the estimation of
the ITE. In the context of COVID-19 vaccine studies, for example,
unmeasured confounding may come from
coexisting conditions, medical resources, and socioeconomic status
\citep{amin2021causation}.

To assess the robustness of individual-level causal conclusions with unconfoundedness, this paper develops a method for sensitivity analysis of
the ITE. %
The idea of sensitivity
analysis is to  quantify the violation of the required
assumptions and then to produce intervals of causal estimates that
account for such violations. In the context of the ITE, a sensitivity
analysis must account for two sources of uncertainty:  the
inherent uncertainty of the estimand itself and the uncertainty due to
violations of the required assumptions.

We develop conformal sensitivity analysis (CSA), a method for
sensitivity analysis of ITE interval estimation. Given a pre-specified
amount of unmeasured confounding, CSA estimates an interval that
captures the true ITE with a guaranteed probability. We develop CSA by
relaxing the assumption of unconfoundedness with a marginal
sensitivity model
(MSM)~\citep{rosenbaumn2002observational,tan2006distributional}, a
general model of the treatment assignment and potential outcomes that
includes a real-valued parameter for the strength of unmeasured
confounding. With the MSM in hand, we then show how sensitivity
analysis can be formulated as a predictive inference of the missing
potential outcomes, but under a general distribution shift. Finally,
we extend weighted conformal prediction~\citep{tib2019conformal}, a
predictive inference method developed in the setting of covariate
shift, to this more general setting of distribution shift.

The  CSA algorithm contains two stages. Given a specification
of an MSM, it first computes the range of weight functions of the
weighted conformal prediction. While covariate shift leads to a single
weight function, distribution shift requires a range. Then it uses
these functions to quantify the bounds of the ITE, found by solving an
optimization problem with constrained weights. %
The resulting  algorithm provides a valid interval estimate of an ITE whenever the true data generation is consistent with the MSM.

CSA has several practical and theoretical strengths. By leveraging
conformal inference, it makes minimal assumptions about the underlying
distribution of the observed data, and its theoretical guarantees are
valid with finite data. By using an MSM, it does not impose
additional untestable assumptions over the distribution of a latent confounder and its effects on other variables. CSA can be used with any
predictive functions to fit the treatment and outcome, and it can be
applied after fitting such functions with a light computational cost.

\subsection{Related Work}

\noindent \textbf{Conformal Inference.~~~} 
 The framework of conformal inference %
 was pioneered by Vladimir Vovk and his collaborators
 \citep{papadopoulos2002inductive,vovk2012conditional,vovk2005algorithmic,vovk2009line}.
Recent developments of conformal inference improve its  accuracy
 \citep{lei2018distribution,romano2019conformalized}, efficiency
 \citep{lei2015conformal},
 and extend its applicable domains
\citep{candes2021conformalized,lei2015conformal}.

First, regarding accuracy, a variety of conformal inference algorithms were proposed to reduce the length of predictive band. Some algorithms rely on the conditional quantile regression of the outcome given the covariates to capture the individual heterogeneity \citep{kivaranovic2020adaptive,romano2019conformalized,sesia2020comparison}, some adapt to skewed data by estimating the conditional histograms \citep{Sesia2021-ei}, and others  estimate the conditional density function to produce nonconvex predictive bands \citep{Hoff2021-hk,izbicki2020flexible}. Second, to improve efficiency, the split conformal inference framework is proposed; it uses data splitting to avoid multiple
 re-fitting of the predictor \citep{lei2014distribution,papadopoulos2008inductive,shafer2008tutorial}. Such data-splitting will also be adopted in this paper. Finally,  regarding domain extensions,  the weighted conformal prediction is proposed to handle non-i.i.d. data \citep{tib2019conformal}, generalizing conformal inference from exchangeable data to
  data with covariates
 shift.

\noindent \textbf{Sensitivity
   analysis.~~~}
 Sensitivity analysis dates back to the study of the average treatment effect (ATE)
 of smoking on lung cancer  \citep{cornfield1959smoking}.  %
More recent advances for sensitivity analysis  posit  a hypothetical latent confounder and evaluate its impact on a causal conclusion \citep{cinelli2020making, ding2016sensitivity, dorie2016flexible,hong2021did,imbens2003sensitivity,rosenbaum1983assessing,veitch2020sense}. Though intuitive, introducing a latent confounder often entails
 additional untestable assumptions
 \citep{franks2019flexible}. As an alternative, some methods
 directly model
 the dependency between treatment assignment and potential outcomes
 given the covariates, such as the MSM in this paper \citep{robins2000sensitivity,tan2006distributional}. With this strategy, some sensitivity models focus on modeling the potential
 outcome given the treatment
 \citep{blackwell2014selection,brumback2004sensitivity}, while others
 focus on modeling the treatment distribution given the potential outcomes
 \citep{franks2019flexible,tan2006distributional,yadlowsky2018bounds,zhao2019sensitivity}.

Other papers consider different frameworks to evaluate the sensitivity of a causal estimate.  Some sensitivity analysis methods measure the association between a latent confounder and the treatment or outcome that produces a specific amount of estimation bias \citep{cinelli2020making,ding2016sensitivity,imbens2003sensitivity,veitch2020sense}. %
Another way to evaluate sensitivity is to compute an interval estimate of the target estimand for a specific level of unmeasured confounding.   For the ATE, the percentile bootstrap 
 produces a partial identified region with asymptotically valid coverage \citep{zhao2019sensitivity}.
 For the conditional average treatment effect (CATE), data-dependent interval estimations have been proposed
 via nonparametric and (semi-)parametric approaches
 \citep{jesson2021quantifying,kallus2019interval,yadlowsky2018bounds}. This work further explores the interval estimation of the ITE under an unmeasured confounding.

In an independent and concurrent paper, \citet{2021jin} also develops sensitivity analysis procedures for the ITE based on robust conformal inference. \citet{2021jin} derives a sensitivity analysis based on the MSM and proposes an extended conformal inference algorithm that is equivalent to \Cref{alg:csa}. 
However, this paper and   \citet{2021jin}  propose the methods of analysis that are complementary and offer different perspectives. The present  paper defines the MSM without explicitly having to posit a latent confounder and uses Tukey’s factorization for an alternative derivation of \Cref{lm:sscore} (same as Lem. 3.1 \citep{2021jin}). We propose and implement an algorithm  to improve the sharpness of the predictive set, provide tools of calibration from observed data, and design methods to evaluate the estimation over different sensitivity models in the MSM.  ~\looseness=-1

\vspace{-2mm}
\section{Conformal Inference of Individual Treatment Effects} \label{sec:setup}
We first set up the problem of  ITE estimation. Next, we formulate the ITE estimation in observational study as a conformal inference problem under distribution shift, and introduce existing estimation methods under the assumption of unconfoundedness. Then we discuss the challenges presented to ITE estimation when there is unmeasured confounding.

\subsection{Problem Setup}
Consider $N$ statistical units. Each unit $i \in \idx{N}$ is associated with a tuple of random variables $(X_i, T_i, Y_i(0), Y_i(1))$. 
$X_i \in \mathcal{X} \subset \bR^p$ is a vector of covariates,
$T_i \in \{0,1\}$ is the treatment,
$Y_i(1), Y_i(0)  \in \mathcal{Y} \subset \bR$ are the potential outcomes under treatment and control \citep{neyman1923applications,rubin1974estimating}.
We use $\mathbb{P}_0(X, T, Y(0), Y(1))$ to denote the true joint distribution of these variables.

We  make the stable unit treatment value assumption (SUTVA) \citep{rubin1980randomization}. Under SUTVA, the observed outcome $Y_i \in \bR$ is one of the potential outcomes $Y_i = T_iY_i(1) + (1-T_i)Y_i(0)$. 

\begin{assumption}[SUTVA] 
There is no interference between units, and there are no unrepresented treatments.
\end{assumption} 
We further assume that each unit has a positive probability of being assigned to all treatment groups and the probability is bounded away from the extremes \citep{rosenbaum1983central}.
\begin{assumption}[Strong overlap]    $\exists~\eta>0$,
$\eta<p(T_i=1\g X_i)<1-\eta$ with probability 1.     \label{assp:overlap}           
\end{assumption} 

The causal estimand of interest  is the ITE. The ITE of unit $i$ is defined as the difference between its potential outcomes, $\tau_i = Y_i(1) - Y_i(0)$.

Estimating the ITE is challenging.
The fundamental problem of causal inference is that we can at most observe one potential outcome of each unit  \citep{holland1986statistics}.
Therefore, the ITE, which requires knowing both the potential outcomes, can never be observed.
Furthermore, unlike population-level causal estimands, an ITE is inherently random.
Even with a known joint distribution $\mathbb{P}_0(X, T, Y(0), Y(1))$,  an ITE is not point-identifiable \citep{hernan2010causal}.

To tackle these challenges,  the problem of ITE estimation has been re-framed as a predictive inference problem  \citep{kivaranovic2020conformal,lei2020conformal}.
\subsection{Predictive Inference in Observational Studies}\label{subsec:predictive}
The idea of predictive inference is to form a covariate-dependent predictive band that contains the outcome of a new data point with a guaranteed probability \citep{barber2021predictive,vovk2005algorithmic}.  
For predicting potential outcomes, predictive inference aims to use observed data $\{X_i, Y_i(t)\}_{i:T_i=t}$ from the treatment group $t$  to learn a mapping from the covariates $X$ to an interval estimate $\hat{C}_t(X) \subset \bR$ of the potential outcome $Y(t)$. 
For a new data point $(X,Y(t)) \sim \mathbb{P}(X,Y(t))$, the band must have a valid coverage probability $\mathbb{P}(Y(t) \in \hat{C}_t(X)) \geq 1-\alpha$, where the probability is taken over both $X$ and $Y(t)$ and $\alpha \in [0,1]$ is a pre-determined level. %

Conformal inference is a collection of methods that realize the goal of predictive inference \citep{vovk2005algorithmic}. 
Classic conformal inference assumes the training data and the target data are \emph{exchangeable}. 
 Based on exchangeability,  the predictive band is constructed by the quantiles of the prediction residuals.
Weighted conformal prediction (WCP) extends the setting to  covariate shift  \citep{tib2019conformal}, where distribution of the covariates $\mathbb{P}(X)$ changes from the training data to  the target data  but the  outcome distribution $\mathbb{P}(Y(t) \g X)$ remains the same. 
 Under covariate shift,  WCP produces a valid predictive interval \citep{tib2019conformal}.

WCP has been applied to the ITE estimation \citep{lei2020conformal}.
Suppose we want to estimate the missing outcome $Y(t)$ of a randomly sampled unit.
The relationship between the observed data and the inference target $(X, Y(t))$ is %
\bac{
&\text{Training:}~~(X_i,Y_i(t)) \stackrel{i.i.d.}{\sim} p(X \g T=t)\cdot p(Y(t) \g X,T=t), \quad i \in \{i: T_i = t\}; \\
&\text{Target:}~~~(X, Y(t)) \sim p(X)\cdot p(Y(t) \g X). 
\label{eq:test-point}
}
For the training data, we observe both the covariates $X_i$ and outcome $Y_i(t)$. For a target data point, we only observe covariates $X$ and the goal is to infer the missing outcome $Y(t)$. 

WCP for the ITE estimation crucially relies on the assumption of \emph{unconfoundedness}  \citep{kivaranovic2020conformal,lei2020conformal},  that the units are assigned to the treatment groups based only on the observed covariates, i.e. $(Y_i(0), Y_i(1)) \indep T_i \g X_i$. Under unconfoundedness,   the conditional distributions of a potential outcome remain invariant across treatment groups, $P(Y(t) \g X, T=t)=P(Y(t) \g X)$. And \Cref{eq:test-point}  becomes
\bac{
&\text{Training:}~~(X_i,Y_i(t)) \stackrel{i.i.d.}{\sim} p(X \g T=t)\cdot p(Y(t) \g X), \quad i \in \{i: T_i = t\}; \\
&\text{Target:}~~~(X, Y(t)) \sim p(X)\cdot p(Y(t) \g X). 
\label{eq:train-test}
}
The only difference between the training and the target distribution is on the covariates distribution.   In other words,  unconfoundedness reduces the setting of counterfactual inference from the general distribution shift in \Cref{eq:test-point} to covariate shift in \Cref{eq:train-test}, for which  WCP is readily applicable.%

\subsection{ ITE Estimation under Unconfoundedness}\label{subsec:ITE}
\label{sec:ite-nuc}

We explain how WCP tackles the predictive inference problem in \Cref{eq:test-point} and point out the challenges when unconfoundedness is violated. 

Denote each training data pair as $Z_i \vcentcolon= (X_i, Y_i(t))$, $Z_{1:n} = (Z_1, \cdots, Z_n)$. Given a predictive function $\hat{\mu}(\cdot): \mathcal{X} \mapsto \mathcal{Y}$, conformal inference uses  a scalar-valued function $V: \mathcal{X}\times \mathcal{Y} \to \bR$ to measure the predictive error.  For instance, $V_i$ might be chosen as the absolute residual function $V(X_i,Y_i(t)) = |Y_i(t) - \hat{\mu}(X_i)| $ with mean prediction $\hat{\mu}(\cdot)$ \citep{vovk2005algorithmic}; we will  discuss how to fit the predictor $\hat{\mu}(\cdot)$ later and take it as a fixed mapping for now. The \emph{nonconformity score}  is  defined for each data point as $V_i \vcentcolon= V(Z_i)$.  %

Denote the  $\alpha$-th quantile of a random variable $X \sim p(X)$ as $Q_\alpha(X)$, where $Q_\alpha(X) = \inf \{x : p(X\leq x) \geq \alpha\}, ~\alpha \in [0,1]$. Let $\delta_v(V)$ be the Dirac delta function, defined as $\delta_v(V)=1$ if $V=v$ and $\delta_v(V)=0$ otherwise.  We denote the quantile of a discrete distribution and the quantile of an empirical distribution as
\bas{
Q_\alpha \left(\sum_{i=1}^n p_i\delta_{v_i}\right) \vcentcolon= Q_\alpha(V), ~ V \sim \sum_{i=1}^n p_i\delta_{v_i}; \quad Q_\alpha(v_{1:n}) \vcentcolon= Q_\alpha \left(\sum_{i=1}^n \frac1n\delta_{v_i}\right). 
}  
We define the \emph{conformal weights} as the density ratio of the training and target distributions,
\ba{
w_t(x,y) \vcentcolon= \frac{p(X=x)p(Y(t)=y \g X=x)}{p(X=x \g T=t)p(Y(t)=y \g X=x,T=t)}.
\label{eq:weight-raw}
}
Then for units $1\leq i \leq n$, let the normalized weights be 
\bas{
p_i^t(Z_{1:n}, (x,y)) \vcentcolon= \frac{w_t(Z_i)}{\sum_{i=1}^{n}w_t(Z_i) + w_t(x,y)}, ~ p^t_{n+1}(Z_{1:n},(x,y)) \vcentcolon= \frac{w_t(x,y)}{\sum_{i=1}^{n}w_t(Z_i) + w_t(x,y)}.%
}
where $p_i^t$ are the weights for the observed data and $p^t_{n+1}$ is the weight for a new data. When the conformal weights in \Cref{eq:weight-raw} are known or computable, we can use the WCP    to derive a predictive interval \citep{tib2019conformal}, 
\ba{
\hat{C}_t(x) = \big\{y \in \bR: V{(x,y)} \leq Q_{1-\alpha}\big(   \sum_{i=1}^n p_i(Z_{1:n},(x,y)) \delta_{V_i} + p_{n+1}(Z_{1:n},(x,y)) \delta_{\infty}    \big)\big\}.
\label{eq:infeasible-interval}
}
The interval $\hat{C}_t(x)$ is guaranteed with a pre-set $1-\alpha$ coverage probability \citep{lei2020conformal,tib2019conformal}, i.e. $\mathbb{P}_{(X,Y(t)) \sim p(X)p(Y(t) \g X)}(Y(t) \in \hat{C}_t(X)) \geq 1-\alpha$ .

Computing the predictive interval $\hat{C}_t(x)$ relies on the conformal weights being  accessible. In an ideal randomized controlled trial (RCT) with perfect compliance, the training and target data in \Cref{eq:test-point} are i.i.d., hence the conformal weights $w_t(x,y)  \equiv 1$. In an observational study under unconfoundedness,  the conformal weights  $w_t(x,y)  = p(X=x)/p(X=x | T=t)$ can be estimated from the observed data \citep{lei2020conformal}. 

 However, when unconfoundedness is violated, the joint distribution of the covariates and  outcome shifts from  training to  target as shown in  \Cref{eq:test-point}. We will see that the conformal weights $w_t(x,y)$ are nonidentifiable under such distribution shift. When  unconfoundedness is violated, existing conformal inference  cannot be directly applied to ITE estimation. ~\looseness=-1

\section{Sensitivity Analysis for ITEs}
\label{sec:csa}
In this section, we  develop an individual-level sensitivity analysis  for estimating a missing  outcome  and  generalize it to a sensitivity analysis for the ITE. 
We first define what it means to deviate from unconfoundedness.  We then show how to incorporate the uncertainty from an unknown confounding into the construction of a valid predictive interval.  ~\looseness=-1

\subsection{Confounding Strength and the Marginal Sensitivity Model}
\label{sec:model}

A sensitivity analysis quantifies the deviation from unconfoundeness and evaluates the corresponding range of causal estimates. 
We quantify the strength of unmeasured confounding  by the marginal sensitivity model (MSM) \citep{tan2006distributional,zhao2019sensitivity}.  
 Under  unconfoundeness,  the \emph{propensity score}, $e(x) := p(T=1  \g  X=x) $  \citep{rosenbaum1983central}, is the same as the  \emph{selection score}, $s_t(x,y) := p(T=1  \g  X=x, Y(t)=y)$ \citep{robins2000sensitivity,scharfstein1999adjusting}.
 Without unconfoundeness, the selection scores no longer equal to the propensity score. Their difference represents the strength of confounding, which can be measured by the odds ratio (OR), 
$
\text{OR}(s_t(x,y),e(x)) \vcentcolon= \left[s_t(x,y)/(1-s_t(x,y))\right] \left[e(x)/(1-e(x))\right].
$
The MSM assumes that under the true data distribution $\mathbb{P}_0$,  the odds ratio between the selection score and the propensity score is bounded by a given range \citep{tan2006distributional,zhao2019sensitivity}.

\begin{definition}[Marginal Sensitivity Model] Under the distribution $\mathbb{P}_0$ over $(X, T, Y(1), Y(0))$, assume the propensity score $e(x)$ and selection score $s_t(x,y)$ satisfy $s_t(x,y) \in \cE(\Gamma) $, where 
\bac{
\cE(\Gamma) = \{s(x,y): &s(x,y):  \mathcal{X} \times  \mathcal{Y} \mapsto [0,1], \\
&1/\Gamma  \leq \text{\normalfont OR}(s(x,y),e(x)) \leq \Gamma, \text{for all}~x\in \mathcal{X}, y\in \mathcal{Y} \},
\label{eq:t-model}
}
and  the sensitivity  parameter $\Gamma \geq 1$.
\end{definition}

 The magnitude of $\Gamma$ is the degree of deviation from unconfoundedness.   The set $\cE(\Gamma)$ expands with an increasing $\Gamma$, representing more possible ways of the treatment assignment that are not explained by the observed covariates. By specifying $\Gamma$, the MSM assumes a rich set of data generating distributions, which avoids imposing parametric assumptions on how an unobserved confounder interacts with the treatment and outcome.

Under the assumption of an MSM, we will develop  CSA in a two-stage approach. In the first stage, we quantify how the uncertainty from unmeasured confounding propagates to the conformal weights. In the second stage, we leverage the uncertainty in the conformal weights to create a valid predictive interval.

\subsection{From Confounding Strength to the Conformal Weights}

\label{sec:to-weights}

We  illustrate that under unmeasured confounding, the predictive interval in \Cref{eq:infeasible-interval} cannot be computed by the observational data, which is the main challenge in applying conformal inference for sensitivity analysis. 

The conformal weights  in \Cref{eq:weight-raw} decompose to two  terms.   The first term $p(X)/p(X \g T=t) = p(T=t)/p(T=t \g X)$ can be inferred from data. 
The second term  is,
\ba{
\frac{p(Y(t) \g X)}{p(Y(t) \g X,T=t)} = p(T=t \g X) + \frac{p(Y(t) \g X,T=1-t)}{p(Y(t) \g X,T=t)} p(T=1-t \g X). 
\label{eq:ratio}
}
Without  unconfoundedness, the density ratio on the right hand side of  \Cref{eq:ratio} involves the nonidentifiable distribution $p(Y(t) \g X,T=1-t)$ of the missing  potential outcome.

To deal with this challenge, we transfer the uncertainty from the unknown confounding to the uncertainty of the conformal weights. The nonidentifiable density ratio  term in the conformal weights is related to the odds ratio in  the MSM. Applying Bayes rule, 
\ba{
p(T=1 \g X,Y(t)) = & \frac{p(Y(t)  \g  X, T=1)p(T=1 \g X)}{p(Y(t) \g X)}  \notag \\
= &1/ \Big(1+ \frac{1-e(X)}{e( X)}\frac{p(Y(t) \g X,T=0)}{p(Y(t) \g X, T=1)}\Big).
\label{eq:sscore}
}
\Cref{eq:sscore} is also known as  Tukey's factorization \citep{brook1964distinction, franks2019flexible,franks2016non}.

Based on  Tukey's factorization, the following lemma shows that the conformal weight is proportional to the inverse selection score and the density ratio of a potential outcome in the two treatment groups is bounded by the sensitivity parameter.

\begin{lemma}
(i) For the conformal weights in \Cref{eq:weight-raw}, we have $w_t(x,y) = p(T=t)/p(T=t  \g  X=x, Y(t)=y)$.  (ii) The MSM with parameter $\Gamma$  equivalently  assumes
\ba{ 
\frac1\Gamma \leq \frac{p(Y(t)=y | X=x, T=1)}{p(Y(t)=y | X=x, T=0)} =  \text{\normalfont OR}(s(x,y),e(x)) \leq \Gamma 
\label{eq:y-model}
}
\label{lm:sscore} \vspace{-12mm}
\end{lemma}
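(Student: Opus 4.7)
The plan is to prove both parts by direct application of Bayes's rule, with Tukey's factorization (already displayed in the excerpt as \Cref{eq:sscore}) doing most of the conceptual work for part (ii).

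For part (i), I would start from the definition
\[
w_t(x,y) = \frac{p(X=x)\,p(Y(t)=y\mid X=x)}{p(X=x\mid T=t)\,p(Y(t)=y\mid X=x, T=t)}
\]
and rewrite each factor in the denominator by Bayes's rule: $p(X=x\mid T=t)=p(T=t\mid X=x)\,p(X=x)/p(T=t)$ and $p(Y(t)=y\mid X=x,T=t)=p(T=t\mid X=x,Y(t)=y)\,p(Y(t)=y\mid X=x)/p(T=t\mid X=x)$. Multiplying these two, the factor $p(T=t\mid X=x)$ cancels, and the remaining product leaves exactly the numerator $p(X=x)\,p(Y(t)=y\mid X=x)$ divided by $p(T=t\mid X=x,Y(t)=y)/p(T=t)$. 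Rearranging yields $w_t(x,y)=p(T=t)/p(T=t\mid X=x,Y(t)=y)$, which is the claim.

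For part (ii), the key observation is that the odds ratio appearing in the MSM can be expressed directly in terms of the conditional densities of $Y(t)$ across treatment groups. Applying Bayes's rule to the selection score gives
\[
s_t(x,y)=\frac{p(Y(t)=y\mid X=x,T=1)\,e(x)}{p(Y(t)=y\mid X=x)},
\]
and analogously $1-s_t(x,y)=p(Y(t)=y\mid X=x,T=0)\,(1-e(x))/p(Y(t)=y\mid X=x)$. Taking the ratio, the shared denominator cancels and one obtains $s_t(x,y)/(1-s_t(x,y)) = \bigl[p(Y(t)=y\mid X=x,T=1)/p(Y(t)=y\mid X=x,T=0)\bigr]\cdot e(x)/(1-e(x))$. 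Dividing through by the odds of $e(x)$ shows that $\text{OR}(s_t(x,y),e(x))$ is exactly the density ratio on the right-hand side of \Cref{eq:y-model}, so the MSM's bound $1/\Gamma\le \text{OR}\le \Gamma$ is equivalent to the stated bound on the density ratio.

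I do not expect a serious obstacle: both parts are essentially bookkeeping with Bayes's rule, and Tukey's factorization is already given. The only small thing to be careful about is the direction of the equality in part (i) — one must track which factors cancel and not confuse $p(T=t\mid X)$ with $p(T=t\mid X,Y(t))$ — but once the two Bayes expansions are written down the cancellation is immediate. Assumption \ref{assp:overlap} (strong overlap) ensures all conditional densities involved are well defined, so no further measure-theoretic caveats are needed.
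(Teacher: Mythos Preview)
Your proposal is correct. For part~(ii) your argument is essentially identical to the paper's: both invoke Bayes's rule (Tukey's factorization) to identify $\text{OR}(s_t,e)$ with the density ratio $p(Y(t)\mid X,T=1)/p(Y(t)\mid X,T=0)$.

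For part~(i), however, your route is genuinely more direct than the paper's. The paper first expands $w_t$ via the decomposition in \Cref{eq:ratio}, writing $p(Y(t)\mid X)/p(Y(t)\mid X,T=t)$ as $p(T=t\mid X)+p(T=1-t\mid X)\cdot[\text{density ratio}]$; it then uses Tukey's factorization to replace the density ratio by $\bigl(\tfrac{1-s_t}{s_t}\tfrac{e}{1-e}\bigr)^{2t-1}$, and finally simplifies $1+\bigl(\tfrac{1-s_t}{s_t}\bigr)^{2t-1}$ to $1/[s_t^t(1-s_t)^{1-t}]$. Your two applications of Bayes's rule with the immediate cancellation of $p(T=t\mid X)$ bypass all of this algebra. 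The paper's longer path has the side benefit of producing the intermediate identity \Cref{eq:lm-aux2} (relating the density ratio to the odds ratio for both $t=0,1$ in a single formula), which is reused later in the proof of \Cref{lm:range}; your approach is cleaner for the lemma itself but does not generate that byproduct.
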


The sensitivity parameter of the MSM specifies the range of plausible conformal weights.

\begin{lemma}
\label{thm:weight}
Given an MSM with sensitivity parameter $1 \leq \Gamma < \infty$, the weight function for the weighted conformal prediction in \Cref{eq:weight-raw} is bounded by 
\ba{
\Big(1+\frac{1}{\Gamma }\big(\frac{1-e(x)}{e(x)}\big)^{2t-1}\Big)p(T=t) \leq  w_t(x,y)  \leq \Big(1+\Gamma\big(\frac{(1-e(x))}{ e(x)}\big)^{2t-1}\Big)p(T=t) 
\label{eq:range}
}
\label{lm:range}
\vspace{-12mm}
\end{lemma}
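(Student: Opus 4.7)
The plan is to derive the two-sided bound directly by combining both parts of \Cref{lm:sscore} with Tukey's factorization from \Cref{eq:sscore}. Starting from \Cref{lm:sscore}(i), I would write $w_t(x,y) = p(T=t)/p(T=t\g X=x, Y(t)=y)$, so all that is needed is to sandwich the selection score $p(T=t\g X=x, Y(t)=y)$ between two quantities determined by $e(x)$ and $\Gamma$. Tukey's factorization expresses this selection score as a simple function of the propensity $e(x)$ and the ratio of the within-arm conditional outcome densities, and \Cref{lm:sscore}(ii) already bounds exactly that ratio by $[1/\Gamma,\Gamma]$; the bound in \Cref{eq:range} then falls out by monotonicity.

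Concretely, first I would handle the case $t=1$. Plugging \Cref{eq:sscore} into $w_1(x,y) = p(T=1)/p(T=1\g X, Y(1))$ gives $w_1(x,y) = p(T=1)\bigl(1 + \tfrac{1-e(x)}{e(x)}\cdot r(x,y)\bigr)$, where $r(x,y) = p(Y(1)=y\g X=x, T=0)/p(Y(1)=y\g X=x, T=1)$. By \Cref{lm:sscore}(ii), $1/\Gamma \le 1/r(x,y) \le \Gamma$, hence $1/\Gamma \le r(x,y)\le \Gamma$ as well, and monotonicity of $w_1$ in $r$ (with $(1-e(x))/e(x)>0$ by the strong overlap assumption) gives exactly the claimed bound with exponent $(2t-1)=+1$.

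Second, for $t=0$, I would derive the analogue of \Cref{eq:sscore} by the same Bayes-rule computation, but swapping the roles of the two arms, obtaining $p(T=0\g X=x, Y(0)=y) = 1/\bigl(1 + \tfrac{e(x)}{1-e(x)}\cdot p(Y(0)=y\g X, T=1)/p(Y(0)=y\g X, T=0)\bigr)$. Substituting into the definition of $w_0$ and again applying \Cref{lm:sscore}(ii) to the inner density ratio yields the bound involving $e(x)/(1-e(x))$, which matches \Cref{eq:range} with $(2t-1)=-1$ since $\bigl((1-e(x))/e(x)\bigr)^{-1} = e(x)/(1-e(x))$. The two cases combine into the single unified formula.

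There is no real technical obstacle here; the argument is essentially a direct substitution. The only points worth being explicit about in the write-up are: (a) the reciprocation step that turns the bound on $p(Y(t)\g X,T=1)/p(Y(t)\g X,T=0)$ into the same-form bound on its inverse, (b) the use of \Cref{assp:overlap} to ensure $e(x)/(1-e(x))$ is strictly positive and finite so monotonicity is unambiguous, and (c) the encoding of both arms by the single exponent $(2t-1)\in\{-1,+1\}$.
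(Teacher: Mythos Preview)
Your proposal is correct and follows essentially the same route as the paper: both arguments invoke \Cref{lm:sscore}(i) to write $w_t(x,y)=p(T=t)/p(T=t\mid X,Y(t))$, apply Tukey's factorization (\Cref{eq:sscore}) to express the inverse selection score as $1+\bigl(\tfrac{1-e(x)}{e(x)}\bigr)^{2t-1}$ times the conditional density ratio, and then use the bound from \Cref{lm:sscore}(ii) on that ratio together with monotonicity. The only cosmetic difference is that the paper treats $t=1$ and $t=0$ by separately writing out $s_1(x,y)^{-1}$ and $(1-s_0(x,y))^{-1}$ before invoking \Cref{lm:sscore}, whereas you organize the two cases around the unified exponent $(2t-1)$ from the outset; your additional remarks on the reciprocation step and the role of strong overlap are valid and make the write-up slightly more explicit, but the substance is identical.
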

\noindent Note that the bounds in \Cref{eq:range} are uniform for all $y$. When $\Gamma=1$, the upper and lower bounds of the conformal weights are the same. When $\Gamma>1$, the conformal weights can not be point identified.  The range in \Cref{eq:range} represents the weight uncertainty.

\subsection{From Conformal Weights to the Predictive Band}
\label{sec:to-band}

We first define a valid predictive band in sensitivity analysis. Then  we demonstrate the validity of a predictive interval given a specific sensitivity model. The union of such intervals becomes a valid predictive band for the MSM. Finally, as a practical way to obtain the union set, we propose and solve a constrained quantile optimization problem.

\noindent\textbf{Valid Predictive Bands under Sensitivity Models.~~}   %
By \Cref{eq:y-model}, each selection score $s_t(x,y)$ specifies a missing outcome distribution
\ba{
p^{(s_t)}(Y(t)  \g  X, T=1-t)=  \text{OR}(s_t(X,Y(t)), e(X))^{1-2t} \cdot p(Y(t) \g X, T=t).
\label{eq:counterfactual}
}
For a selection score $s_t$ in the collection of sensitivity models $\cE(\Gamma)$, the target data in \Cref{eq:test-point}  is generated by a distribution depending on $s_t$, i.e.,
\ba{
\resizebox{.92 \textwidth}{!} 
{
    $ p^{(s_t)}(Y(t) \g X) = p(T=t \g X) p(Y(t) \g X,T=t) + p(T=1-t \g X)p^{(s_t)}(Y(t)  \g  X, T=1-t).  $
}
\label{eq:outcome-s}
}

With the notation above, the validity of the predictive interval is defined as a worst case guarantee under \emph{all} plausible sensitivity models in  $\cE(\Gamma)$.

\begin{definition}
Under a set of sensitivity models $\cE(\Gamma)$, 
the predictive band for the potential outcome $Y(t)$ with  $(1-\alpha)$ marginal coverage is a band that satisfies
\begin{equation}
\mathbb{P}_{X,Y(t) \sim p(X)p^{(s_t)}(Y(t) \g X)}(Y(t) \in \hat{C}_t(X)) \geq 1-\alpha, \label{eq:interval}
\end{equation}
for any data generating distribution $\mathbb{P}_0$ with the corresponding selection score $s_t \in \cE(\Gamma)$.
\label{def:interval}
\end{definition}
The goal is  to construct  a predictive interval that satisfies \Cref{def:interval} with $\cE(\Gamma)$ defined by the MSM. Our first step is to create a  valid predictive interval under a sensitivity model.

\noindent\textbf{Coverage Guarantees for a Fixed Sensitivity Model.~~}   Given a fixed $s_t \in \cE(\Gamma)$, plugging \Cref{eq:outcome-s} to \Cref{eq:weight-raw},  the conformal weight $w_t^{(s_t)}(x,y)$ becomes a function of $s_t$. Let $w_i^{(s_t)} = w_t^{(s_t)}(Z_i) $ be the conformal weight for $Z_i$,
the predictive band  in \Cref{eq:infeasible-interval}  becomes
\ba{
\hat{C}^{(s_t)}(x) = \big\{y \in \bR: V{(x,y)} \leq Q_{1-\alpha}\big(   \sum_{i=1}^n p_i^{(s_t)} \delta_{V_i} + p_{n+1}^{(s_t)} \delta_{\infty}    \big)\big\},
\label{eq:h-interval}
}
where $\{p_i^{(s_t)}\}_{i=1}^{n+1}$ normalizes $\{w_i^{(s_t)}\}_{i=1}^{n+1}$ and $n$ is the number of data points.  %

In the following theorem, we show that $\hat{C}^{(s_t)}(x) $ has a valid coverage given  a specific sensitivity model $s_t$, when the propensity score is either known or estimable.

\begin{lemma}
Under SUTVA and strong overlap, for a selection score $s_t \in \cE(\Gamma)$, we have 
\begin{itemize} 
\item[{(1)}] With a known propensity score $e(X)$, the predictive band in \Cref{eq:h-interval} has  coverage
\ba{
1-\alpha \leq \mathbb{P}_{X,Y(t) \sim p(X)p^{(s_t)}(Y(t) \g X)}(Y(t) \in \hat{C}^{(s_t)}_t(X) ) \leq 1-\alpha +  \frac{\Gamma/\eta}{n + \Gamma/\eta} 
\label{eq:true-e}
}
\item[{(2)}] With an estimated propensity score $\hat{e}(X)$,  if $\eta <\hat{e}(X_i)<1-\eta$ almost surely for a constant $\eta \in(0,0.5)$, the predictive band $\hat{C}^{(s_t)}(x) $ in \Cref{eq:h-interval} has a coverage probability 
\ba{ \label{eq:hat-e}
&1-\alpha - \Delta \leq \mathbb{P}(Y(t) \in \hat{C}^{(s_t)}(X)) \leq 1-\alpha +  \frac{\Gamma/\eta}{n + \Gamma/\eta} + \Delta, \\
&\Delta = \frac{\Gamma}{2} p(T=t)\E_{x\sim p(X \g T=t)} \big| \frac{1}{\hat{e}(x)^t(1-\hat{e}(x))^{1-t}} - \frac{1}{e(x)^t(1-e(x))^{1-t}} \big|. \notag
}

\end{itemize} 
\label{thm:condition}
\end{lemma}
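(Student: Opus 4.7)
The plan is to treat a fixed sensitivity model $s_t \in \cE(\Gamma)$ as inducing a \emph{known} distribution shift and then to invoke the weighted conformal prediction (WCP) machinery of \citet{tib2019conformal}. By \Cref{eq:outcome-s}, once $s_t$ is fixed, the target distribution $p(X)p^{(s_t)}(Y(t)\g X)$ is fully specified; and by \Cref{lm:sscore}(i), the conformal weight simplifies to $w_t^{(s_t)}(x,y) = p(T=t)/p(T=t\g X=x, Y(t)=y)$, which depends only on the given $s_t$ together with the propensity score $e(x)$. Hence the training sample $\{(X_i, Y_i(t))\}_{T_i=t}$ together with an independent target point from $p(X)p^{(s_t)}(Y(t)\g X)$ is weighted-exchangeable with a computable weight, and WCP applies verbatim.

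For the lower bound in part (1), I would cite the lower-coverage half of Theorem 2 of \citet{tib2019conformal} applied to the weights $w_t^{(s_t)}$: it immediately yields $\mathbb{P}(Y(t) \in \hat{C}^{(s_t)}(X)) \geq 1-\alpha$. The matching upper bound comes from a standard inflation argument for the weighted quantile: the inclusion of the atom $\delta_\infty$ in \Cref{eq:h-interval} can raise the coverage relative to the ideal weighted quantile by at most the maximum normalized weight $p_{n+1}^{(s_t)}$. By \Cref{lm:range}, under strong overlap ($\eta < e(x) < 1-\eta$) the ratio of the upper to the lower envelope of $w_t^{(s_t)}$ is bounded by $\Gamma/\eta$, which gives $p_{n+1}^{(s_t)} \leq (\Gamma/\eta)/(n + \Gamma/\eta)$ and hence the stated upper bound.

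For part (2), I would compare the coverage under the true weights $w_t^{(s_t)}$ (computed from $e$) with that under the plug-in weights $\hat{w}_t^{(s_t)}$ (computed from $\hat{e}$). The two weight functions induce two joint laws on $(Z_{1:n},(X,Y(t)))$ that agree on the sampling mechanism and differ only through the likelihood-ratio factor; any event, and in particular the coverage event, therefore has probabilities under the two laws differing by at most their total variation distance. Writing $w_t^{(s_t)}(x,y) = a_{s_t}(x,y)\cdot b_e(x)$, where $a_{s_t}$ aggregates the sensitivity-model piece and $b_e(x) = 1/[e(x)^{\,t}(1-e(x))^{1-t}]\cdot p(T=t)$ isolates the propensity, one has $\|a_{s_t}\|_\infty \leq \Gamma$ by \Cref{lm:range}, and the TV distance reduces to an $L^1$ discrepancy in $b_e$ versus $b_{\hat{e}}$ under the training law $p(X\g T=t)$. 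Matching constants yields the claimed $\Delta = (\Gamma/2)\, p(T=t)\, \mathbb{E}_{p(X\g T=t)}\!\bigl|1/(\hat{e}^{\,t}(1-\hat{e})^{1-t}) - 1/(e^{\,t}(1-e)^{1-t})\bigr|$, giving the two-sided bound by combining with part (1).

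The main obstacle will be the bookkeeping in the plug-in argument: the propensity estimation error enters both the numerator weight of the test point and every normalizer in the quantile defining \Cref{eq:h-interval}, so it is not immediate that the error aggregates into a single $L^1$-type term. The factorization $w_t^{(s_t)} = a_{s_t}\cdot b_e$ is the right device here, because it lets me pass the $\Gamma$ bound from the MSM outside the $L^1$ norm and reduce the comparison to the propensity-only factor, which is exactly the form displayed in $\Delta$.
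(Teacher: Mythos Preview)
Your plan is correct and mirrors the paper's proof: weighted exchangeability from \citet{tib2019conformal} gives the two-sided bound $[1-\alpha,\,1-\alpha+\max_i \hat{p}_i]$, strong overlap together with the MSM yields $\max_i \hat{p}_i \le (\Gamma/\eta)/(n+\Gamma/\eta)$, and a total-variation comparison between the target induced by the true weights and the one induced by the plug-in weights delivers the $\Delta$ slack. One small correction on the bookkeeping you flagged: the weight does not factor as $a_{s_t}(x,y)\cdot b_e(x)$ with $a_{s_t}$ free of $e$ (there is an additive $1$ in $w_t = p(T{=}t)\bigl(1+\exp(\mp h)\,[(1-e)/e]^{2t-1}\bigr)$), but the \emph{difference} does, $|w_t-\hat w_t| = p(T{=}t)\,\exp(\mp h(x,y))\,\bigl|e^{-t}(1-e)^{-(1-t)}-\hat e^{-t}(1-\hat e)^{-(1-t)}\bigr|$ with $\exp(\mp h)\le\Gamma$, which is exactly the computation the paper carries out.
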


With a known propensity score, \Cref{eq:true-e} demonstrates that the coverage of the predictive band $\hat{C}^{(s_t)}(x) $ is valid and is close to the nominal level.  The closeness depends on the overlapping and confounding strength.  When the estimated propensity score $\hat{e}(X) $ differs from the true propensity score $e(X)$, as shown in  \Cref{eq:hat-e}, the coverage probability might have an extra slack quantity $\Delta$. The reason for $\hat{e}(X) \neq e(X)$ could be the estimation error from the finite sample or  the inference error from a misspecified treatment model. %

\noindent\textbf{Union Method.~~}  Based on the predictive band $\hat{C}^{(s_t)}(x)$, we propose a union method that achieves the valid coverage under the MSM. That is, we now consider the worst case coverage for all sensitivity models $s_t \in \cE(\Gamma)$. 

\begin{proposition}
Suppose the predictive interval $\hat{C}^{(s_t)}(X) = [L^{(s_t)}(X),U^{(s_t)}(X)]$ satisfies 
\ba{
\mathbb{P}_{X,Y(t) \sim p(X)p^{(s_t)}(Y(t) \g X)}(Y(t) \in \hat{C}^{(s_t)}_t(X) ) \geq 1-\alpha \label{eq:condition}
}
for each $s_t \in \mathcal{E}(\Gamma)$. Then let $L = \inf_{s_t \in \mathcal{E}(\Gamma)} L^{(s_t)}$ and $U = \sup_{s_t \in \mathcal{E}(\Gamma)} U^{(s_t)}$, the interval $ \hat{C}_t^\Gamma(X) = [L, U] = \cup_{s_t \in \mathcal{E}(\Gamma)}[L^{(s_t)},U^{(s_t)}]$ is a predictive interval for $Y(t)$ with at least $(1-\alpha)$ coverage under the sensitivity models $ \mathcal{E}(\Gamma)$.
\label{prop:union}
\end{proposition}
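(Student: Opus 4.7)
The plan is to reduce the worst-case (union) statement to the pointwise coverage statement already assumed in \Cref{eq:condition}, using monotonicity of probability under set inclusion. By \Cref{def:interval}, we must verify that for \emph{any} data generating distribution $\mathbb{P}_0$ whose selection score $s_t^\star$ lies in $\mathcal{E}(\Gamma)$, the random interval $\hat{C}_t^\Gamma(X)$ covers $Y(t)$ with probability at least $1-\alpha$ when $(X,Y(t)) \sim p(X)\,p^{(s_t^\star)}(Y(t)\mid X)$. So I would begin the proof by fixing an arbitrary admissible $\mathbb{P}_0$ and extracting the particular selection score $s_t^\star \in \mathcal{E}(\Gamma)$ that it induces.

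Next, I would invoke the hypothesis of the proposition at this specific $s_t^\star$: the interval $\hat{C}^{(s_t^\star)}(X) = [L^{(s_t^\star)}(X), U^{(s_t^\star)}(X)]$ satisfies
\begin{equation*}
\mathbb{P}_{X,Y(t)\sim p(X) p^{(s_t^\star)}(Y(t)\mid X)}\!\left(Y(t) \in \hat{C}^{(s_t^\star)}(X)\right) \geq 1-\alpha.
\end{equation*}
The third step is the key containment. By the definition of $L$ and $U$ as an infimum and supremum over $\mathcal{E}(\Gamma)$, for every $X$,
\begin{equation*}
L(X) = \inf_{s_t \in \mathcal{E}(\Gamma)} L^{(s_t)}(X) \leq L^{(s_t^\star)}(X), \qquad U^{(s_t^\star)}(X) \leq \sup_{s_t \in \mathcal{E}(\Gamma)} U^{(s_t)}(X) = U(X),
\end{equation*}
so $\hat{C}^{(s_t^\star)}(X) \subseteq [L(X), U(X)] = \hat{C}_t^\Gamma(X)$ pointwise in $X$ (almost surely). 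The equivalent union representation $\hat{C}_t^\Gamma(X) = \bigcup_{s_t \in \mathcal{E}(\Gamma)} [L^{(s_t)}(X), U^{(s_t)}(X)]$ follows because the $[L^{(s_t)}, U^{(s_t)}]$ are intervals on the real line, so their union is contained in the interval spanned by their extremes, and conversely every point of $[L,U]$ is the limit of endpoints from the family (a short sub-argument using that the infimum and supremum are attained in the closure suffices, and the containment $\hat{C}^{(s_t^\star)}\subseteq \hat C_t^\Gamma$ is all that the coverage claim really needs).

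Finally, monotonicity of probability yields
\begin{equation*}
\mathbb{P}\!\left(Y(t) \in \hat{C}_t^\Gamma(X)\right) \geq \mathbb{P}\!\left(Y(t) \in \hat{C}^{(s_t^\star)}(X)\right) \geq 1-\alpha,
\end{equation*}
and since $\mathbb{P}_0$ was arbitrary in the class specified by \Cref{def:interval}, the coverage bound holds uniformly over $\mathcal{E}(\Gamma)$. The argument is essentially bookkeeping; the only mild subtlety, and the part I would take most care with, is being explicit that the \emph{true} selection score $s_t^\star$ that governs the target distribution $p^{(s_t^\star)}(Y(t)\mid X)$ and the $s_t$ used to define $\hat{C}^{(s_t)}$ are the same element of $\mathcal{E}(\Gamma)$, so that the pointwise hypothesis \Cref{eq:condition} can be applied to that exact member of the family before taking the union. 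No obstacle beyond this alignment is anticipated.
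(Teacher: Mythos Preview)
Your argument is correct and matches the paper's proof essentially verbatim: fix the true $s_t^\star\in\mathcal{E}(\Gamma)$, use $[L^{(s_t^\star)},U^{(s_t^\star)}]\subseteq [L,U]$, and apply monotonicity together with the hypothesis \Cref{eq:condition}. The paper does not discuss the union-equals-interval identity you flag (and as you note, only the containment is needed for coverage), so you may safely drop that sub-argument.
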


\Cref{prop:union} states that to obtain a valid predictive interval for the MSM, we can first compute the predictive interval under  a specific sensitivity model as in \Cref{eq:h-interval}, then take the union set by finding the extreme end points of such intervals over all the sensitivity models. By \Cref{eq:h-interval}, finding the extreme end points is equivalent to solving
\ba{
\max_{s_t \in \mathcal{E}(\Gamma)}  ~Q_{1-\alpha}\Big(   \sum_{i=1}^n p_i^{(s_t)}\big(Z_{1:n}, (X,y)\big)\delta_{V_i} + p_{n+1}^{(s_t)}\big(Z_{1:n}, (X,y)\big) \delta_{\infty}  \Big).
\label{eq:optimize-h}
}
However, in practice, it is difficult to directly search over the sensitivity models in $\mathcal{E}(\Gamma)$, because the elements of  $\mathcal{E}(\Gamma)$ are not defined parametrically.

\noindent\textbf{Quantile Optimization with Linear Constraints.~~}  To operationalize \Cref{eq:optimize-h},
we can search over the conformal weights instead of the sensitivity models. As \Cref{eq:optimize-h} shows,  a sensitivity model influences the predictive band only through the conformal weights. 

In \Cref{sec:to-weights}, we find the range of conformal weights under an MSM. Denote the upper and lower bounds of the conformal weights in \Cref{lm:range} as
\ba{
\resizebox{0.92\hsize}{!}{%
$w_{lo}^{\Gamma}(x) \vcentcolon = \Big(1+\frac{1}{\Gamma }\big(\frac{1-e(x)}{e(x)}\big)^{2t-1}\Big)p(T=t), ~~w_{hi}^{\Gamma}(x) \vcentcolon=\Big(1+\Gamma\big(\frac{(1-e(x))}{ e(x)}\big)^{2t-1}\Big)p(T=t).$
}
\label{eq:weightbound}
}
 Then the optimization in \Cref{eq:optimize-h} simplifies to a constrained optimization problem,
\bac{
\max_{w_{1:n+1}}  \quad &Q_{1-\alpha}\big(   \sum_{i=1}^n p_i \delta_{V_i} + p_{n+1}\delta_{\infty}  \big). \\
  \text{subject to} \quad& p_i = \frac{w_i}{\sum_{i=1}^{n+1} w_i}, ~~ 1 \leq i \leq n+1 \\
 w_{lo}^{\Gamma}(X_i) \leq w_i &\leq w_{hi}^{\Gamma}(X_i), ~1 \leq i \leq n, ~~w_{lo}^{\Gamma}(X) \leq w_{n+1} \leq w_{hi}^{\Gamma}(X),
\label{eq:qt-opt}
}
where the conformal weights $\wv = (w_1, \cdots, w_n, w_{n+1})$  are the optimizing variables. For notational convenience, we suppress the superscript $t$ and denote $w_i = w(X_i, Y_i)$ for $1\leq i \leq n$,  and $w_{n+1} = w(X, y)$. %

An efficient algorithm to solve \Cref{eq:qt-opt} can be designed  by characterizing its optima.

{\LinesNumberedHidden
\begin{algorithm}[!t]
  \caption{CSA for Estimating an Unobserved Potential Outcome} %
  \label{alg:csa}
\textbf{Input: } Data $\Z=(X_{i}, Y_{i}, T_{i})_{i=1}^{N}$, where $Y_{i}$ is missing if $T_{i}=1-t$; level $\alpha$, sensitivity parameter $\Gamma$, target point covariates $X$. \\
  \vspace{0.2cm}
  \textbf{Step I: Preliminary processing}
  \begin{algorithmic}[1]
    \State Split the data into two folds $\Z_{\text{pre}}$ and $\Z_{\text{cal}}$; let  $\I_{\text{pre}} = \{i: Z_i \in \Z_{\text{pre}}, T_i = t\}$,  $\I_{\text{cal}} = \{i: Z_i \in \Z_{\text{cal}}, T_i = t\}$ %
    \State Estimate propensity score $\hat{e}(x)$ on $\Z_{\text{pre}}$
    \State Estimate predictor $\hat{\mu}(\cdot)$ on $\{X_i, Y_i\}_{i\in \I_{\text{pre}}}$
     \end{algorithmic}

  \vspace{0.2cm}
  \textbf{Step II: Predictive interval for $Y_i(t)$ at the target point}
   \begin{algorithmic}[1]
  \State Compute nonconformity scores $\mathcal{V} = \{V_i\}_{i \in \tilde{\I}_{\text{cal}} }$, $\tilde{\I}_{\text{cal}} = \I_{\text{cal}} \cup \{N+1\}$, $V_{N+1}=\infty$ 
 
  \State Compute the bounds $(w_{lo}^{\Gamma}(X_i), w_{hi}^{\Gamma}(X_i))$ for $i \in \tilde{\I}_{\text{cal}} $ by \Cref{eq:weightbound}, $X_{N+1}=X$;   \State For $i \in \tilde{\I}_{\text{cal}} $, initialize the weights $w_i = w_{lo}^{\Gamma}(X_i)$ 
  \State Sort $\mathcal{V}$ in ascending order and re-label the ordered elements  from 1 to $|\mathcal{V}|$
 \State Re-label $\{w_i\}_{i \in \tilde{\I}_{\text{cal}}}$, $\{X_i\}_{i \in \tilde{\I}_{\text{cal}}}$ according to the labels of the sorted $\mathcal{V}$; set $k = |\mathcal{V}|$
\Do
     \NoNumber{$w_{k} \leftarrow w_{hi}^{\Gamma}(X_{k})$
    \State Compute normalized weights $p_i = \frac{w_i}{\sum_{j=1}^{|\mathcal{V}|}w_j}$, for $i \in \idx{|\mathcal{V}|}$
    \State$k \leftarrow k-1$
  \doWhile{$\sum_{i = k+1}^{|\mathcal{V}|} p_i < \alpha$}
  }
   \end{algorithmic}
  \textbf{Output:} Compute  $\hat{C}_t^{\Gamma}(X)$ by \Cref{eq:csa-interval} with $\hat{Q}(Z_{1:n},X) = V_{k+1}$  %
\end{algorithm}
}

\begin{proposition}
Without loss of generality, suppose $X_{1:n}$ are labeled such that the nonconformity scores are ordered, $V_1 \leq V_2 \leq \cdots \leq V_n < V_{n+1}= \infty$,
and  let %
\bas{
\hat{k} = \max\big\{k \in [n+1] : &\text{for}~k \leq j \leq n,~ w_j = w_{hi}^{\Gamma}(X_j), w_{n+1} =  w_{hi}(X), ~ \sum_{j=k}^{n+1}p_j \geq \alpha;  \\
&\text{for}~ j < k,  w_j = w_{lo}^{\Gamma}(X_j) \big\}
}
Then the optima of \Cref{eq:qt-opt} is $\hat{\wv} = (w_{lo}^{\Gamma}(X_1), \cdots, w_{lo}^{\Gamma}(X_{\hat{k}-1}), w_{hi}^{\Gamma}(X_{\hat{k}}), \cdots, w_{hi}^{\Gamma}(X_{n}), w_{hi}^{\Gamma}(X))$. Furthermore, the optimal objective value $ \hat{Q}(Z_{1:n}, X; \Gamma, \alpha, t) = V_{\hat{k}}$. 
\label{prop:qt-opt}
\end{proposition}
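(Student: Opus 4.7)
The plan is to recast the objective as a tail-probability maximization, solve the inner problem coordinate-wise by monotonicity (which puts the optimum at a corner of the weight box), and then pick out $\hat{k}$ as the largest tail index whose optimal tail probability still reaches $\alpha$. Concretely, with $V_1 \le \cdots \le V_n < V_{n+1} = \infty$ sorted, the $(1-\alpha)$-quantile of $\sum_j p_j\delta_{V_j}$ equals $V_k$ precisely when the tail sum $T_k(\wv) := \sum_{j \ge k} p_j$ satisfies $T_k \ge \alpha$ and $T_{k+1} < \alpha$; so maximizing $Q_{1-\alpha}$ is equivalent to maximizing $k^\ast(\wv) := \max\{k : T_k(\wv) \ge \alpha\}$ over the feasible box, and the optimum value is $V_{k^\ast}$.

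Next, I would fix a target index $k$ and write $T_k(\wv) = A/(A+B)$ with $A = \sum_{j \ge k} w_j$ and $B = \sum_{j < k} w_j$. A direct computation gives $\partial T_k/\partial w_i = -T_k/(A+B) \le 0$ for $i < k$ and $\partial T_k/\partial w_i = (1-T_k)/(A+B) \ge 0$ for $i \ge k$. Crucially, these signs are the same everywhere in the box $\prod_i [w_{lo}^{\Gamma}(X_i), w_{hi}^{\Gamma}(X_i)]$, so the global maximum of $T_k$ over the box is attained at the extreme corner with $w_j = w_{lo}^{\Gamma}(X_j)$ for $j < k$ and $w_j = w_{hi}^{\Gamma}(X_j)$ for $j \ge k$. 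Call this maximum $f(k)$; a short algebraic check (moving $w_k$ from $w_{hi}^{\Gamma}(X_k)$ down to $w_{lo}^{\Gamma}(X_k)$ strictly inflates the denominator while shrinking the numerator) shows $f$ is non-increasing in $k$, so $\hat{k} = \max\{k : f(k) \ge \alpha\}$ is well-defined (and nonempty because $f(1) = 1$), matching the $\hat{k}$ in the proposition.

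The two bounds then meet. For any feasible $\wv$, $T_k(\wv) \le f(k)$ and hence $k^\ast(\wv) \le \hat{k}$, so the objective is at most $V_{\hat{k}}$. At the threshold configuration $\hat{\wv}$ of the proposition, $T_{\hat{k}}(\hat{\wv}) = f(\hat{k}) \ge \alpha$, hence $k^\ast(\hat{\wv}) \ge \hat{k}$ and the objective equals $V_{\hat{k}}$, matching the upper bound and identifying $\hat{\wv}$ as an optimum with value $V_{\hat{k}}$.

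The one genuinely non-routine step is the corner-optimum argument in the second paragraph: because the objective is a ratio rather than linear in $\wv$, one has to verify that each partial derivative keeps a fixed sign throughout the feasible box, so that the maximum cannot lie strictly in the interior along any coordinate. Once this is in hand, the remaining pieces---the ordering on $V_i$, the monotonicity of $f$, and tie-handling at the boundary $f(\hat{k}) = \alpha$ under the paper's $\ge$ quantile convention---are all routine bookkeeping.
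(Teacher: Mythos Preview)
Your proposal is correct and follows essentially the same approach as the paper's proof. The paper first verifies directly that the threshold configuration $\hat{\wv}$ yields quantile $V_{\hat{k}}$, then argues by contradiction that no feasible $\wv$ can achieve a larger quantile $V_M$---bounding $\sum_{i\ge M}p_i$ above by replacing $w_i\to w_{lo}^{\Gamma}(X_i)$ for $i<M$ and $w_i\to w_{hi}^{\Gamma}(X_i)$ for $i\ge M$, which is exactly your corner-optimum step for $T_M$; your inner/outer decomposition and partial-derivative argument make this structure more explicit, but the substance is identical.

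One minor remark: your parenthetical justification for the monotonicity of $f$ (``inflates the denominator while shrinking the numerator'') is not quite right---going from $f(k)$ to $f(k+1)$ both numerator and denominator shrink. The cleanest argument is simply that $T_{k+1}(\wv)\le T_k(\wv)$ pointwise, so $f(k+1)=\max_{\wv}T_{k+1}(\wv)\le\max_{\wv}T_k(\wv)=f(k)$.
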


According to \Cref{prop:qt-opt}, to solve \Cref{eq:qt-opt}, we first sort $V_{1:n+1}$ in ascending order and initialize the conformal weights $w_i = w_{lo}^{\Gamma}(X_i)$ for $1\leq i \leq n$, $w_{n+1} = w_{lo}^{\Gamma}(X)$. Then we iteratively flip $w_k$ from $w_{lo}^{\Gamma}(X_k)$ to $w_{hi}^{\Gamma}(X_k)$ for $k=n+1, n, \cdots, 1$ until $\sum_{i=k}^{n+1} p_i \geq \alpha$. Supposing the iteration stops at $k=m$, the optimal  objective value in  \Cref{eq:qt-opt} is uniquely determined as $ \hat{Q}(Z_{1:n}, X; \Gamma, \alpha, t) =V_{m}$.   

To sum up, by \Cref{thm:condition}, the interval $\hat{C}_t^{(s_t)}(x) $ in \Cref{eq:h-interval} satisfies the coverage of \Cref{eq:true-e,eq:hat-e}. With $\hat{Q}$ given in \Cref{prop:qt-opt}, 
\ba{
\hat{C}_t^{\Gamma}(x) = \big\{y \in \bR: V{(x,y)} \leq \hat{Q}(Z_{1:n}, x; \Gamma, \alpha, t)  \big\}  \label{eq:csa-interval}
}
is the union set $ \cup_{s_t \in \cE(\Gamma)} \hat{C}^{(s_t )}(x)$. %
According to  \Cref{prop:union},  $\hat{C}_t^{\Gamma}(x)$ is a valid  predictive interval of a missing counterfactual outcome under the MSM.
\begin{theorem}
Under the condition of \Cref{thm:condition}, with known propensity score $e(X)$, the predictive band $\hat{C}_t^{\Gamma}(x)$ in \Cref{eq:csa-interval} has nominal coverage $1-\alpha$ of $Y(t)$ under the collection of sensitivity models $\cE(\Gamma)$; with an estimated $\hat{e}(X)$, the coverage is at least to the lower bound in \Cref{eq:hat-e}.  
\end{theorem}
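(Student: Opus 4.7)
The plan is to combine three results established earlier in the section: \Cref{lm:range}, which bounds the conformal weights under any sensitivity model in $\cE(\Gamma)$; \Cref{prop:qt-opt}, which identifies the threshold $\hat{Q}$ defining $\hat{C}_t^\Gamma$ with the optimum of the box-constrained weighted quantile problem of \Cref{eq:qt-opt}; and \Cref{thm:condition}, which certifies coverage validity for a single fixed sensitivity model. The argument is a short containment-plus-coverage chain that reuses \Cref{prop:union} in spirit.

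First I would establish the pointwise containment $\hat{C}_t^\Gamma(x) \supseteq \hat{C}^{(s_t)}(x)$ for every $s_t \in \cE(\Gamma)$. By \Cref{lm:range}, for any such $s_t$ the conformal weights $w_i^{(s_t)} = w_t^{(s_t)}(Z_i)$ satisfy $w_{lo}^{\Gamma}(X_i) \leq w_i^{(s_t)} \leq w_{hi}^{\Gamma}(X_i)$ for $1 \leq i \leq n$, and analogously for $w_{n+1}^{(s_t)} = w_t^{(s_t)}(x,y)$. Hence the weight vector $\wv^{(s_t)}$ is feasible in the optimization \Cref{eq:qt-opt}. Since the nonconformity score $V(x,y)$ does not depend on $s_t$, the weighted quantile defining $\hat{C}^{(s_t)}(x)$ in \Cref{eq:h-interval} equals the objective of \Cref{eq:qt-opt} evaluated at $\wv^{(s_t)}$. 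Therefore, by \Cref{prop:qt-opt}, the optimum $\hat{Q}(Z_{1:n}, x; \Gamma, \alpha, t)$ dominates the per-model quantile uniformly over $\cE(\Gamma)$, so $\hat{C}_t^\Gamma(x) = \{y : V(x,y) \leq \hat{Q}\}$ from \Cref{eq:csa-interval} contains every $\hat{C}^{(s_t)}(x)$.

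Next I would invoke \Cref{thm:condition} applied to the true selection score $s_t$ associated with $\mathbb{P}_0$. Because the MSM is assumed to hold under $\mathbb{P}_0$ with parameter $\Gamma$, this $s_t$ lies in $\cE(\Gamma)$, and the outcome distribution $p^{(s_t)}(Y(t)\g X)$ of \Cref{eq:outcome-s} coincides with the true marginal $p(Y(t)\g X)$. \Cref{thm:condition} then guarantees $\mathbb{P}(Y(t) \in \hat{C}^{(s_t)}(X)) \geq 1-\alpha$ in the known-propensity case and $\geq 1-\alpha-\Delta$ in the estimated-propensity case. Combined with the containment just established,
\[
\mathbb{P}_{X,Y(t) \sim p(X)p(Y(t)\g X)}\bigl(Y(t) \in \hat{C}_t^\Gamma(X)\bigr) \;\geq\; \mathbb{P}\bigl(Y(t) \in \hat{C}^{(s_t)}(X)\bigr),
\]
which yields the claimed $1-\alpha$ nominal coverage with known $e(X)$ and the slack lower bound of \Cref{eq:hat-e} with estimated $\hat{e}(X)$.

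The main obstacle --- though mild --- is verifying that the box relaxation in \Cref{eq:qt-opt} really is an upper bound on each per-$s_t$ quantile, rather than an unrelated optimization. This relies on the fact that the weighted quantile functional $Q_{1-\alpha}\bigl(\sum_i p_i \delta_{V_i} + p_{n+1}\delta_\infty\bigr)$ is monotone in the weights in the sense that shifting probability mass toward larger $V_i$ values can only increase the quantile; this monotonicity is exactly the structure exploited by the greedy characterization of the optimum in \Cref{prop:qt-opt}. With that observation in place, the theorem follows immediately from the containment and \Cref{thm:condition} without requiring a separate invocation of \Cref{prop:union}, since we only need validity at the true $s_t \in \cE(\Gamma)$ rather than a worst case supremum.
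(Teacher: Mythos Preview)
Your proposal is correct and follows essentially the same route as the paper: the paper argues (in the paragraph preceding the theorem) that $\hat{C}_t^\Gamma(x)=\cup_{s_t\in\cE(\Gamma)}\hat{C}^{(s_t)}(x)$ via \Cref{prop:qt-opt} and then invokes \Cref{prop:union} together with \Cref{thm:condition}, while you establish only the containment $\hat{C}_t^\Gamma(x)\supseteq\hat{C}^{(s_t)}(x)$ (which is all that is needed) and inline the one-line proof of \Cref{prop:union} rather than cite it. Both arguments rest on the identical ingredients---feasibility of the true-model weights in the box constraints of \Cref{eq:qt-opt} via \Cref{lm:range}, and the per-model coverage from \Cref{thm:condition}---so the difference is cosmetic.
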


\noindent\textbf{Implementation and Computational Cost.~~}  We compute the predictive interval by adopting the framework of split conformal prediction \citep{lei2014distribution,papadopoulos2002inductive}.  Classic conformal inference fits the predictive function using the leave-one-out observed data to ensure the exchangeability %
and has to fit a predictor multiple times. Spilt conformal prediction  reduces the computational cost by  randomly splitting the observed data into a preliminary set and a calibration set. The prediction model is fitted on the preliminary set for one time, set as fixed, and used to compute the nonconformity scores on the calibration set and target set.

For CSA, the predictive interval in \Cref{eq:csa-interval} can be computed analytically on top of a specific conformal inference algorithm.  %
As an example,  for the split conformal inference with nonconformity score $V_i = |Y_i - \hat{\mu}(X_i)|$ \citep{lei2015conformal}, where $\hat{\mu}(\cdot)$ is the mean response function, 
 \Cref{eq:csa-interval} becomes
$
\hat{C}_t^{\Gamma}(x)  = [\hat{\mu}(x) - \hat{Q}(Z_{1:n}, x;\Gamma, \alpha),~ \hat{\mu}(x) + \hat{Q}(Z_{1:n}, x; \Gamma, \alpha)]. %
$
For split conformal quantile regression with nonconformity score $V_i = \max \{\hat{q}_{\alpha/2}(X_i)-Y_i, Y_i-\hat{q}_{1-\alpha/2}(X_i)\}$ \citep{lei2020conformal,romano2019conformalized}, where $\hat{q}(\cdot)$ is the conditional quantile function, \Cref{eq:csa-interval} becomes
$
\hat{C}_t^{\Gamma}(x)  = [\hat{q}_{\alpha/2}(x) - \hat{Q}(Z_{1:n}, x;\Gamma, \alpha), ~\hat{q}_{1-\alpha/2}(x) + \hat{Q}(Z_{1:n}, x;\Gamma,\alpha)]. %
$
The full algorithm  is summarized in \Cref{alg:csa}.

For each target unit, to solve the optimization in \Cref{eq:qt-opt}, the computational complexity is $\mathcal{O}(mn)$ if the loop ends in $m$ iterations and the worst-case complexity is $\mathcal{O}(n^2)$. When the target coverage $1-\alpha$ is high, $m$ is close to 1 and the total computation time is close to  the optimal rate that is needed to evaluate the objective function for one time. Other computations are sorting $V_{1:n+1}$ and fitting the treatment and outcome models, which can be shared by different target units. Therefore,  CSA is highly efficient, inducing little extra computation comparing to the conformal prediction under unconfoundedness.

\subsection{Predictive Band for the Individual Treatment Effect}
\label{sec:ite}

We now develop a sensitivity analysis for the ITE of a target unit, for which both potential outcomes are  unobserved. 
Let the covariates of a target unit be $X $. Using  the data of the treatment group $t$, by \Cref{alg:csa}, we can construct an interval $\hat{C}_t^{\Gamma}(X )= [L_t^{\Gamma}(X ), U_t^{\Gamma}(X )]$ which  has $1-\alpha_t$ coverage of $Y (t)$ under the sensitivity models $\cE(\Gamma)$. Let $\hat{C}^{\Gamma}(X) = [L_1^{\Gamma}(X )-U_0^{\Gamma}(X ), U_1^{\Gamma}(X )-L_0^{\Gamma}(X)]$ and  $\alpha_1 + \alpha_0 = \alpha$. By the Bonferroni correction, 
\ba{
\mathbb{P}(Y(1)-Y(0) \in \hat{C}^{\Gamma}(X )) \geq 1- \mathbb{P}(Y(1) \notin \hat{C}_1^{\Gamma}(X )~\text{or}~Y(0) \notin \hat{C}_0^{\Gamma}(X )) \geq 1-\alpha.
\label{eq:ite-cover}
} 
So the predictive interval $\hat{C}^{\Gamma}(X )$ has the desired coverage.  Though computationally simple, the Bonferroni method might produce overly conservative interval $\hat{C}^{\Gamma}(X )$ for the ITE, because  the coverage $1-\alpha_t$ for each potential outcome is higher than $1-\alpha$.

To mitigate this problem, we follow \cite{lei2020conformal} and develop a nested approach.  The idea is to first randomly sample a subset of data  as the validation set  (indexed by $ \mathcal{I}_{\text{val}}$) and set  the rest of the observed data as the non-validation set. For each individual $i \in \mathcal{I}_{\text{val}}$, %
let $\hat{C}^{\Gamma}(X_i ) = \hat{C}_1^{\Gamma}(X_i)-Y_i(0)$ if $T_i=0$ and $\hat{C}^{\Gamma}(X_i ) = Y_i(1) - \hat{C}_0^{\Gamma}(X_i )$ if $T_i=1$. The coverage probability decomposes as
\bas{
\mathbb{P}(Y_i(1)-Y_i(0) \in \hat{C}^{\Gamma}(X_i )) = \mathbb{P}(T_i=1)&\mathbb{P}(Y_i(0) \in \hat{C}_0^{\Gamma}(X_i )|T_i=1) \\
&+ \mathbb{P}(T_i=0)\mathbb{P}(Y_i(1) \in \hat{C}_1^{\Gamma}(X_i )|T_i=0).
}
If the  interval $\hat{C}_t^{\Gamma}(X_i )$ has a coverage probability of $Y_i(t)$ higher than $1-\alpha$, the coverage probability of  $\hat{C}^{\Gamma}(X)$ for the ITE is also higher than $1-\alpha$. 
The dataset $\tilde{\mathcal{D}} = \{X_i, \hat{C}^{\Gamma}(X_i)\}_{i \in \mathcal{I}_{\text{val}}}$ with $X_i  \stackrel{i.i.d.}{\sim} p(X)$ can be used to fit a predictive function $X \mapsto \hat{C}^{\Gamma}(X)$, which maps to a 
predictive interval $ \hat{C}^{\Gamma}(X)$ for a  data point with covariates $X \sim p(X)$. The mapping can be two regressions with the input as $X_i$, $i \in \mathcal{I}_{\text{val}}$ and the output as the upper and lower end points of $ \hat{C}^{\Gamma}(X_i)$ respectively. This becomes a relatively easy in-sample prediction problem.

\begin{algorithm}[t]
  \caption{ CSA for the ITE estimation with Nested Method}  
  \label{alg:ite}
\textbf{Input: } Data $(X_{i}, T_{i}, Y_{i}(T_i))_{i=1}^{N}$, level $\alpha$, sensitivity parameter $\Gamma$, target covariates X \\
  \vspace{0.2cm}
  \textbf{Step I: Preliminary processing}
  \begin{algorithmic}[1]
    \State Split the data into two folds, indexed by $\I$ and $\I_{\text{val}}$
    \State Denote the treated and control group data in $\I$ ($\I_{\text{val}}$) as $\I^t$, $\I^c$ ($\I_{\text{val}}^t$, $\I_{\text{val}}^c$) respectively
     \end{algorithmic}

  \vspace{0.2cm}
  \textbf{Step II: Predictive interval for the ITE $\tau_i$ at the target point} 
     \begin{algorithmic}[1]
  \State Run \Cref{alg:csa} with data in $\I$, $\I_{\text{pre}} \cup \I_{\text{cal}} = \I^t$ and for each target point $i \in \I_{\text{val}}^c$; the bounds of weight $w_{lo}^{\Gamma}(x) =  \hat{e}(x)/(\Gamma(1-\hat{e}(x)))$ and $w_{hi}^{\Gamma}(x) = \Gamma \hat{e}(x)/(1-\hat{e}(x))$; return $\hat{C}_1^{\Gamma}(X_i)$
  \State For $i \in \I_{\text{val}}^c$, compute $\hat{C}^{\Gamma}(X_i) = \hat{C}_1^{\Gamma}(X_i)-Y_i(0)$
  \State Run \Cref{alg:csa} with data in $\I$, $\I_{\text{pre}} \cup \I_{\text{cal}} = \I^c$ and for each target point $i \in \I_{\text{val}}^t$; the bounds of weight $w_{lo}^{\Gamma}(x) = (1-\hat{e}(x))/(\Gamma \hat{e}(x))$ and   $w_{hi}^{\Gamma}(x) = \Gamma(1-\hat{e}(x))/ \hat{e}(x)$; return $\hat{C}_0^{\Gamma}(X_i) $
    \State For $i \in \I_{\text{val}}^t$, compute $\hat{C}^{\Gamma}(X_i) = Y_i(1) - \hat{C}_0^{\Gamma}(X_i)$
   \State Learn the predictive function $X \to \hat{C}^{\Gamma}(X)$  with training data $\{X_i, \hat{C}^{\Gamma}(X_i)\}_{i \in \mathcal{I}_{\text{val}}}$; predict $\hat{C}^{\Gamma}(X)$ for the target data with the learned predictive function
   \end{algorithmic}
  \textbf{Output: $\hat{C}^{\Gamma}(X)$} %
\end{algorithm}

We use \Cref{alg:csa} to obtain the predictive intervals $\hat{C}_t^{\Gamma}(X_i)$ for the data points in the validation set.  The training data are from the treatment group $1-t$ in the non-validation set and the target data are from the treatment group $t$ in the validation set, $t=0,1$. Similar to  \Cref{thm:weight}, the bounds of the conformal weights $w_t(x,y)$ can be computed as 
\ba{
\frac{p(X \g T=t)p(Y(t) \g X,T=t)}{p(X \g T=1-t)p(Y(t) \g X,T=1-t)}  \in \big[\frac1\Gamma  \Big(\frac{e(x)}{1-e(x)}\Big)^{2t-1}, ~\Gamma \Big(\frac{e(x)}{1-e(x)}\Big)^{2t-1}\big]. 
}
The algorithm is  summarized in  \Cref{alg:ite}.

\subsection{Sharpness via Covariates Balancing}
\label{sec:sharpness}

\noindent\textbf{Notion of sharpness.~~} Consider the sharpness on the sensitivity models. A sharp sensitivity model should be data compatible and not have observational implications \citep{Dorn2021-vs,franks2019flexible}. For the MSM, we define the sharp MSM as
\ba{
\cE^*(\Gamma) = \{ s_t(x,y) \in \cE(\Gamma): \int p^{(s_t)}(Y(1)=y \g X=x, T=0) dy = 1\},
\label{eq:sharp-msm}
}
where $\cE(\Gamma)$ is defined in \Cref{eq:t-model} and $p^{(s_t)}(Y(1)=y \g X, T=0)$ is the induced counterfactual distribution  in \Cref{eq:counterfactual}. The sharp MSM is a subset of the selection scores  in $\cE(\Gamma)$ that induces proper counterfactual density.  By \Cref{lm:sscore}, for example, $\cE^*(\Gamma)$ excludes the  selection scores with an odds ratio $ \text{OR}(s_t(X,Y(t)), e(X))$ uniformly greater (or less than) one. %

Recent work improves the sharpness of the MSM in estimating the ATE \citep{Dorn2021-vs,zhao2019sensitivity}.  \citet{Dorn2021-vs} shows that the selection score $s_1$ is  data compatible if it satisfies the constraint $\E[\frac{T}{s_1(X,Y(1))} \g X] = 1$ (the unobserved confounder in \citet{Dorn2021-vs} is replaced with $Y(1)$). This constraint is equivalent to the constraint in \Cref{eq:sharp-msm} as shown in \Cref{prop:sharp}, \Cref{sec:cssa}.  We consider estimating $Y(1)$ for simplicity and the discussion applies to $Y(0)$ similarly. The derivation and computation details of this section are presented in \Cref{sec:cssa}.

\noindent\textbf{Sharpness by Covariates Balancing.~~}  The integral constraint in \Cref{eq:sharp-msm} is easy to interpret but is infeasible to compute because we often only observe one outcome value for a given  $X$. However, the constraint, equivalent to  $\E[\frac{T}{s_1(X,Y(1))} \g X] = 1$ by \Cref{prop:sharp},  indicates that for an arbitrary vector-valued function $g(X)$, 
\ba{
\E[\frac{g(X_i)T_i}{s_1(X_i,Y_i(1))} ] =  \E_{X_i}\big[g(X_i) \E[\frac{T_i}{s_1(X_i,Y_i(1))} \g X_i ] \big]= \E[g(X_i)]. 
\label{eq:relaxed}
}
By enforcing the condition in \Cref{eq:relaxed} with different covariates function $g(X)$, we can reduce $\cE(\Gamma)$ close to $\cE^*(\Gamma)$.  Since \Cref{eq:relaxed} holds similarly for the control group,  it  represents the  covariate balancing between the treated and control group. This means encouraging the covariate balancing  improves the sharpness of the MSM.

We incorporate the balancing condition \Cref{eq:relaxed} to CSA. By  \Cref{lm:sscore}, we transform the condition in \Cref{eq:relaxed} to  the constraints in the quantile optimization in  \Cref{eq:qt-opt}. Specifically, we optimize \Cref{eq:qt-opt} with additional constraints
\ba{
\frac{1}{N_t}\sum_{i: T_i=t} g_k(X_i)&w_i^{\Gamma} = \frac{1}{N} \sum_{i=1}^N \frac{T_i^t(1-T_i)^{1-t}}{\hat{e}(X_i)^t(1-\hat{e}(X_i))^{1-t}}g_k(X_i),  ~~ 1 \leq k \leq K,
\label{eq:qt-opt-sharp}
}
where $g_k(X)$, $k\in \idx{K}$ are the balancing functions specified by the researcher.  We call this algorithm conformalized sharp sensitivity analysis (CSSA) and summarize it in \Cref{alg:cssa}, \Cref{sec:cssa}. The optima of \Cref{eq:qt-opt} with additional constraints \Cref{eq:qt-opt-sharp} is smaller than that of \Cref{eq:qt-opt}, thereby reducing the size of predictive band  for the ITE estimates.  In simulations, we find choosing $g(X) = \hat{e}(X)$ effectively improves the sharpness.  Including additional balancing functions such as the quantile function of the outcome distribution \citep{Dorn2021-vs}, the identity function, and the derivative of the estimated propensity score \citep{imai2014covariate} may further reduce $\cE(\Gamma)$ to $\cE^*(\Gamma)$.

Finally, though the sharpness is necessary for claiming a causal estimate to be sensitive to unmeasured confounding, sensitivity analysis is often applied to corroborate a nonzero causal effect identified in the primary analysis. From this perspective, the deviation from the sharpness might be  interpreted as  conservativeness \citep{cinelli2020making,ding2016sensitivity,veitch2020sense}. For the ITE estimation, such conservativeness increases our confidence that a positive (or negative) ITE is indeed robust to unmeasured confounding when the sensitivity analysis suggests so.

\section{Practical Considerations of the Algorithms}
\label{sec:practical}
We now discuss  how to interpret the coverage probability of CSA, choose the conformal inference algorithms, calibrate the sensitivity parameter, and evaluate the ITE estimation.

\noindent\textbf{Marginal and Conditional Coverage.~~} The probability in the coverage statement of CSA is over both the covariates and the outcomes. Hence, the coverage guarantee should be interpreted in a marginal way instead of a conditional way. In other words, suppose the estimand $\tau$ is either the missing potential outcome $Y(t)$ or the ITE $Y(1)-Y(0)$. $\mathbb{P}(\tau \in \hat{C}(X ))$ means that if we construct a predictive band $\hat{C}(X)$ for a  unit randomly sampled as the target, the probability that $\hat{C}(X)$ captures $\tau$ is at least $1-\alpha$.  The randomness is over both the covariates and the potential outcomes. 

The marginal coverage measures the quality of prediction \emph{averaged} over the target units. It does not guarantee the coverage of $\tau$  for a given \emph{fixed} target unit. The loss of  conditional coverage is unavoidable if  no distributional assumption is imposed on the observed data \citep{barber2019limits}. However, it is possible for an algorithm with marginal coverage to achieve conditional coverage asymptotically,  under additional regularization conditions,  or to satisfy a relaxed conditional coverage definition \citep{barber2019limits,lei2020conformal}. We refer to \citet{barber2019limits} for a detailed discussion on the definitions, limitations,  and connections of different types of coverages. 

\noindent\textbf{Choice of Conformal Inference Methods.~~} CSA is compatible with a variety of conformal inference algorithms. The main components of a conformal inference are the prediction model and the corresponding nonconformity score \citep{Angelopoulos2021-ur}. %
The choice of  conformal inference algorithm hinges on the properties of the underlying outcome distribution, such as the homogeneity, skewness and multimodality.
A good choice of  inference method leads to high interpretability and small predictive interval. For example, to estimate the ITE, the predictive band as a single interval may be more interpretable than a nonconvex set \citep{Sesia2021-ei}; an algorithm  capturing individual heterogeneity might produce a shorter interval and  more informative   estimate \citep{romano2019conformalized}. Nevertheless, the coverage validity of the predictive band does not depend  on the  choice of conformal inference method. %

\noindent\textbf{Calibration of the Sensitivity Parameter.~~} In the MSM, the sensitivity parameter $\Gamma$ quantifies the confounding strength. While setting $\Gamma$ to a proper value requires domain knowledge, the observed data can provide useful reference \citep{hsu2013calibrating,imbens2003sensitivity,kallus2019interval}.  
In the definition of MSM, $\Gamma$ measures the effect of knowing a potential outcome on the treatment assignment. We can view the potential outcome as a type of covariate \citep{robins2000sensitivity} and compute the effect of an observed covariate on the  treatment assignment. Specifically, we compute  $\Gamma_{ij} = \text{OR}(e(X_i), e((X_{\backslash j})_i))$ as the effect of the j-th covariates on the treatment assignment of the i-th unit, where $e(X_{\backslash j})$ is the propensity score estimated without the j-th covariates. The domain experts  can assess a plausible magnitude of $\Gamma$ by referring to the magnitude of $\{\Gamma_{ij}\}_{i,j}$. Here, approximating $e(X_i)$ by $e((X_{\backslash j})_i))$ may introduce conservativeness to the estimated confounding strength. We refer to  \cite{cinelli2020making,veitch2020sense} for a  discussion on this issue.

\noindent\textbf{Evaluating the Predictive Band of an ITE. ~~} When evaluating the ITE estimation by simulations, we need to sample the true ITEs, which requires generating random samples of all the potential outcomes. To generate $Y_i(t) \sim p(Y(t)  \g  X_i)$, we can sample $T_i \sim \text{Bern}(e(X_i))$ and $Y_i(t) \sim p(Y(t)  \g  X_i, T_i)$. However, when $T_i = 1-t$, generating  $Y_i(t)$ depends on a sensitivity model, which is not defined parametrically in the MSM. To solve this problem, we propose a  rejection sampling method to generate counterfactual samples. The details of this sampling method is presented in \Cref{sec:rejection-sampling}.

\section{ Empirical Studies}
\label{sec:experiment}

In this section,  we  answer the following questions using synthetic data: can CSA provide a desired coverage?
Are the  predictive intervals of the ITE overly conservative? 
How do the predictive intervals of ITE compare to the interval estimates of population-level causal estimands? Finally, we illustrate how to apply CSA in an observational study. Implementation and examples in this paper are available at \url{https://github.com/mingzhang-yin/Conformal-sensitivity-analysis}.

\subsection{CSA for Estimating Counterfactual Outcome}\label{sec:exp-missing}

Following the  synthetic data generation in \cite{lei2020conformal,wager2018estimation}, the potential outcome $Y_i(1)$ is from
\bac{
&Y_i(1) = \E[Y_i(1) \g T_i=1,X_i] +  \epsilon_i, ~\epsilon_i \sim \cN(0,\sigma^2); \\
&\E[Y_i(1) \g T_i=1,X_i] = f(X_{i1})f(X_{i2}),~ f(x) = \frac{2}{1+\exp(-5(x-0.5))},
\label{eq:exp1y1}
}
where the covariates $X_i = (X_{i1}, \cdots,X_{id})^\top$, $X_{ij} \sim \text{Unif}(0,1)$. The propensity score is 
$e(X_i) = \frac14(1+\beta_{2,4}(1-X_{i1}))$,
where $\beta_{2,4}$ is the CDF of beta distribution with parameters $(2,4)$. We generate $n=3000$ training data points $(X_i, Y_i(1))_{i:T_i=1}$ with dimension of covariates $d = 20$. We take $75\%$ of the training data as the preliminary set and the rest as the calibration set. For a calibration set with $n_{\text{cal}}$ data points, the coverage probability of a new target data follows distribution $\text{Beta}(n_{\text{cal}}+1-\lfloor(n_{\text{cal}}+1)\alpha\rfloor, \lfloor(n_{\text{cal}}+1)\alpha\rfloor)$  \citep{Angelopoulos2021}. 
We set the nominal level $1-\alpha=0.8$ in this simulation. We consider two settings: in the homoscedastic case,  $\sigma \equiv 1$, and in the heteroscedastic case, $\sigma \sim \text{Unif}(0.5,1.5)$. 

For CSA, we use the split conformal prediction with mean prediction \citep{lei2015conformal,papadopoulos2002inductive} and  conformal quantile regression \citep{romano2019conformalized}, denoted as CSA-M and CSA-Q respectively.  We implement CSSA in \Cref{sec:sharpness} with mean prediction, denoted as CSSA-M and set the balancing constraints $g(X_i)$ in \Cref{eq:qt-opt-sharp}  as the estimated propensity score $\hat{e}(X_i)$. We report the ITE estimated under no unobserved confounding (NUC)  as a benchmark, denoted as ITE-NUC \citep{lei2020conformal}. For all conformal inference methods, we use the random forest \citep{breiman2001random} as the regression function.

\begin{figure}[t]
\centering{
  \subfloat
   {{\includegraphics[width=0.32\textwidth]{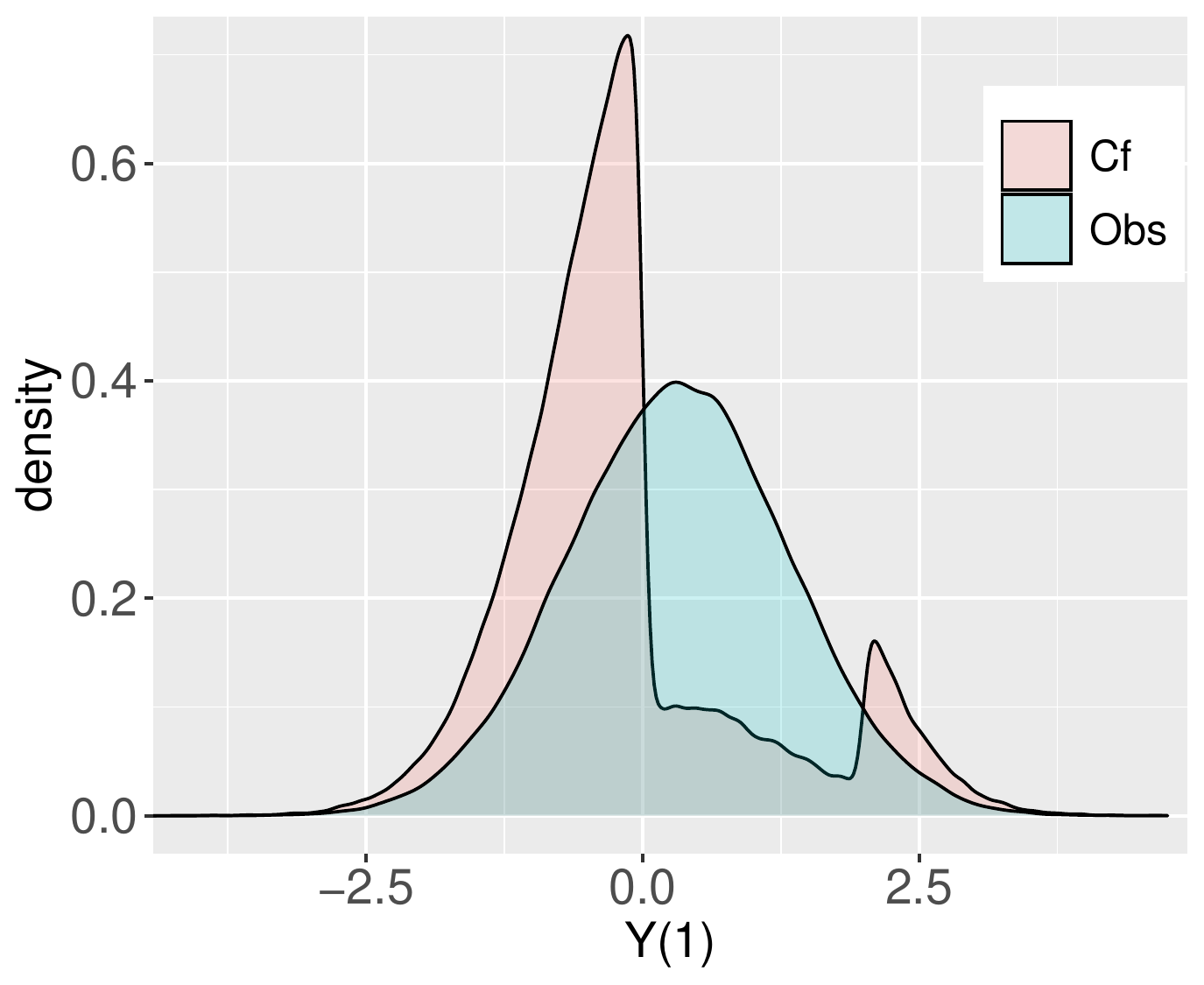} }}  
    \subfloat
    {{\includegraphics[width=0.32\textwidth]{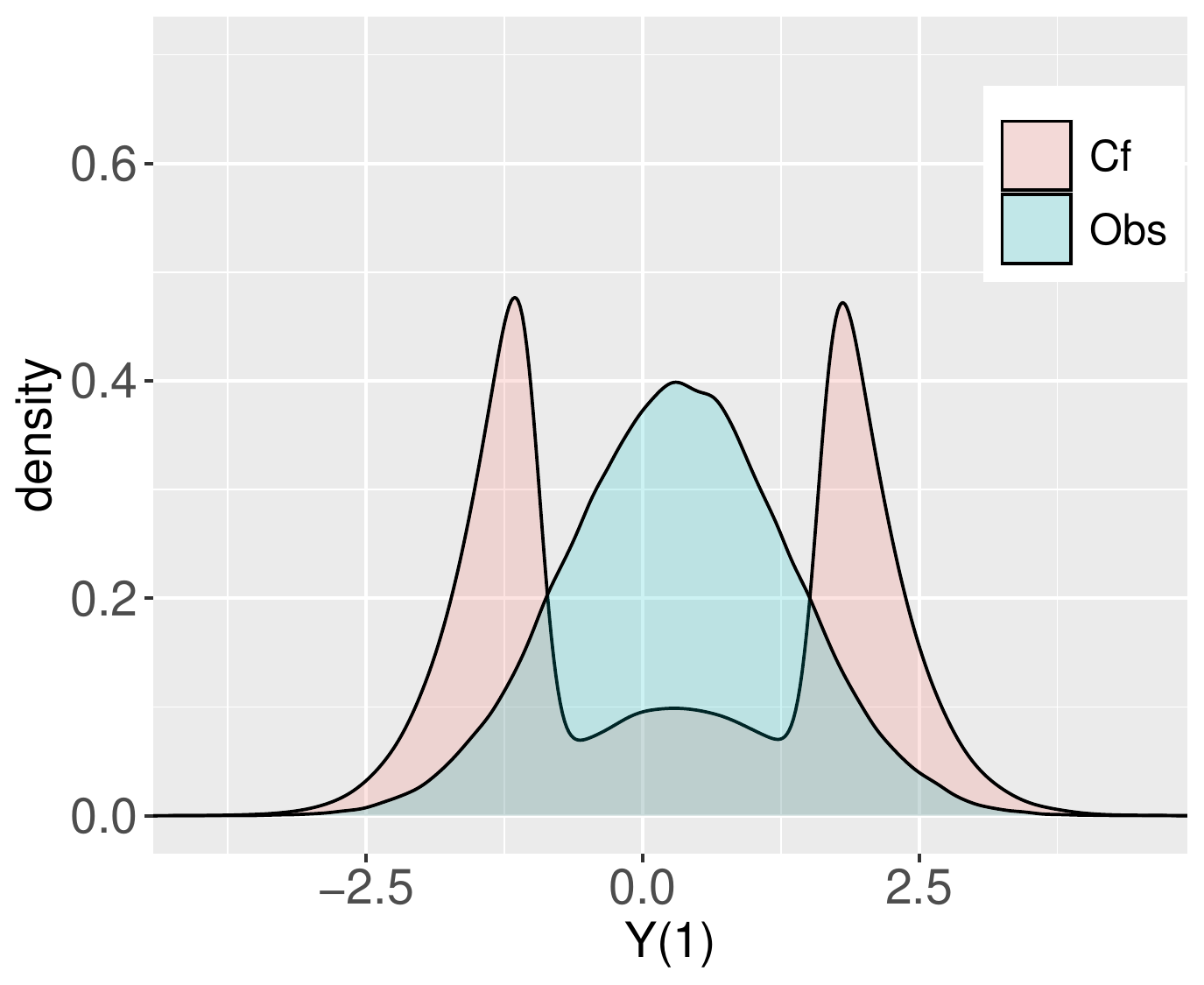} }} 
    \subfloat
   {{\includegraphics[width=0.32\textwidth]{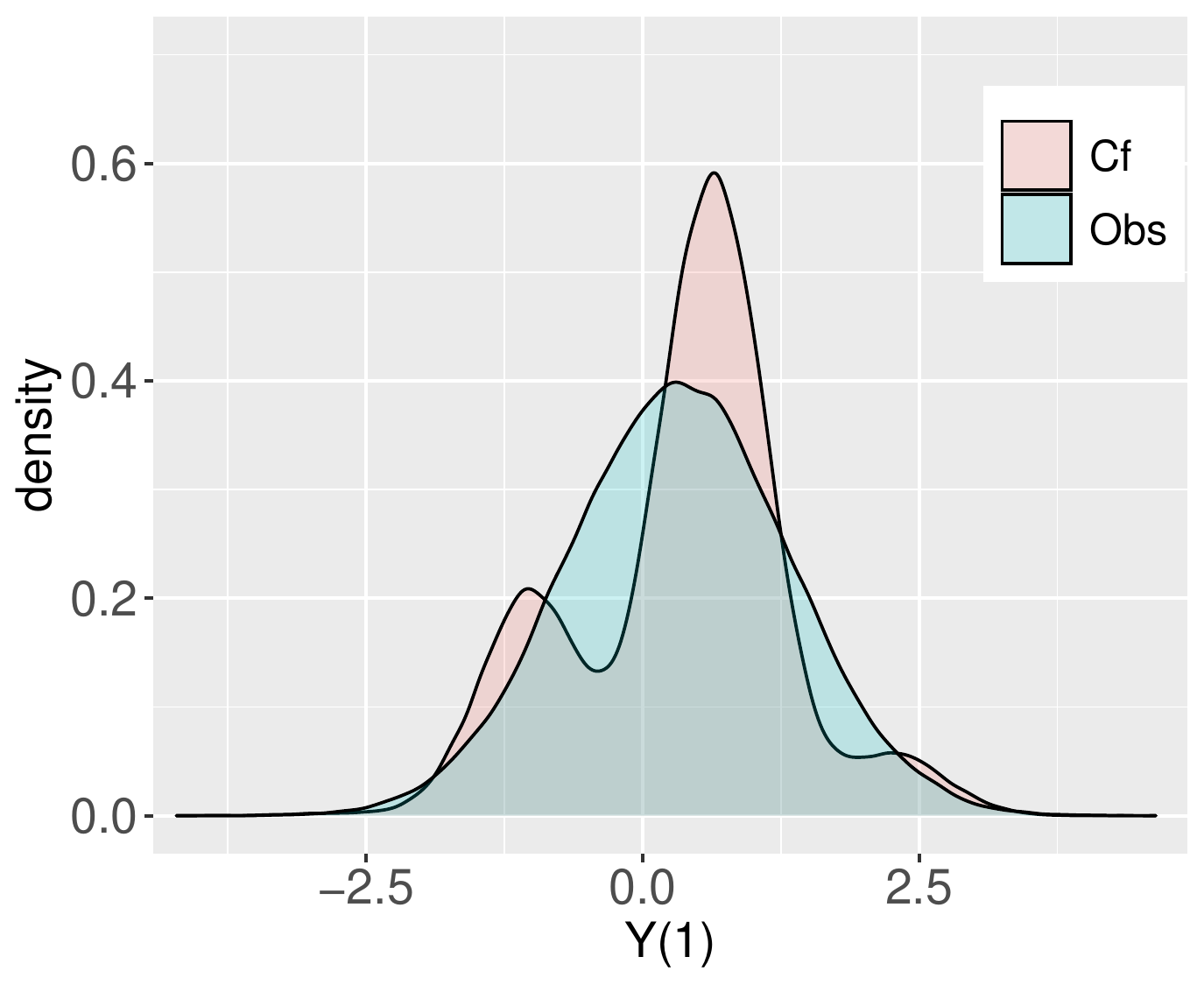} }} 
    \caption{Distribution of $Y(1)$ for the synthetic data.  Given the covariates $X$, Obs denotes the distribution $p(Y(1) \g T=1,X)$ in the observed group. Cf denotes the distribution of counterfactual outcome $p(Y(1) \g T=0,X)$. The three plots correspond to different sensitivity models in $\cE(\Gamma)$. 
    }
    \label{fig:sample}%
 }
\end{figure}

We first assume the baseline outcome $Y(0) \equiv 0$. The estimation of ITE then reduces to estimating a single potential outcome $Y(1)$.  \Cref{fig:sample} demonstates several counterfactual distributions $p(Y(1) | X, T=0)$ that are generated by the rejection sampling method described in \Cref{sec:practical} with sensitivity models in $\cE(\Gamma),~ \Gamma =4$. Unmeasured confounding is reflected as the difference between $p(Y(1) | X, T=0)$ and $p(Y(1) | X, T=1)$.  \Cref{fig:sample} show that the nonparametric  MSM  probes a variety of potential violations to unconfoundedness. %
We find the counterfactual distribution in the middle of \Cref{fig:sample}  results in the lowest coverage among the counterfactual cases in  \Cref{fig:sample} due to the mismatch of the high density regions between the observed  and counterfactual distributions. 
Since the interval estimate by CSA has coverage guarantee for any  sensitivity model in $\cE(\Gamma)$, we report the results with counterfactual in the middle of \Cref{fig:sample} as an adversarial case to test the validity of CSA.

In \Cref{sec:additional-sim} \cref{tab:coverage}, we compare the interval estimates of CSA with those  by a sensitivity analysis of the ATE \citep{zhao2019sensitivity}  and an estimation of the CATE \citep{chipman2010bart}. The (sub)population-based interval estimates underestimate the individual-level uncertainty and undercover the true ITE.  This validates the necessity of individual-level sensitivity analysis.

\Cref{fig:shrink} illustrates the properties of the estimates by CSA and CSSA. 
The top panels show  the empirical coverage under different confounding strengths. The empirical coverage is computed as $(\sum_{i=1}^m \mathbbm{1}{[Y_i(1) \in \hat{C}_1^{\Gamma}(X_i)]})/m$ for $m=10,000$ target points. We observe that ITE-NUC achieves the target coverage under unconfoundedness ($\Gamma=1$) but its coverage decreases as the confounding strength increases. In contrast, CSA-M and CSA-Q have valid coverage across all levels of $\Gamma$. The coverage of CSSA-M is above the nominal level and is lower than that of CSA-M, which demonstrate its validity and sharpness. For $\Gamma \leq 2.5$,  CSSA-M has coverage centered at the nominal level which suggests its sharpness.

The middle panels in \Cref{fig:shrink} show the average interval length on the target units. We observe that the length of ITE-NUC remains the same as $\Gamma$ changes.  In comparison, the length of  CSA and CSSA methods scales up with $\Gamma$, reflecting an increased  uncertainty with stronger unmeasured confounding. On average, $\text{CSA-M}$ produces shorter intervals than CSA-Q when the data is homoscedastic, and they have similar interval lengths when the data are heteroscedastic.  $\text{CSSA-M}$ creates shorted interval than $\text{CSA-M}$ for all $\Gamma>1$.

\begin{figure}
	\vspace*{-8mm}
	\centering{
		{{\includegraphics[width=0.47\textwidth]{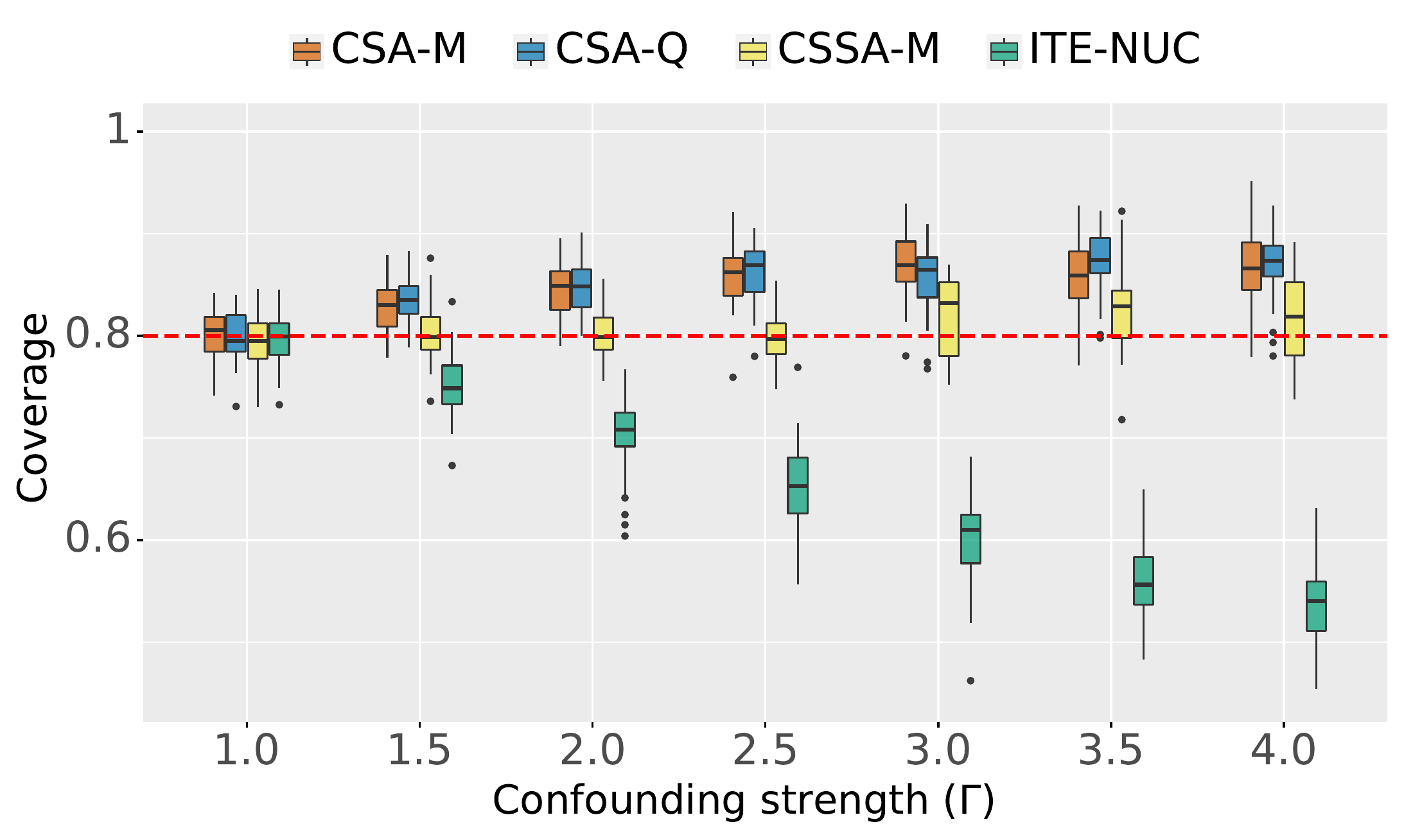} }} ~
		{{\includegraphics[width=0.47\textwidth]{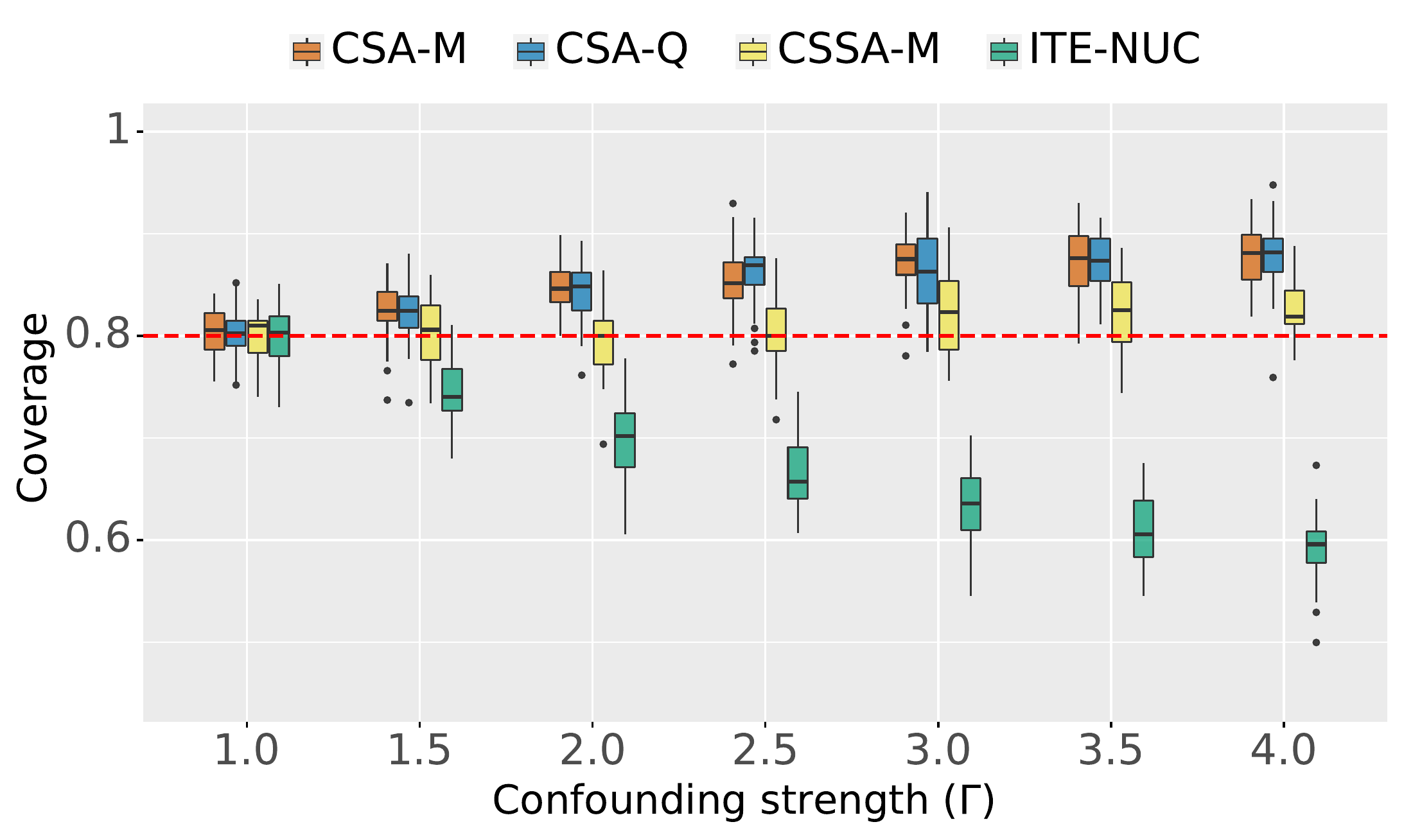} }}  \\  \vspace{3mm}
		
		{{\includegraphics[width=0.47\textwidth]{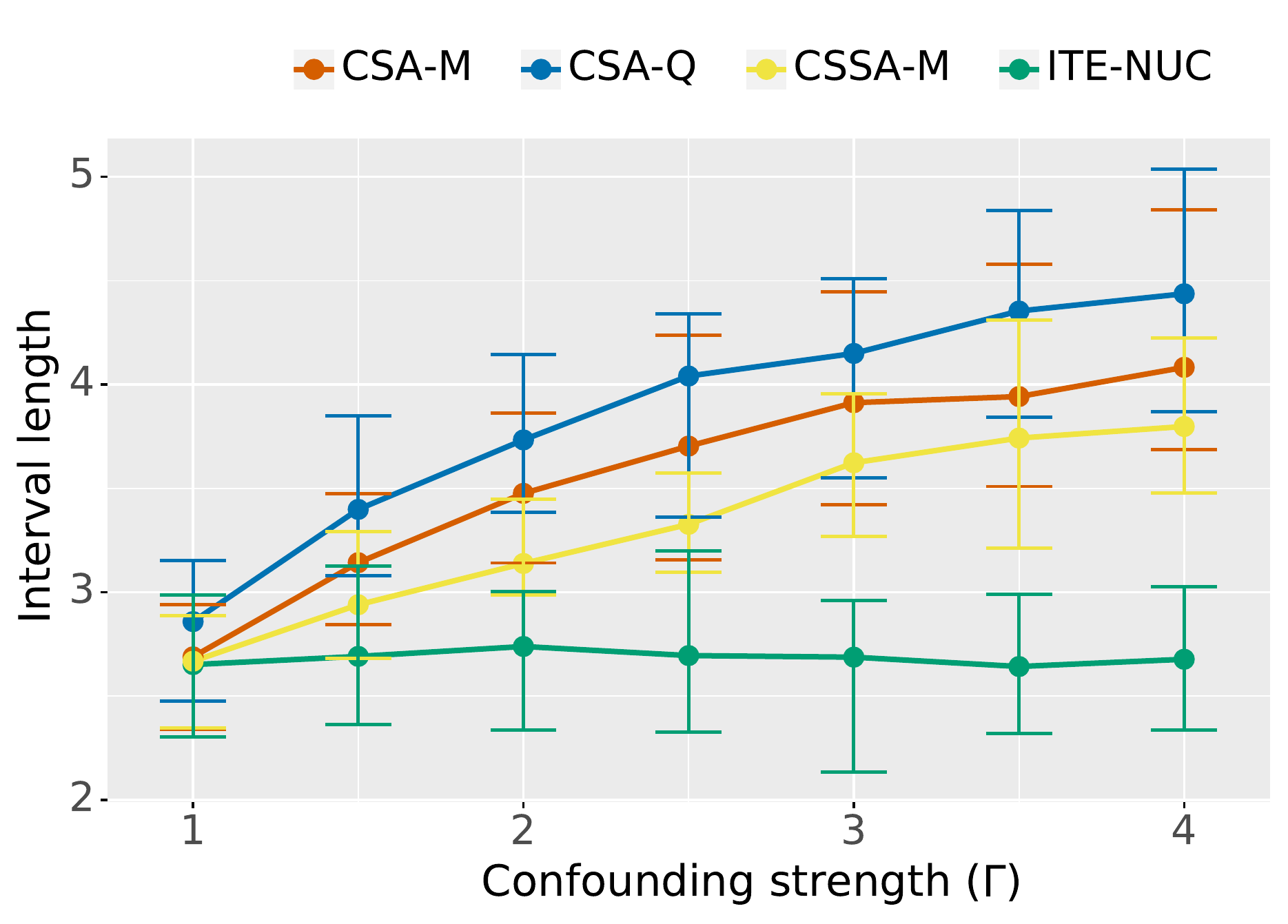} }} ~
		{{\includegraphics[width=0.47\textwidth]{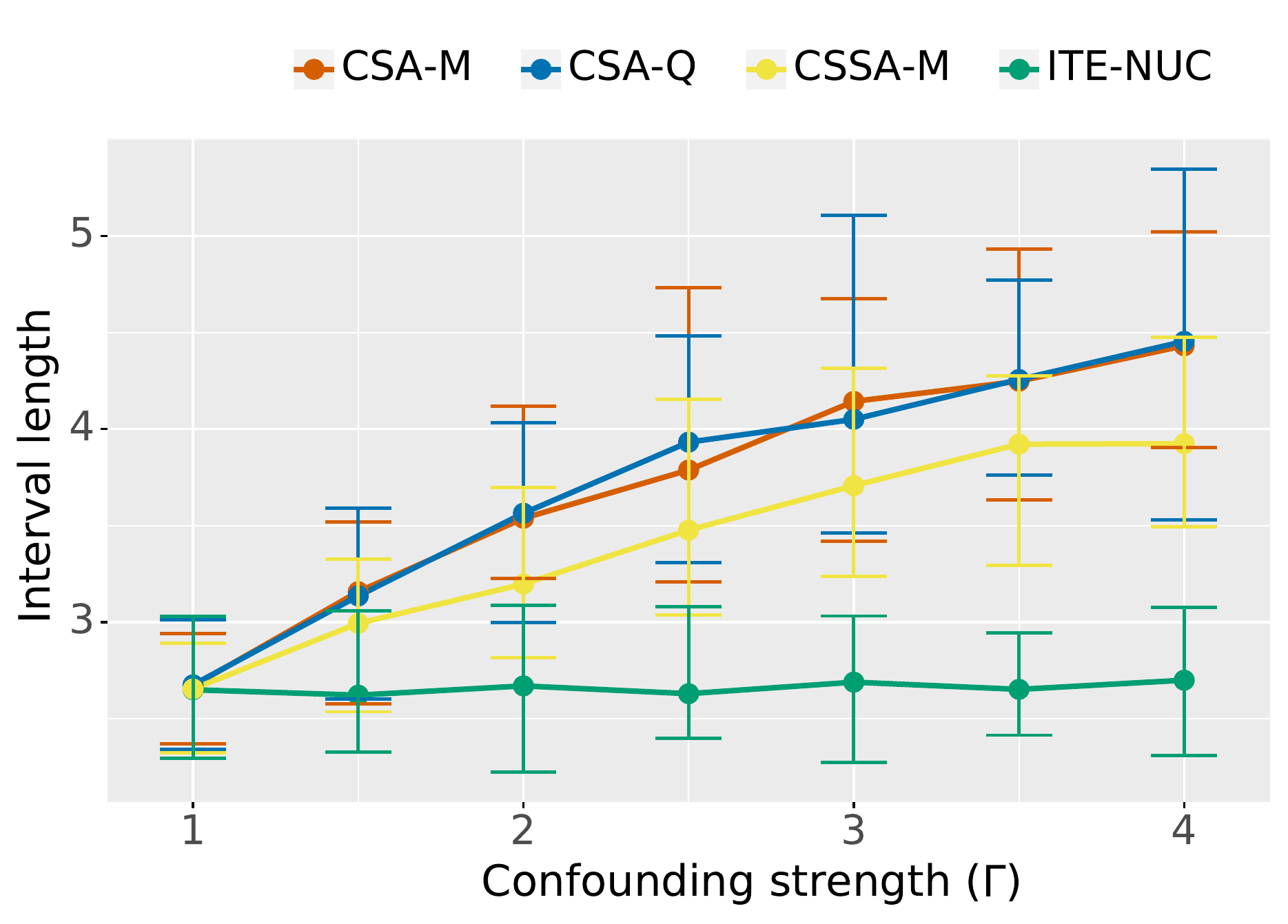} }}  \\
		
		\subfloat[Homosc.]
		{{\includegraphics[width=0.47\textwidth]{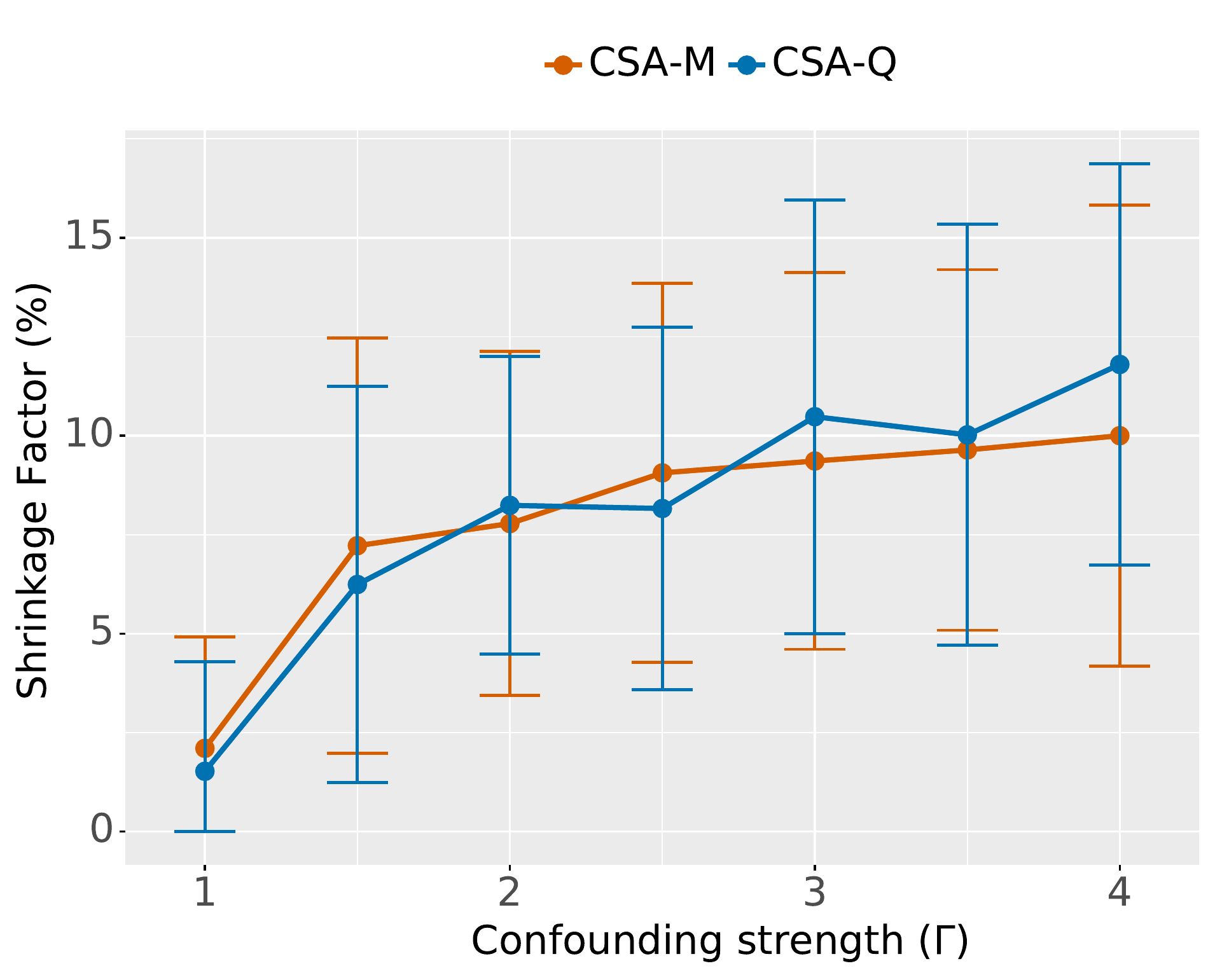} }}  ~~
		\subfloat[Heterosc.]
		{{\includegraphics[width=0.47 \textwidth]{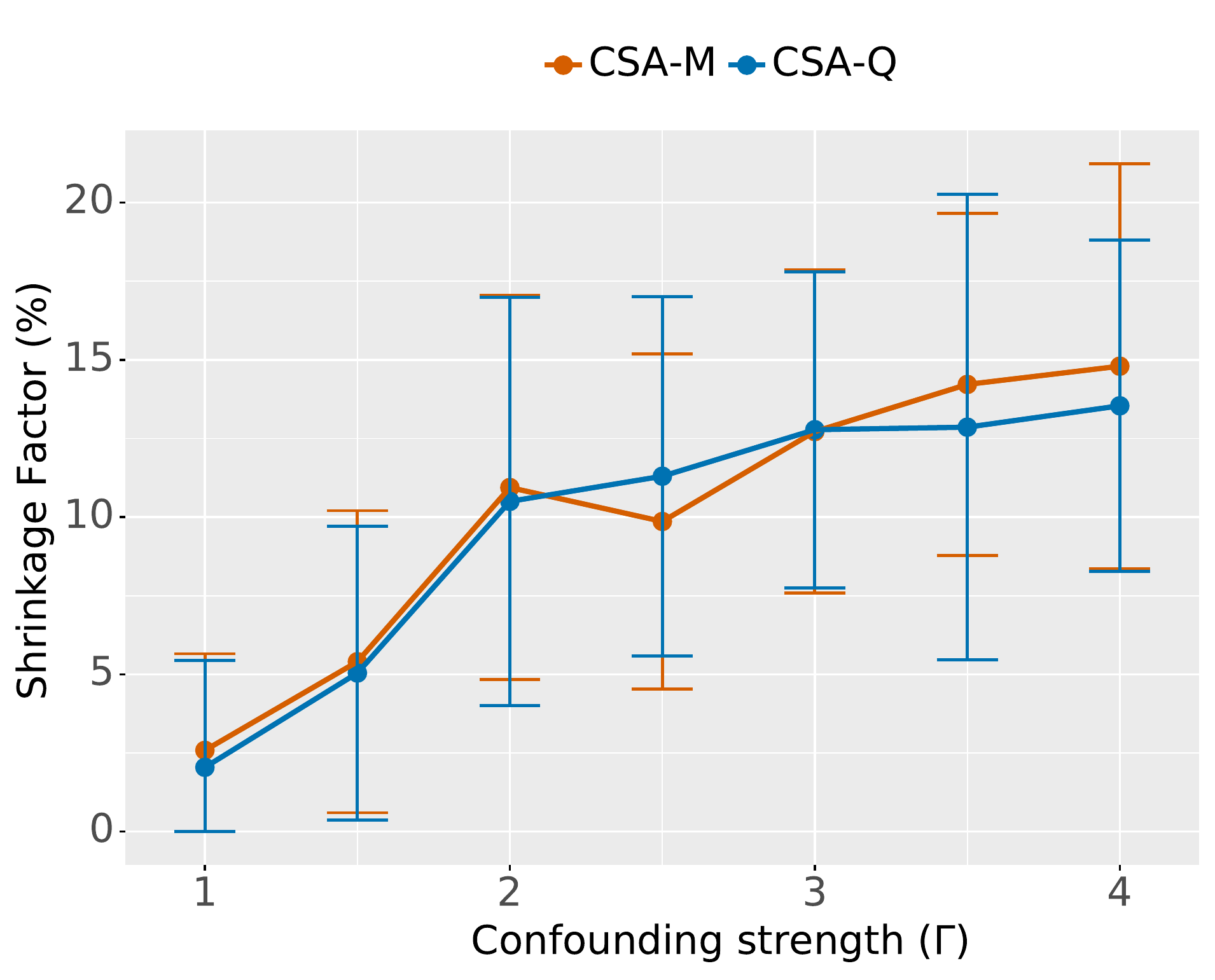} }} 
		\caption{  \small \textbf{Top}: Empirical coverage of the ITE. Dashed line denotes the target coverage level. Across $\Gamma$, the predictive intervals produced by CSA and CSSA methods reach the valid coverage.  CSSA-M improves sharpness over CSA-M. \textbf{Middle}: The average length of the predictive intervals. The interval lengths  by CSA and CSSA methods increase with $\Gamma$, reflecting increased uncertainty under unmeasured confounding. \textbf{Bottom}: The sharpness of CSA.  The maximal shrinkage factor that preserves the nominal coverage is low, which suggests  CSA methods are relatively sharp. The error bar is by 100 independent trials.}
		\label{fig:shrink}
	}
\end{figure}

To further analyze the sharpness of CSA prediction, we manually shrink  the length of the predictive intervals by a constant factor for all the units and keep the interval centers unchanged. The maximum shrinkage factor without losing the target coverage reflects the sharpness. %
From the bottom panels in \Cref{fig:shrink}, we observe that the empirical coverage drops below the $1-\alpha$ level if the shrink factor is above 10\% and 15\% for homoscedastic and heteroscedastic data, respectively. %
The  maximal shrinkage factor being low means CSA methods produce relatively sharp intervals. %

In \Cref{fig:individual}, we visualize the  ITE estimates for multiple individuals. For each unit $i$, we compute the difference between the predictive interval and the true ITE as $\hat{C}^{\Gamma}(X_i) - \tau_i$ which contains 0 if and only if $\hat{C}^{\Gamma}(X_i)$ contains $\tau_i$. For each method, we consider two confounding strengths $\Gamma \in \{1,3\}$, set the coverage $1-\alpha=0.8$, and randomly sample 70 units. When there is unmeasured confounding, ITE-NUC produces a large fraction of intervals that do not contain the ITE, but CSA methods have a small fraction of undercovered intervals on average (less than $\alpha = 0.2$) for both confounding strengths.

\subsection{CSA for the ITE Estimation}
\label{sec:simulation-ite}
We further study when both $Y_i(1)$ and $Y_i(0)$ of a unit are unobserved. The outcome $Y(1)$ is generated according to \Cref{eq:exp1y1} and the observed outcome $Y(0)$ is generated by  ~\looseness=-1
\bac{
&Y_i(0) = \E[Y_i(0) \g T_i=0,X_i] + \epsilon_i, ~\epsilon_i \sim \cN(0,\sigma^2); \\
&\E[Y_i(0) \g T_i=0,X_i] = f(X_{i1})f(X_{i2}) + \frac{10\sin(X_{i3})}{1 + \exp(-5X_{i3})}, 
}
where $f(x)$ follows the definition in \Cref{eq:exp1y1}.   The construction of the counterfactual distribution $p(Y(0) | X=x, T=0)$  is similar to the single missing outcome case, the details of which are in \Cref{sec:additional-sim}.  We analyze  the Bonferroni correction and  the  nested approach wrapped around CSA-M, CSA-Q and ITE-NUC.  For the nested approach, following \citet{lei2020conformal}, we learn the mapping  $X \mapsto \hat{C}^{\Gamma}(X)$ in \Cref{sec:ite} by fitting  40\% quantile of the lower end point and 60\% quantile of the upper end point with quantile forest function in R package \texttt{grf}.  ~\looseness=-1

The results of coverage and interval length are shown in \Cref{sec:additional-sim} \Cref{fig:shrink-2}. The Bonferroni correction provides conservative interval estimates. In comparison, the interval estimation of CSA methods with the nested method are less conservative. Similar to \cref{sec:exp-missing},  ITE-NUC has poor coverage when the unconfoundedness is violated, but CSA methods have valid coverages across different levels of confounding strength.

\begin{figure}[t]
\centering{
\includegraphics[width=0.9\textwidth]{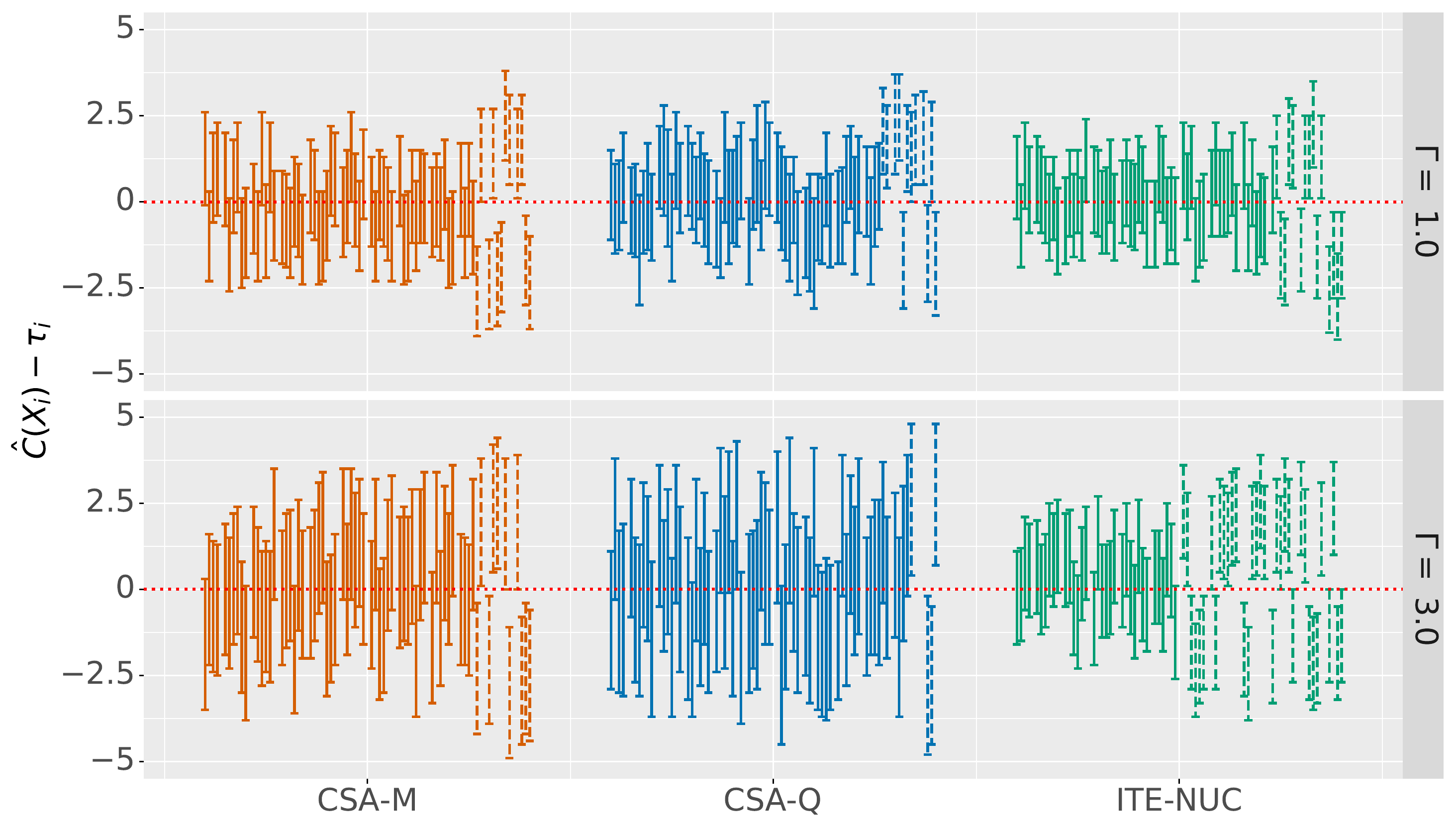} 
    \caption{The figure reports predictive intervals for random individuals with confounding strengths. %
 Each interval is the predictive interval minus the true ITE for one individual. Solid intervals contain 0 and dashed intervals do not contain 0. When $\Gamma=1$, all methods have similar coverage at $1-\alpha = 0.8$; when $\Gamma=3$, ITE-NUC has high miscoverage while CSA maintains a valid coverage.}
    \label{fig:individual}%
 }
\end{figure}

\begin{figure}[ht]
	\centering{
		\subfloat[]
		{{\includegraphics[width=0.31\textwidth]{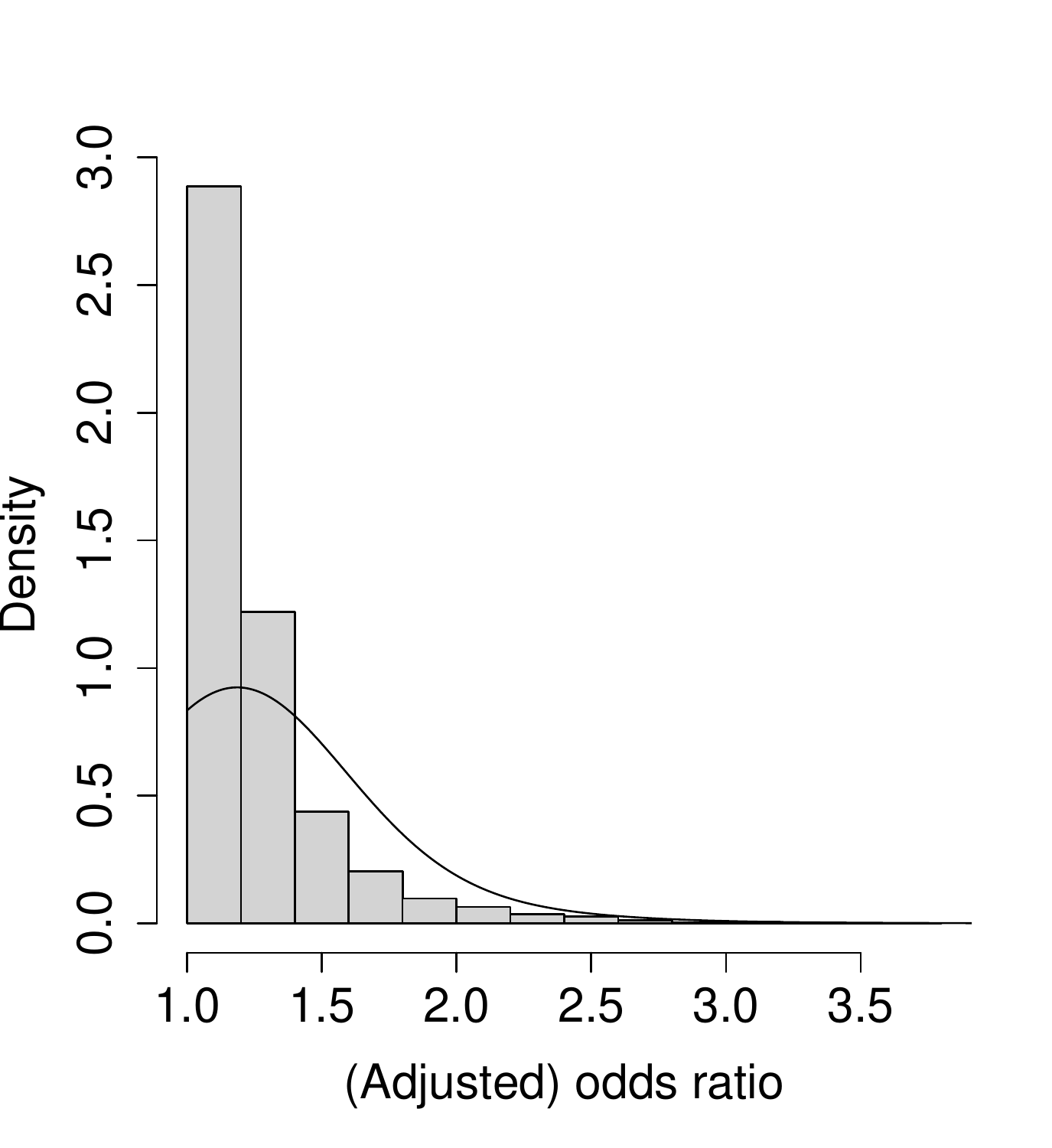} }}  \hspace*{-4mm}
		\subfloat[]
		{{\includegraphics[width=0.32\textwidth]{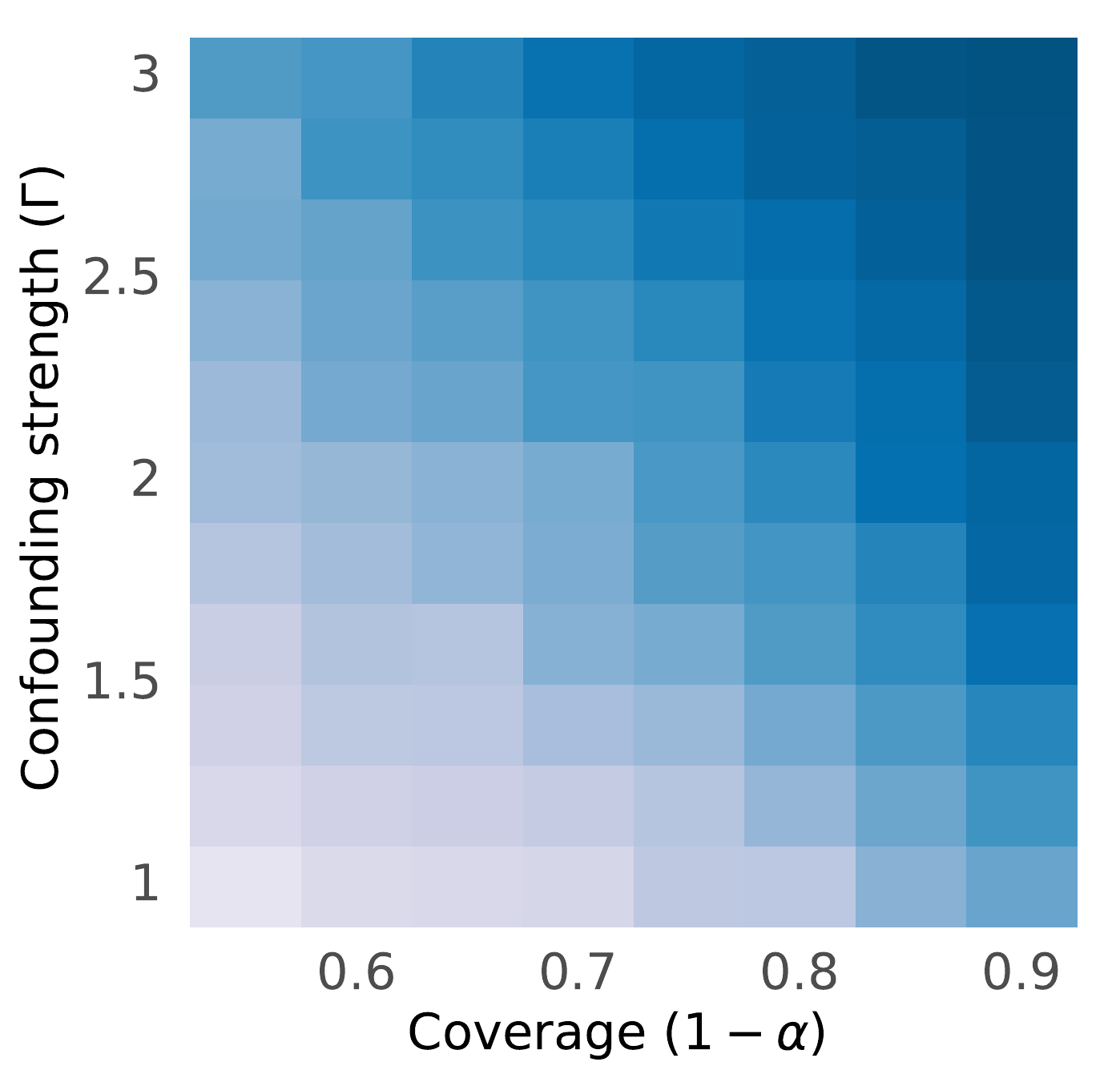} }}  ~
		\subfloat[]
		{{\includegraphics[width=0.43\textwidth]{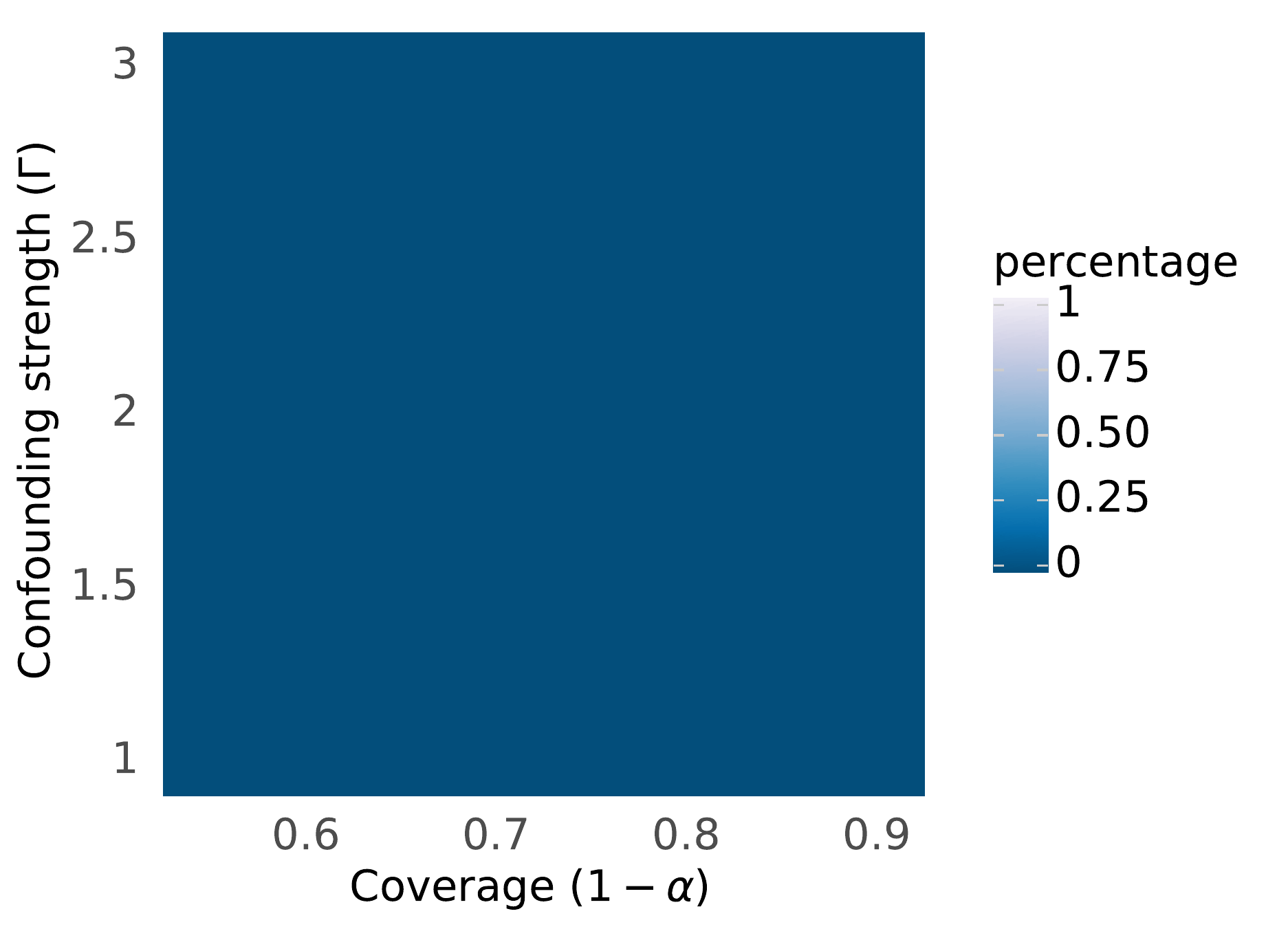} }} 
		\caption{ When the confounding strength is within the range of the study, we can say with high confidence that for a group of individuals, high fish consumption increases their blood mercury levels.
			The figures are produced using the NHANES fish consumption data. The predictive intervals are estimated by CSA-M. (a) provides reference information for the magnitude of the sensitivity parameter $\Gamma$ from the observed covariates. (b) shows the fraction of intervals with positive lower bounds; (c) shows the fraction of intervals with negative upper bounds. }
		\label{fig:fish}%
	}
\end{figure}
\subsection{Application: ITEs of Fish Consumption on Blood Mercury}

Finally, we illustrate the application of CSA  using survey responses from the National Health and Nutrition Examination Survey (NHANES) 2013-2014. The causal question we study is the effect of high fish consumption on individuals' blood mercury levels when there is potentially unmeasured confounding.

Following \citet{zhao2019sensitivity}, we define the high fish consumption as more than 12 servings of fish  a person consumes in the previous month and low fish consumption as 0 or 1 serving of fish. The outcome of interest is the blood mercury level, which is measured in ug/L and transformed to the  logarithmic scale. The dataset contains $n=1107$ units, where $80\%$ are randomly sampled as  training data and the rest $20\%$ are the target units. There are $p=8$ covariates about the demographics and health conditions \citep{zhao2018cross}.  We use random forest and quantile  forest to fit the observed outcome, the gradient boosting to estimate the propensity score and the nested method with quantile forest as the interval prediction function. %

\begin{figure}[ht]
	\centering{
		\includegraphics[width=0.9\textwidth]{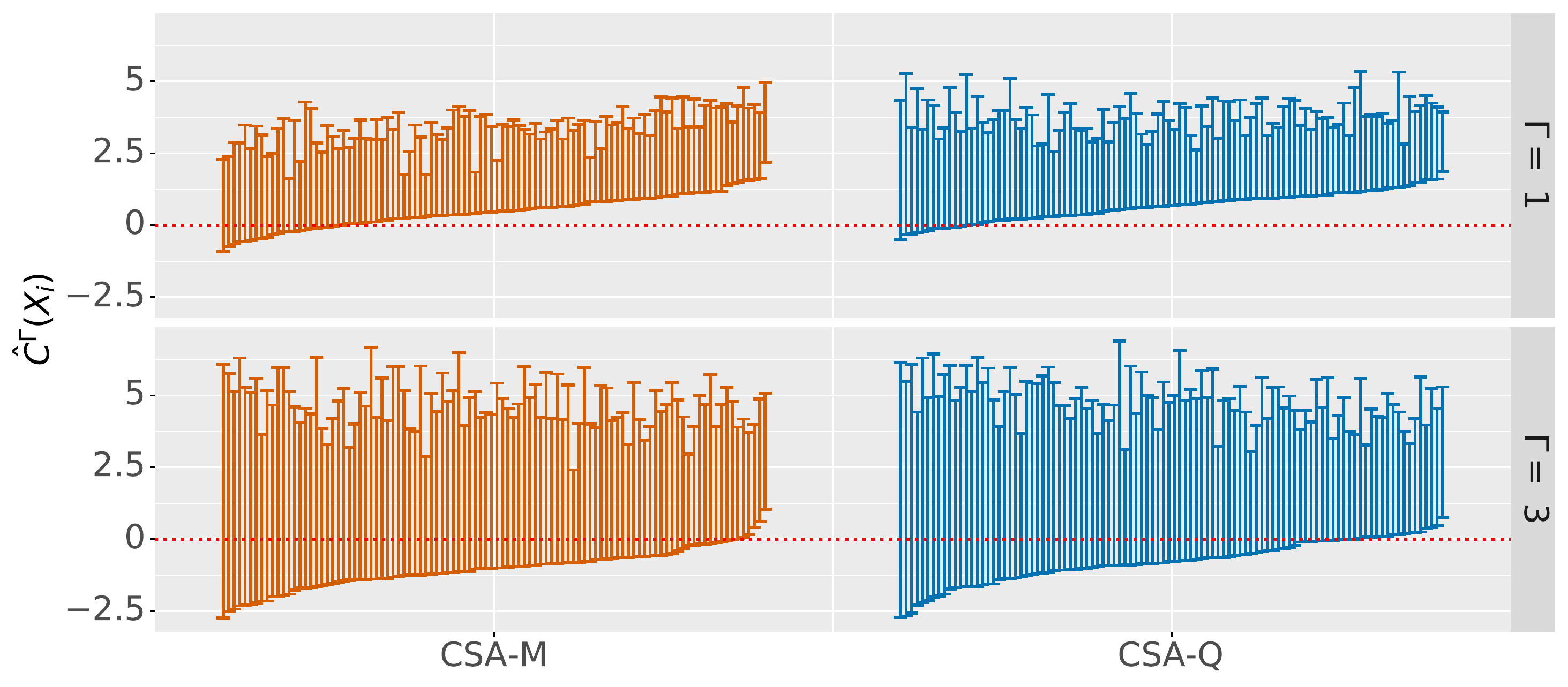} 
		\caption{\small Predictive intervals for the target individuals with different unmeasured confounding strength. In top panels, when $\Gamma=1$, a large fraction of individuals have positive effects. In the bottom panels, when $\Gamma =2$, we can still identify individuals whose effects remain positive.
		}
		\label{fig:individual-fish}%
	}
\end{figure}

We calibrate the sensitivity parameter $\Gamma$ with the observed data.  As discussed in \Cref{sec:practical}, we compute $\Gamma_{ij}$ as the effect of j-th covariate on the treatment assignment of the i-th unit in terms of odds ratio. \Cref{fig:fish} (a) shows the distribution of $\{\tilde{\Gamma}_{ij}\}_{i=1:n}^{j=1:p}$ where $\tilde{\Gamma}_{ij} $ equals to $ \Gamma_{ij}$ if $\Gamma_{ij}\geq 1$ and $1/\Gamma_{ij}$ otherwise. By  \Cref{fig:fish} (a), we may consider $\Gamma \in [1,3]$ as a plausible range of confounding strength. The choice of a proper sensitivity parameter often needs further domain knowledge in addition to the reference information from data.

For each target unit $k$, CSA produces an interval estimation $\hat{C}^{\Gamma}(X_k) = [l_k, u_k]$. We call $\hat{C}^{\Gamma}(X_k)$ a positive interval if $l_k > 0$, which represents a positive individual effect, and call $\hat{C}^{\Gamma}(X_k)$ a negative interval if $u_k < 0$.  %
\Cref{fig:fish} reports the fraction of positive and negative intervals in the target units against the target coverage $1-\alpha$ and the sensitivity parameter $\Gamma$. 
Overall, the fraction of positive intervals increases when the confounding strength and the target coverage decrease. There is a relatively strong evidence of positive effects when $\alpha \leq 0.2$ and $\Gamma \leq 2$, and there is no evidence of negative effects for $\alpha \leq 0.5$ and $\Gamma\leq 3$.

The results of individual-level estimates are reported in \Cref{fig:individual-fish}.  We randomly sample 70   individuals in the target set and show the predictive intervals of their ITEs with target coverage $1-\alpha=0.8$. In \Cref{fig:individual-fish}, the interval prediction of the treatment effect is heterogeneous across individuals. Under unconfoundedness, the ITEs for most individuals are likely to be positive. When  $\Gamma=2$, the effects of fish consumption  for some individuals are explained away by the unmeasured confounding. From \Cref{fig:individual-fish}, we can tell the subgroup for whom the effect of fish consumption on the blood mercury level is relatively insensitive to the unmeasured confounding.  The predictive intervals given by the sensitivity analysis  can thus provide useful  information to guide personal decisions on fish consumption.

\section{Discussion}

In this paper we developed a sensitivity analysis method for the ITE called CSA. %
We developed CSA by extending conformal inference to distribution shift. We adopted a two-stage design to propagate the uncertainty of an unmeasured confounding to the predictive interval of the ITE. We provided theoretical guarantees on the coverage property of the predictive interval, designed CSSA to improve the sharpness of CSA, and developed a rejection sampling method to evaluate the performance in simulation.
Finally, we analyze CSA using synthetic data and demonstrate its application in an observational study.

There are many directions for  future research. We  quantified the confounding strength by the MSM. Further research could explore alternative types of sensitivity models. If the nature of confounding is known, it might be preferable to model the effect of a confounder parametrically. We can also make a sensitivity assumption on the dependency structure between potential outcomes, which may improve the sharpness. Such dependencies can, for example, be modeled by a copula \citep{franks2019flexible,zheng2021copula}.
 Finally, the extended conformal prediction might be used to test other untestable assumptions, such as the invariant causal mechanism \citep{peters2016causal}. ~\looseness=-1

\FloatBarrier

\singlespacing
\bibliographystyle{plainnat-abb}

\bibliography{references}

\begin{thebibliography}{64}
\providecommand{\natexlab}[1]{#1}
\providecommand{\url}[1]{\texttt{#1}}
\expandafter\ifx\csname urlstyle\endcsname\relax
  \providecommand{\doi}[1]{doi: #1}\else
  \providecommand{\doi}{doi: \begingroup \urlstyle{rm}\Url}\fi

\bibitem[Amin-Chowdhury and Ladhani(2021)]{amin2021causation}
Amin-Chowdhury, Z. and Ladhani, S.N.
\newblock Causation or confounding: why controls are critical for
  characterizing long covid.
\newblock \emph{Nature Medicine}, pages 1--2, 2021.

\bibitem[Angelopoulos and Bates(2021{\natexlab{a}})]{Angelopoulos2021}
Angelopoulos, A.N. and Bates, S.
\newblock A gentle introduction to conformal prediction and {Distribution-Free}
  uncertainty quantification.
\newblock 2021{\natexlab{a}}.

\bibitem[Angelopoulos and Bates(2021{\natexlab{b}})]{Angelopoulos2021-ur}
Angelopoulos, A.N. and Bates, S.
\newblock A gentle introduction to conformal prediction and {Distribution-Free}
  uncertainty quantification.
\newblock July 2021{\natexlab{b}}.

\bibitem[Barber et~al.(2021{\natexlab{a}})Barber, Candes, Ramdas, and
  Tibshirani]{barber2019limits}
Barber, R., Candes, E., Ramdas, A., and Tibshirani, R.
\newblock The limits of distribution-free conditional predictive inference.
\newblock \emph{Inf inference}, 10\penalty0 (2):\penalty0 455--482,
  2021{\natexlab{a}}.

\bibitem[Barber et~al.(2021{\natexlab{b}})Barber, Candes, Ramdas, and
  Tibshirani]{barber2021predictive}
Barber, R.F., Candes, E.J., Ramdas, A., and Tibshirani, R.J.
\newblock Predictive inference with the jackknife+.
\newblock \emph{Ann Stat}, 49\penalty0 (1):\penalty0 486--507,
  2021{\natexlab{b}}.

\bibitem[Blackwell(2014)]{blackwell2014selection}
Blackwell, M.
\newblock A selection bias approach to sensitivity analysis for causal effects.
\newblock \emph{Polit Anal}, 22\penalty0 (2):\penalty0 169--182, 2014.

\bibitem[Breiman(2001)]{breiman2001random}
Breiman, L.
\newblock Random forests.
\newblock \emph{Machine learning}, 45\penalty0 (1):\penalty0 5--32, 2001.

\bibitem[Brook(1964)]{brook1964distinction}
Brook, D.
\newblock On the distinction between the conditional probability and the joint
  probability approaches in the specification of nearest-neighbour systems.
\newblock \emph{Biometrika}, 1964.

\bibitem[Brumback et~al.(2004)Brumback, Hern{\'a}n, Haneuse, and
  Robins]{brumback2004sensitivity}
Brumback, B.A., Hern{\'a}n, M.A., Haneuse, S.J., and Robins, J.M.
\newblock Sensitivity analyses for unmeasured confounding assuming a marginal
  structural model for repeated measures.
\newblock \emph{Stat Med}, 23\penalty0 (5):\penalty0 749--767, 2004.

\bibitem[Cand{\`e}s et~al.(2021)Cand{\`e}s, Lei, and
  Ren]{candes2021conformalized}
Cand{\`e}s, E.J., Lei, L., and Ren, Z.
\newblock Conformalized survival analysis.
\newblock \emph{arXiv:2103.09763}, 2021.

\bibitem[Chipman et~al.(2010)Chipman, George, McCulloch,
  et~al.]{chipman2010bart}
Chipman, H.A., George, E.I., McCulloch, R.E., et~al.
\newblock {BART}: {B}ayesian additive regression trees.
\newblock \emph{Ann Appl Stat}, 4\penalty0 (1):\penalty0 266--298, 2010.

\bibitem[Cinelli and Hazlett(2020)]{cinelli2020making}
Cinelli, C. and Hazlett, C.
\newblock Making sense of sensitivity: Extending omitted variable bias.
\newblock \emph{JRSS-B}, 2020.

\bibitem[Cornfield et~al.(1959)Cornfield, Haenszel, Hammond, Lilienfeld,
  Shimkin, and Wynder]{cornfield1959smoking}
Cornfield, J., Haenszel, W., Hammond, E.C., Lilienfeld, A.M., Shimkin, M.B.,
  and Wynder, E.L.
\newblock Smoking and lung cancer: recent evidence and a discussion of some
  questions.
\newblock \emph{J Natl Cancer Inst}, 22\penalty0 (1):\penalty0 173--203, 1959.

\bibitem[Ding and VanderWeele(2016)]{ding2016sensitivity}
Ding, P. and VanderWeele, T.J.
\newblock Sensitivity analysis without assumptions.
\newblock \emph{Epidemiology (Cambridge, Mass.)}, 27\penalty0 (3):\penalty0
  368, 2016.

\bibitem[Dorie et~al.(2016)Dorie, Harada, Carnegie, and
  Hill]{dorie2016flexible}
Dorie, V., Harada, M., Carnegie, N.B., and Hill, J.
\newblock A flexible, interpretable framework for assessing sensitivity to
  unmeasured confounding.
\newblock \emph{Stat Med}, 35\penalty0 (20):\penalty0 3453--3470, 2016.

\bibitem[Dorn and Guo(2021)]{Dorn2021-vs}
Dorn, J. and Guo, K.
\newblock Sharp sensitivity analysis for inverse propensity weighting via
  quantile balancing.
\newblock 2021.

\bibitem[Franks et~al.(2019)Franks, D’Amour, and Feller]{franks2019flexible}
Franks, A., D’Amour, A., and Feller, A.
\newblock Flexible sensitivity analysis for observational studies without
  observable implications.
\newblock \emph{JASA}, 2019.

\bibitem[Franks et~al.(2016)Franks, Airoldi, and Rubin]{franks2016non}
Franks, A.M., Airoldi, E.M., and Rubin, D.B.
\newblock Non-standard conditionally specified models for non-ignorable missing
  data.
\newblock \emph{arXiv}, \penalty0 (1603.06045), 2016.

\bibitem[Greenland et~al.(1999)Greenland, Pearl, and
  Robins]{greenland1999confounding}
Greenland, S., Pearl, J., and Robins, J.M.
\newblock Confounding and collapsibility in causal inference.
\newblock \emph{Stat. Sci.}, 14\penalty0 (1):\penalty0 29--46, 1999.

\bibitem[Haas et~al.(2021)Haas, Angulo, McLaughlin, Anis, Singer, Khan, Brooks,
  Smaja, Mircus, Pan, et~al.]{haas2021impact}
Haas, E.J., Angulo, F.J., McLaughlin, J.M., Anis, E., Singer, S.R., Khan, F.,
  Brooks, N., Smaja, M., Mircus, G., Pan, K., et~al.
\newblock Impact and effectiveness of mrna bnt162b2 vaccine against sars-cov-2
  infections and covid-19 cases, hospitalisations, and deaths following a
  nationwide vaccination campaign in israel: an observational study using
  national surveillance data.
\newblock \emph{The Lancet}, 397\penalty0 (10287):\penalty0 1819--1829, 2021.

\bibitem[Hernan and Robins(2010)]{hernan2010causal}
Hernan, M.A. and Robins, J.M.
\newblock \emph{Causal inference}.
\newblock CRC Boca Raton, FL;, 2010.

\bibitem[Hoff(2021)]{Hoff2021-hk}
Hoff, P.
\newblock Bayes-optimal prediction with frequentist coverage control.
\newblock May 2021.

\bibitem[Holland(1986)]{holland1986statistics}
Holland, P.W.
\newblock Statistics and causal inference.
\newblock \emph{JASA}, 81\penalty0 (396):\penalty0 945--960, 1986.

\bibitem[Hong et~al.(2021)Hong, Yang, and Qin]{hong2021did}
Hong, G., Yang, F., and Qin, X.
\newblock Did you conduct a sensitivity analysis? a new weighting-based
  approach for evaluations of the average treatment effect for the treated.
\newblock \emph{JRSS-A}, 2021.

\bibitem[Hsu and Small(2013)]{hsu2013calibrating}
Hsu, J.Y. and Small, D.S.
\newblock Calibrating sensitivity analyses to observed covariates in
  observational studies.
\newblock \emph{Biometrics}, 69\penalty0 (4):\penalty0 803--811, 2013.

\bibitem[Imai and Ratkovic(2014)]{imai2014covariate}
Imai, K. and Ratkovic, M.
\newblock Covariate balancing propensity score.
\newblock \emph{JRSS-B}, 2014.

\bibitem[Imbens(2003)]{imbens2003sensitivity}
Imbens, G.W.
\newblock Sensitivity to exogeneity assumptions in program evaluation.
\newblock \emph{American Economic Review}, 93\penalty0 (2):\penalty0 126--132,
  2003.

\bibitem[Izbicki et~al.(2020)Izbicki, Shimizu, and Stern]{izbicki2020flexible}
Izbicki, R., Shimizu, G., and Stern, R.
\newblock Flexible distribution-free conditional predictive bands using density
  estimators.
\newblock In \emph{AISTATS}, 2020.

\bibitem[Jesson et~al.(2021)Jesson, Mindermann, Gal, and
  Shalit]{jesson2021quantifying}
Jesson, A., Mindermann, S., Gal, Y., and Shalit, U.
\newblock Quantifying ignorance in individual-level causal-effect estimates
  under hidden confounding.
\newblock \emph{arXiv:2103.04850}, 2021.

\bibitem[{Jin} et~al.(2021){Jin}, {Ren}, and {Cand{\`e}s}]{2021jin}
{Jin}, Y., {Ren}, Z., and {Cand{\`e}s}, E.J.
\newblock {Sensitivity Analysis of Individual Treatment Effects: A Robust
  Conformal Inference Approach}.
\newblock \emph{arXiv:2111.12161}, 2021.

\bibitem[Kallus et~al.(2019)Kallus, Mao, and Zhou]{kallus2019interval}
Kallus, N., Mao, X., and Zhou, A.
\newblock Interval estimation of individual-level causal effects under
  unobserved confounding.
\newblock In \emph{AISTATS}, 2019.

\bibitem[Kivaranovic et~al.(2020{\natexlab{a}})Kivaranovic, Johnson, and
  Leeb]{kivaranovic2020adaptive}
Kivaranovic, D., Johnson, K.D., and Leeb, H.
\newblock Adaptive, distribution-free prediction intervals for deep networks.
\newblock In \emph{AISTATS}, 2020{\natexlab{a}}.

\bibitem[Kivaranovic et~al.(2020{\natexlab{b}})Kivaranovic, Ristl, Posch, and
  Leeb]{kivaranovic2020conformal}
Kivaranovic, D., Ristl, R., Posch, M., and Leeb, H.
\newblock Conformal prediction intervals for the individual treatment effect.
\newblock \emph{arXiv:2006.01474}, 2020{\natexlab{b}}.

\bibitem[Lei and Wasserman(2014)]{lei2014distribution}
Lei, J. and Wasserman, L.
\newblock Distribution-free prediction bands for non-parametric regression.
\newblock \emph{JRSS-B}, pages 71--96, 2014.

\bibitem[Lei et~al.(2015)Lei, Rinaldo, and Wasserman]{lei2015conformal}
Lei, J., Rinaldo, A., and Wasserman, L.
\newblock A conformal prediction approach to explore functional data.
\newblock \emph{Ann. Math. Artif. Intell}, 74\penalty0 (1):\penalty0 29--43,
  2015.

\bibitem[Lei et~al.(2018)Lei, G’Sell, Rinaldo, Tibshirani, and
  Wasserman]{lei2018distribution}
Lei, J., G’Sell, M., Rinaldo, A., Tibshirani, R.J., and Wasserman, L.
\newblock Distribution-free predictive inference for regression.
\newblock \emph{JASA}, 113\penalty0 (523):\penalty0 1094--1111, 2018.

\bibitem[Lei and Cand\`{e}s(2021)]{lei2020conformal}
Lei, L. and Cand\`{e}s, E.
\newblock Conformal inference of counterfactuals and individual treatment
  effects.
\newblock \emph{JRSS-B}, 2021.

\bibitem[Neyman(1923)]{neyman1923applications}
Neyman, J.
\newblock Sur les applications de la th{\'e}orie des probabilit{\'e}s aux
  experiences agricoles: Essai des principes.
\newblock \emph{Roczniki Nauk Rolniczych}, 10:\penalty0 1--51, 1923.

\bibitem[Nie et~al.(2021)Nie, Imbens, and Wager]{nie2021covariate}
Nie, X., Imbens, G., and Wager, S.
\newblock Covariate balancing sensitivity analysis for extrapolating randomized
  trials across locations.
\newblock \emph{arXiv:2112.04723}, 2021.

\bibitem[Papadopoulos(2008)]{papadopoulos2008inductive}
Papadopoulos, H.
\newblock \emph{Inductive conformal prediction: Theory and application to
  neural networks}.
\newblock INTECH Open Access Publisher Rijeka, 2008.

\bibitem[Papadopoulos et~al.(2002)Papadopoulos, Proedrou, Vovk, and
  Gammerman]{papadopoulos2002inductive}
Papadopoulos, H., Proedrou, K., Vovk, V., and Gammerman, A.
\newblock Inductive confidence machines for regression.
\newblock In \emph{ECML}, 2002.

\bibitem[Peters et~al.(2016)Peters, B{\"u}hlmann, and
  Meinshausen]{peters2016causal}
Peters, J., B{\"u}hlmann, P., and Meinshausen, N.
\newblock Causal inference by using invariant prediction: {I}dentification and
  confidence intervals.
\newblock \emph{JRSS-B}, pages 947--1012, 2016.

\bibitem[Robins et~al.(2000)Robins, Rotnitzky, and
  Scharfstein]{robins2000sensitivity}
Robins, J.M., Rotnitzky, A., and Scharfstein, D.O.
\newblock Sensitivity analysis for selection bias and unmeasured confounding in
  missing data and causal inference models.
\newblock In \emph{Statistical models in epidemiology, the environment, and
  clinical trials}. Springer, 2000.

\bibitem[Romano et~al.(2019)Romano, Patterson, and
  Cand\`{e}s]{romano2019conformalized}
Romano, Y., Patterson, E., and Cand\`{e}s, E.
\newblock Conformalized quantile regression.
\newblock 2019.

\bibitem[Rosenbaum and Rubin(1983{\natexlab{a}})]{rosenbaum1983assessing}
Rosenbaum, P.R. and Rubin, D.B.
\newblock Assessing sensitivity to an unobserved binary covariate in an
  observational study with binary outcome.
\newblock \emph{JRSS-B}, 45\penalty0 (2):\penalty0 212--218,
  1983{\natexlab{a}}.

\bibitem[Rosenbaum and Rubin(1983{\natexlab{b}})]{rosenbaum1983central}
Rosenbaum, P.R. and Rubin, D.B.
\newblock The central role of the propensity score in observational studies for
  causal effects.
\newblock \emph{Biometrika}, 70\penalty0 (1):\penalty0 41--55,
  1983{\natexlab{b}}.

\bibitem[Rosenbaumn(2002)]{rosenbaumn2002observational}
Rosenbaumn, P.R.
\newblock \emph{Observational Studies (2nd ed.)}.
\newblock Springer, New York, 2002.

\bibitem[Rubin(1974)]{rubin1974estimating}
Rubin, D.B.
\newblock Estimating causal effects of treatments in randomized and
  nonrandomized studies.
\newblock \emph{J. Educ. Psychol.}, 66\penalty0 (5):\penalty0 688, 1974.

\bibitem[Rubin(1980)]{rubin1980randomization}
Rubin, D.B.
\newblock Randomization analysis of experimental data: The fisher randomization
  test comment.
\newblock \emph{JASA}, 75\penalty0 (371):\penalty0 591--593, 1980.

\bibitem[Scharfstein et~al.(1999)Scharfstein, Rotnitzky, and
  Robins]{scharfstein1999adjusting}
Scharfstein, D.O., Rotnitzky, A., and Robins, J.M.
\newblock Adjusting for nonignorable drop-out using semiparametric nonresponse
  models.
\newblock \emph{JASA}, 94\penalty0 (448):\penalty0 1096--1120, 1999.

\bibitem[Sesia and Cand{\`e}s(2020)]{sesia2020comparison}
Sesia, M. and Cand{\`e}s, E.
\newblock A comparison of some conformal quantile regression methods.
\newblock \emph{Stat}, 2020.

\bibitem[Sesia and Romano(2021)]{Sesia2021-ei}
Sesia, M. and Romano, Y.
\newblock Conformal prediction using conditional histograms.
\newblock 2021.

\bibitem[Shafer and Vovk(2008)]{shafer2008tutorial}
Shafer, G. and Vovk, V.
\newblock A tutorial on conformal prediction.
\newblock \emph{JMLR}, 9\penalty0 (3), 2008.

\bibitem[Tan(2006)]{tan2006distributional}
Tan, Z.
\newblock A distributional approach for causal inference using propensity
  scores.
\newblock \emph{JASA}, 101\penalty0 (476):\penalty0 1619--1637, 2006.

\bibitem[Tibshirani et~al.(2019)Tibshirani, Foygel~Barber, Candes, and
  Ramdas]{tib2019conformal}
Tibshirani, R.J., Foygel~Barber, R., Candes, E., and Ramdas, A.
\newblock Conformal prediction under covariate shift.
\newblock In \emph{NeurIPS}, 2019.

\bibitem[Veitch and Zaveri(2020)]{veitch2020sense}
Veitch, V. and Zaveri, A.
\newblock Sense and sensitivity analysis: Simple post-hoc analysis of bias due
  to unobserved confounding.
\newblock In \emph{NeurIPS}, 2020.

\bibitem[Vovk(2012)]{vovk2012conditional}
Vovk, V.
\newblock Conditional validity of inductive conformal predictors.
\newblock In \emph{ACML}, 2012.

\bibitem[Vovk et~al.(2005)Vovk, Gammerman, and Shafer]{vovk2005algorithmic}
Vovk, V., Gammerman, A., and Shafer, G.
\newblock \emph{Algorithmic learning in a random world}.
\newblock Springer Science \& Business Media, 2005.

\bibitem[Vovk et~al.(2009)Vovk, Nouretdinov, Gammerman, et~al.]{vovk2009line}
Vovk, V., Nouretdinov, I., Gammerman, A., et~al.
\newblock On-line predictive linear regression.
\newblock \emph{Ann Stat}, 37\penalty0 (3):\penalty0 1566--1590, 2009.

\bibitem[Wager and Athey(2018)]{wager2018estimation}
Wager, S. and Athey, S.
\newblock Estimation and inference of heterogeneous treatment effects using
  random forests.
\newblock \emph{JASA}, 2018.

\bibitem[Yadlowsky et~al.(2018)Yadlowsky, Namkoong, Basu, Duchi, and
  Tian]{yadlowsky2018bounds}
Yadlowsky, S., Namkoong, H., Basu, S., Duchi, J., and Tian, L.
\newblock Bounds on the conditional and average treatment effect with
  unobserved confounding factors.
\newblock \emph{arXiv:1808.09521}, 2018.

\bibitem[Zhao et~al.(2018)Zhao, Small, and Rosenbaum]{zhao2018cross}
Zhao, Q., Small, D., and Rosenbaum, P.
\newblock Cross-screening in observational studies that test many hypotheses.
\newblock \emph{JASA}, 2018.

\bibitem[Zhao et~al.(2019)Zhao, Small, and Bhattacharya]{zhao2019sensitivity}
Zhao, Q., Small, D., and Bhattacharya, B.
\newblock Sensitivity analysis for inverse probability weighting estimators via
  the percentile bootstrap.
\newblock \emph{JRSS-B}, 81\penalty0 (4):\penalty0 735--761, 2019.

\bibitem[Zheng et~al.(2021)Zheng, D'Amour, and Franks]{zheng2021copula}
Zheng, J., D'Amour, A., and Franks, A.
\newblock Copula-based sensitivity analysis for multi-treatment causal
  inference with unobserved confounding.
\newblock \emph{arXiv:2102.09412}, 2021.

\end{thebibliography}

\newpage
\pagenumbering{arabic}
\setcounter{page}{1}
\FloatBarrier
\bigskip
\spacingset{1.9}
\begin{center}
{\large\bf SUPPLEMENTARY MATERIAL}
\end{center}

\FloatBarrier
\appendix
\renewcommand{\thesection}{\Alph{section}}

\section{Proofs}

\begin{proof}[Proof of Lemma \ref{lm:sscore}]
By \Cref{eq:weight-raw}, we have
\ba{
w_t(X,y) =& \frac{p(x)p(Y(t)=y \g X)}{p(X \g T=t)p(Y(t) \g X,T=t)} \notag \\
=& \frac{p(T=t)}{p(T=t  \g  X)} \Big[ p(T=t \g X) + \frac{p(Y(t)=y \g X,T=1-t)}{p(Y(t)=y \g X,T=t)} \cdot p(T=1-t  \g  X) \Big].
\label{eq:lm-aux1}
}
Direct computation from \Cref{eq:sscore} shows
\ba{
\Big( \frac{1-s_t(X,y)}{s_t(X,y)} \frac{e(X)}{1-e(X)}\Big)^{2t-1} =  \frac{p(Y(t)=y \g X,T=1-t)}{p(Y(t)=y \g X,T=t)}. 
\label{eq:lm-aux2}
}
Plug \Cref{eq:lm-aux2} to \Cref{eq:lm-aux1}, we have
\bas{
w_t(X,y) =& p(T=t) \Big[ 1 +  \Big( \frac{1-s_t(X,y)}{s_t(X,y)}\Big)^{2t-1}  \Big] \\
=& p(T=t) \frac{1}{s_t(X,y)^t(1-s_t(X,y))^{1-t}} \\
\propto & \frac{1}{p(T=t \g X,Y(t)=y)}.
}
By Tukey's factorization in \Cref{eq:sscore}, it is straightforward to obtain the following equality
\ba{
\frac{p(Y(t)=y | X=x, T=1)}{p(Y(t)=y | X=x, T=0)} = \text{OR}(s_t(x,y), e(x)).
\label{eq:equiv}
}
Therefore by the MSM, $1/\Gamma \leq p(Y(t)=y | X=x, T=1)/p(Y(t)=y | X=x, T=0) \leq \Gamma$. ~\looseness=-1
\end{proof}

\begin{proof}[Proof of Lemma \ref{lm:range}]

By \Cref{eq:sscore}, when $t=1$, we have
\bas{
s_1(X,y)^{-1} =& ~1+ \frac{1-e(X)}{e(X)} \frac{p(Y(t)=y \g X,T=0)}{p(Y(t)=y \g X,T=1)}  \\
\in & ~\big[1 + \frac{1-e(X)}{\Gamma e(X)}, ~1 + \Gamma \frac{1-e(X)}{ e(X)}\big], 
}
where the interval is obtained from the density ratio bound in \Cref{eq:y-model}. 

Similarly, for $t=0$,
\bas{
s_0(X,y)^{-1} =& ~1+ \frac{e(X)}{1-e(X)} \frac{p(Y(t)=y \g X,T=1)}{p(Y(t)=y \g X,T=0)}  \\
\in & ~\big[1 + \frac{e(X)}{\Gamma (1-e(X))}, ~1 + \Gamma \frac{e(X)}{1-e(X)}\big].
}
With the above ranges of selection score, by  \Cref{lm:sscore}, we get
\bas{
w_t(x,y) \in \Big[\Big(1+\frac{1}{\Gamma }\big(\frac{1-e(x)}{e(x)}\big)^{2t-1}\Big)p(T=t),~ \Big(1+\Gamma\big(\frac{(1-e(x))}{ e(x)}\big)^{2t-1}\Big)p(T=t) \Big].
}
\end{proof}

\begin{proof}[Proof of Proposition \ref{prop:union}]
By \Cref{def:interval}, for true data generating distribution with $s_t^* \in \mathcal{E}(\Gamma)$
\bas{
\bP_{(X,Y(t))\sim p(X)p^{(s_t^*)}(Y(t) \g X)}(Y(t) \in [L,U]) \geq & \bP_{(X,Y(t))\sim p(X)p^{(s_t^*)}(Y(t) \g X)}(Y(t) \in [L^{(h_0)},U^{(h_0)}]  \\
\geq & 1-\alpha,
}
where the first inequality is true because $[L^{(s_t^*)},U^{(s_t^*)}] \subset [L,U]$.
\end{proof}

\begin{proof}[Proof of Proposition \ref{prop:qt-opt}]
Recall that the threshold m is the maximum value in $\{1,\cdots,n+1\}$ which satisfies: (i) $\sum_{j=m}^{n+1}p_j \geq \alpha$; (ii) $w_j = w_{hi}(X_j)~\text{for}~j \geq m$; (iii) $w_j = w_{lo}(X_j)~\text{for}~ j < m$. 

We first prove the second claim about the objective value. Suppose the conformal weights are $\hat{\wv} = (w_{lo}(X_1), \cdots, w_{lo}(X_{m-1}), w_{hi}(X_{m}), \cdots, w_{hi}(X_{n+1}))$, and the normalized weight is $\hat{p}_i = \hat{w}_i / \sum_{i=1}^{n} \hat{w}_i$. We prove that $Q_{1-\alpha}(\sum_{i=1}^n \hat{p}_i \delta_{V_i}) = V_m$. By definition
\ba{
Q_{1-\alpha}(V) =& \inf\{v: p(V\leq v) \geq 1-\alpha\}  \notag \\
=& \inf\{v: p(V > v)\leq \alpha\}. 
\label{eq:qt-def}
}
Plug in $V_m$, for $V \sim \sum_{i=1}^n \hat{p}_i \delta_{V_i}$, we have
\bas{
p(V > V_m) =& \sum_{i=m+1}^n \hat{p}_i \\
=& \frac{\sum_{i=m+1}^{n+1}{w_{hi}(X_i)}}{\sum_{i=1}^{m-1}{w_{lo}(X_i) +  \sum_{i=m}^{n+1}}{w_{hi}(X_i)}} \\
< &\frac{\sum_{i=m+1}^{n+1}{w_{hi}(X_i)}}{\sum_{i=1}^{m}{w_{lo}(X_i) + \sum_{i=m+1}^{n+1}}{w_{hi}(X_i)}} \\
< & \alpha.
}
The first inequality is because $w_{hi}(X_m) > w_{lo}(X_m) $. The second inequality is because of the definition of $m$, that is, if the inequality is not true, the maximum index satisfying (i)-(iii) is at least $m+1$, not $m$. 

On the other hand, for $\forall~v < V_m$, 
\bas{
p(V > v) ~\geq~  p(V\geq V_m)  ~=~ \sum_{i=m}^{n+1} \hat{p}_i  ~\geq~ \alpha,
}
where the last inequality is by definition of $m$. Then, by the quantile definition in \Cref{eq:qt-def},  $$Q_{1-\alpha}(\sum_{i=1}^n \hat{p}_i \delta_{V_i}) = V_m.$$

Next,  we prove $\hat{\wv}$ is the optima by contradiction. Suppose $\exists~w_{1:n+1}$ such that $Q_{1-\alpha}(\sum_{i=1}^{n+1}p_i \delta_{V_i}) = V_{M} > V_m$, where $w_{1:n+1}$ satisfy the optimization constraints, and $p_i = w_i/\sum_{i=1}^{n+1}w_i$. Since $V_{1:n+1}$ are ordered, we have $M > m$. By definition of quantile, we have $\sum_{i=M}^{n+1}p_i > \alpha$. Then
\bas{
\alpha <& \sum_{i=M}^{n+1}p_i \\
=& \frac{\sum_{i: i \geq M} w_i}{\sum_{i: i<M} w_i + \sum_{i: i\geq M}w_i} \\
\leq & \frac{\sum_{i: i \geq M} w_i}{\sum_{i: i<M} w_{lo}(X_i) + \sum_{i: i\geq M}w_i}  \\
\leq & \frac{\sum_{i: i \geq M} w_{hi}(X_i)}{\sum_{i: i<M} w_{lo}(X_i) + \sum_{i: i\geq M}w_{hi}(X_i)} 
}
The second to last inequality is because of the constraints $w_i \geq w_{lo}(X_i)$, and the last inequality is because of the fact that $y = x/(a+x), ~a>0$ is a monotonically increasing function of $x$ for $x>0$. Therefore, the maximum value satisfying (i)-(iii) is at least $M$ with $M>m$, which contradicts with the definition of $m$ being the maximum.  
\end{proof}

\begin{proof}[Proof of  Lemma \ref{thm:condition}]

For completeness, we first re-introduce the weighted exchangeability from \citet[Definition~1]{tib2019conformal}.  

\begin{definition} 
The variables $Z_{1:n}$ are weighted exchangeable if the joint distribution factorizes 
\ba{
f(Z_{1:n}) = \big(\prod_{i=1}^n \pi_i(Z_i)\big) \cdot g(Z_{1:n}),
} 
where $g(Z_1,\cdots, Z_n) = g(Z_{\sigma(1)},\cdots, Z_{\sigma(n)})$ for any permutation $\sigma(\cdot)$, and $\pi_i(\cdot)$ are adjusting functions. 
\label{def:exchange}
\end{definition}

We prove the result for estimated propensity score $\hat{e}(X)$. The result for $ \hat{e}(X)=e(X)$ follows as a special case. For notational convenience, we first consider the prediction band for the outcome $Y(1)$ if the unit were treated; we suppress the superscript that involves the sensitivity model since here we always consider the DGP under a fixed sensitivity model.  Denote \ba{h(x,y) \vcentcolon  = \log(\text{OR}(s_t(x,y), e(x)))= \text{logit}(s_t(x,y)) - \text{logit}(e(x)). \label{eq:h-model}} By the assumption of the MSM, $|h(x,y)| \leq \log \Gamma$. 

Consider the preliminary data $\Z_{\text{pre}}$ and the calibration data $\Z_{\text{cal}} = Z_{1:n}$ that are i.i.d. data points sampled from $p(X \g T=1) p(Y(1) \g X,T=1)$, and the target data $Z_{n+1} \sim Q(X,Y(1)) = p(X) p(Y(1) \g X)$. By \Cref{lm:sscore,eq:sscore,eq:h-model}, the  weights
\ba{
w_1(x,y) =& \frac{p(T=1)}{s_1(x,y)} \notag \\
 =& p(T=1) \Big(  1+ \frac{1-e(x)}{e(x)} \exp(-h(x,y)) \Big).
 \label{eq:thm1-pf-1}
}
The estimated conformal weights are 
\ba{
\hat{w}_1(x,y) = p(T=1) \big(  1+ \frac{1-\hat{e}(x)}{\hat{e}(x)} \exp (-h(x,y))\big).
\label{eq:thm1-pf-2}
} 
Redefine $\hat{w}_1(x,y)$ as $\hat{w}_1(x,y)/\E[\hat{w}_1(x,y)]$ where the expectation is over $p(X=x \g T=1) p(Y(1)=y \g x,T=1)$. Define $\tilde{Q}(X,Y(1)) = \hat{w}_1(X,Y(1))p(X \g T=1) p(Y(1) \g X,T=1)$. Since $\tilde{Q}(X,Y(1))$ is positive and integrates to 1, it is a density function. Consider $\tilde{Z}_{n+1}=(\tilde{X}_{n+1}, \tilde{Y}_{n+1}(1)) \sim \tilde{Q}(X,Y(1))$ and let $\tilde{Z}_{1:n} \overset{d}{=}  Z_{1:n}$.  By \Cref{def:exchange}, $\tilde{Z}_{1:n+1}$ are weighted exchangeable with adjusting function $(1,\cdots,1,\hat{w}_1)$. For the band $\hat{C}_1^\Gamma(x)$ in \Cref{eq:h-interval},
\bas{ 
\mathbb{P}(\tilde{Y}_{n+1}(1)\in \hat{C}_1^\Gamma(\tilde{X}_{n+1})  \g   \Z_{\text{pre}})
=~\mathbb{P}(\tilde{V}_{n+1} \leq Q_{1-\alpha}\big(   \sum_{i=1}^n \hat{p}_i \delta_{\tilde{V}_i} + \hat{p}_{n+1} \delta_{\infty}    \big)  \g  \Z_{\text{pre}}) 
}
where $\tilde{V}_i = V(\tilde{Z}_i)$, $\hat{p}_i = \hat{w}_1(\tilde{Z}_i) /\sum_{i=1}^{n+1} \hat{w}_1(\tilde{Z}_i)$.  By \citet[Lemma 3]{tib2019conformal} and its proof,  the right hand side of \Cref{eq:thm1-pf-1} is within $[1-\alpha, 1-\alpha+\max_{i}\{\hat{p}_{i}\}_{i=1}^{n+1}]$.  Therefore 
\ba{
1-\alpha \leq \mathbb{P}(\tilde{Y}_{n+1}(1)\in \hat{C}_1^\Gamma(\tilde{X}_{n+1})  \g   \Z_{\text{pre}}) \leq 1-\alpha+\max_{i}\{\hat{p}_{i}\}_{i=1}^{n+1}.
\label{eq:thm1-pf}
}

We first prove the lower bound of the coverage probability. By definition of total variation distance $d_{\text{TV}}(\cdot)$,
\ba{
\resizebox{.92\hsize}{!}{
$\left|p(Y_{n+1}(1)\in \hat{C}_1^\Gamma(X_{n+1})  \g   \Z_{\text{pre}},  \Z_{\text{cal}}) - p(\tilde{Y}_{n+1}(1)\in \hat{C}_1^\Gamma(\tilde{X}_{n+1})  \g   \Z_{\text{pre}},  \Z_{\text{cal}})\right| 
 \leq~d_{\text{TV}}(Q, \tilde{Q}),$
}
\label{eq:thm1-pf-3}
}
where $d_{\text{TV}}(Q, \tilde{Q})=d_{\text{TV}}(Q(X,Y(1)), \tilde{Q}(X,Y(1)))$.  Then
\bas{
\mathbb{P}(Y_{n+1}(1)\in \hat{C}_1^\Gamma(X_{n+1})  \g   \Z_{\text{pre}}, \Z_{\text{cal}})  \geq p(\tilde{Y}_{n+1}(1)\in \hat{C}_1^\Gamma(\tilde{X}_{n+1})  \g   \Z_{\text{pre}},  \Z_{\text{cal}}) - d_{\text{TV}}(Q, \tilde{Q}).
}
Marginalizing over $ \Z_{\text{cal}}$ and by \Cref{eq:thm1-pf},
\bas{
\mathbb{P}(Y_{n+1}(1)\in \hat{C}_1^\Gamma(X_{n+1})  \g   \Z_{\text{pre}})  \geq 1-\alpha - d_{\text{TV}}(Q, \tilde{Q}).
}
By the integral definition of total variance,
\bas{
d_{\text{TV}}(Q, \tilde{Q}) =\half \int |\hat{w}(X,Y(1)) - w(X,Y(1))| p(X \g T=1) p(Y(1) \g X,T=1). 
}
Plug in \Cref{eq:thm1-pf-1,eq:thm1-pf-2}
\bas{
|\hat{w}(X,Y(1)) - w(X,Y(1))|  =& p(T=1) |\frac{1-\hat{e}(X)}{\hat{e}(X)} - \frac{1-e(X)}{e(X)}| \exp(-h(X,Y(1))) \\
\leq & \Gamma p(T=1) |\frac{1-\hat{e}(X)}{\hat{e}(X)} - \frac{1-e(X)}{e(X)}|.
}
Hence, we have
\ba{
d_{\text{TV}}(Q, \tilde{Q})  \leq  \frac{\Gamma}{2} p(T=1) \E_{X \sim p(X \g T=1)}|\frac{1-\hat{e}(X)}{\hat{e}(X)} - \frac{1-e(X)}{e(X)}|
\label{eq:thm1-pf-tv}
}
Finally, marginalizing over $ \Z_{\text{pre}}$ gives 
\bas{
\mathbb{P}(Y_{n+1}(1)\in \hat{C}_1^\Gamma(X_{n+1})  \g   \Z_{\text{pre}})  \geq 1-\alpha - \frac{\Gamma}{2} p(T=1) \E_{X \sim p(X \g T=1)}|\frac{1-\hat{e}(X)}{\hat{e}(X)} - \frac{1-e(X)}{e(X)}|.
}

Now we prove the upper bound. Back to \Cref{eq:thm1-pf}, we first bound $\max_{i}\{\hat{p}_{i}\}_{i=1}^{n+1}$. By \Cref{eq:thm1-pf-2}  
\ba{
\hat{p}_i \leq& \frac{1+\Gamma \frac{1-\hat{e}(X_i)}{\hat{e}(X_i)}}{1+\Gamma \frac{1-\hat{e}(X_i)}{\hat{e}(X_i)} + \sum_{j\neq i}(1+\frac{1}{\Gamma} \frac{1-\hat{e}(X_j)}{\hat{e}(X_j)})} \notag \\
\leq& \frac{1+\Gamma \frac{1-\eta}{\eta}}{1+\Gamma \frac{1-\eta}{\eta} + n + \frac{1}{\Gamma} \frac{\eta}{1-\eta}} \notag \\
\leq&  \frac{\Gamma/\eta}{n+\Gamma/\eta}
\label{eq:thm1-pf-pi}
}
where the first inequality is by the MSM assumption and the second is by the bounds of propensity score. Then by \Cref{eq:thm1-pf,eq:thm1-pf-3}, we have
\ba{
p(Y_{n+1}(1)\in \hat{C}_1^\Gamma(X_{n+1})  \g   \Z_{\text{pre}})  \leq 1-\alpha+\max_{i}\{\hat{p}_{i}\}_{i=1}^{n+1} + d_{\text{TV}}(Q, \tilde{Q})
\label{eq:thm1-pf-upper}
}
Plugging the bounds in \Cref{eq:thm1-pf-tv,eq:thm1-pf-pi} to \Cref{eq:thm1-pf-upper} and marginalizing out $ \Z_{\text{pre}}$ we get the  upper bound. 

We can proceed the proof similarly to the case of $t=0$.  Hence, the proof is completed. ~\looseness=-1
\end{proof}

\section{Conformalized Sharp Sensitivity Analysis}
\label{sec:cssa}

We first show the sharpness constraint in \citet[Proposition 3, Corollary 2]{Dorn2021-vs} is equivalent to the constraint in the sharp MSM \Cref{eq:sharp-msm}. 

\begin{proposition}
The condition $\int p^{(s_t)}(Y(1)=y \g X, T=0) dy = 1$  in the sharp MSM is equivalent to $\E[\frac{T}{s_1(X,Y(1))} \g X] = 1$.
\label{prop:sharp}
\end{proposition}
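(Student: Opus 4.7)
The plan is to show that both conditions reduce to the same integral identity
\[
\int \frac{1}{s_1(x,y)}\, p(Y(1)=y \mid X=x, T=1)\, dy \;=\; \frac{1}{e(x)},
\]
from which the equivalence will follow immediately.

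First I would unpack the moment condition $\E[T/s_1(X,Y(1)) \mid X=x] = 1$. Since the integrand vanishes when $T=0$, a single application of the tower property gives
\[
\E\Big[\tfrac{T}{s_1(X,Y(1))}\,\Big|\,X=x\Big] = e(x)\int \tfrac{1}{s_1(x,y)}\,p(Y(1)=y\mid X=x, T=1)\,dy,
\]
so the moment condition is equivalent to the displayed integral identity.

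Next I would evaluate the total mass of $p^{(s_1)}(Y(1)\mid X=x, T=0)$ using its defining formula from \Cref{eq:counterfactual}, namely $p^{(s_1)}(Y(1)=y\mid X=x, T=0) = \text{OR}(s_1(x,y),e(x))^{-1}\,p(Y(1)=y\mid X=x, T=1)$. Expanding the odds ratio gives $\text{OR}(s_1(x,y),e(x))^{-1} = \tfrac{1-s_1(x,y)}{s_1(x,y)}\cdot\tfrac{e(x)}{1-e(x)} = \tfrac{e(x)}{1-e(x)}\big(\tfrac{1}{s_1(x,y)} - 1\big)$. Integrating against $p(Y(1)=y\mid X=x, T=1)$ and setting the result equal to $1$ yields
\[
\frac{e(x)}{1-e(x)}\Big(\int \tfrac{1}{s_1(x,y)}\,p(Y(1)=y\mid X=x, T=1)\,dy - 1\Big) = 1,
\]
which rearranges to exactly the same integral identity.

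Since both conditions are algebraically equivalent to the same identity, the proposition follows. The argument is essentially a bookkeeping exercise built on Tukey's factorization (\Cref{eq:lm-aux2}); I do not anticipate a substantive obstacle, only the need to carefully track the $\text{OR}$/$s_1$/$e$ conversions. An alternative, slightly shorter route would be to observe directly that $p^{(s_1)}(Y(1)\mid X, T=0) \propto p(T=1\mid X, Y(1))^{-1}\cdot (\text{density factors independent of the missing outcome})$ and invoke the normalization constraint, but the two-sided computation above is cleanest and makes the equivalence transparent.
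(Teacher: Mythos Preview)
Your proposal is correct and follows essentially the same approach as the paper: both arguments expand $\text{OR}(s_1,e)^{-1}=\tfrac{e}{1-e}\big(\tfrac{1}{s_1}-1\big)$ to integrate the counterfactual density and use the tower property to rewrite $\E[T/s_1(X,Y(1))\mid X]$ as $e(X)\E[1/s_1\mid X,T=1]$. The only difference is organizational—you route both conditions through the common identity $\int s_1^{-1}p(Y(1)\mid X,T=1)\,dy=1/e(X)$, whereas the paper relates the two quantities directly—but the computations are identical.
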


\begin{proof}[Proof of \Cref{prop:sharp}]
We prove for the potential outcome under treatment and the claim for the PO under control follows similarly. By \Cref{eq:equiv}, for a given $X$, 
\bas{
\int p(Y(1)=y \g X, T=0) dy =& \int p(Y(1)=y \g X, T=1) \frac{e(X)}{1-e(X)} \frac{1}{s_1(X,y)} dy -\frac{e(X)}{1-e(X)} \\
=& \E[\frac{e(X)}{s_1(X,Y(1)} \g X, T=1] \frac{1}{1-e(X)}  -\frac{e(X)}{1-e(X)} \\
=&  \E[\frac{T}{s_1(X,Y(1))} \g X]  \frac{1}{1-e(X)}  -\frac{e(X)}{1-e(X)}
}
 where the last equality is because 
\bas{
\E[\frac{T}{s_1(X,Y(1))} \g X] =& \E[\frac{1}{s_1(X,Y(1)} \g X, T=1] e(X).
}
Therefore, we have
\ba{
\int p(Y(1)=y \g X, T=0) dy = 1 \Leftrightarrow  \E[\frac{T}{s_1(X,Y(1))} \g X] = 1.
}
\end{proof}

\noindent\textbf{Covariate Balancing Selection Score.~~}   For the  control group, similar sharpness constraint as \Cref{eq:relaxed} in main paper can be obtained by flipping the binary treatment label. By \Cref{eq:relaxed}, we get 
\ba{
\E[\frac{g(X_i)T_i}{s_1(X_i,Y_i(1))} ] = \E[\frac{g(X_i)(1-T_i)}{1-s_0(X_i,Y_i(0))} ] =  \E[g(X_i)]. 
\label{eq:balancing}
}
\Cref{eq:balancing} encourages the covariate balancing by moment matching. As a related work, the idea to tighten the interval estimation by covariate balancing has  been applied to generalizing RCT estimates across different locations \citep{nie2021covariate}. 

Note that under unconfoundedness,  i.e. $\Gamma=1$ and $s_1(X,Y(1)) = s_0(X,Y(0)) = e(X)$, \Cref{eq:balancing} recovers the constraints of the covariate balancing propensity score (CBPS) \citep{imai2014covariate}. Hence we call the selection scores that satisfies \Cref{eq:balancing} the covariate balancing selection score (CBSS). For the choice of covariates function $g(X)$, \citet{Dorn2021-vs} proposes $g(X)=Q_\tau(Y|X,T=1)$; \citet{imai2014covariate} takes $g(X)$ as  $X$ and the derivative of the   treatment model w.r.t.  its parameters. \\

\noindent\textbf{Derivation of the Constraints \Cref{eq:qt-opt-sharp} in \Cref{sec:sharpness}.~~}   
By \Cref{lm:sscore}, we re-write the balancing condition in \Cref{eq:relaxed} in terms of the conformal weights,
\ba{
 \E[g(X_i)] =& \E[\frac{g(X_i)T_i}{s_1(X_i,Y_i(1))} ] \notag \\
 =& \E[\frac{g(X_i)T w_1(X_i,Y_i(1))}{p(T_i=1)} ]  \notag  \\
=& \E[g(X_i)w_1(X,Y_i(1)) | T_i=1]. 
}
By definition of the propensity score, we also have
\ba{
\E[\frac{g(X_i)T_i}{e(X_i)} ] = \E[g(X_i)]
}
Hence by \Cref{eq:relaxed}, $ \E[g(X_i)w_1(X,Y_i(1)) | T_i=1] = \E[\frac{g(X_i)T_i}{e(X_i)} ] $. The empirical estimation to this balancing condition is the equality constraint in  \Cref{eq:qt-opt-sharp}, i.e.,
\bas{
\frac{1}{N_t}\sum_{i: T_i=t} g(X_i)&w_i^{\Gamma} = \frac{1}{N} \sum_{i=1}^N \frac{T_i g_k(X_i)}{\hat{e}(X_i)}.
}
Similar condition can be derived for the control group by flipping the treatment label. 

In sum,  CSSA solves the optimization in \Cref{eq:qt-opt} with additional constraints in \Cref{eq:qt-opt-sharp}, 
\bac{
\max_{w_{1:n+1}}  \quad &Q_{1-\alpha}\big(   \sum_{i=1}^n p_i \delta_{V_i} + p_{n+1}\delta_{\infty}  \big). \\
  \text{subject to} \quad& p_i = \frac{w_i}{\sum_{i=1}^{n+1} w_i}, ~~ 1 \leq i \leq n+1 \\
 w_{lo}^{\Gamma}(X_i) \leq w_i &\leq w_{hi}^{\Gamma}(X_i), ~1 \leq i \leq n, ~~w_{lo}^{\Gamma}(X) \leq w_{n+1} \leq w_{hi}^{\Gamma}(X) \\
\frac{1}{N_t}\sum_{i: T_i=t} g_k(X_i)&w_i^{\Gamma} = \frac{1}{N} \sum_{i=1}^N \frac{T_i^t(1-T_i)^{1-t}}{\hat{e}(X_i)^t(1-\hat{e}(X_i))^{1-t}}g_k(X_i),  ~~ 1 \leq k \leq K.
\label{eq:qt-opt-sharp1}
} 
\vspace{2mm}

For example, when $t=1$, $K=1$, let the covariate function be the estimated propensity score, i.e. $g_1(X_i) = \hat{e}(X_i)$. The balancing constraint in \Cref{eq:qt-opt-sharp1} becomes
\ba{
\frac{1}{N_t} \sum_{i:T_i=1}  e(X_i)  w_i / p(T_i=1) = 1,
}
where the propensity score $e(X)$ can be replaced by $\hat{e}(X_i)$ in practice. Note that by the range of conformal weights in \Cref{lm:range}, this balancing condition is automatically satisfied under unconfoundedness, i.e.  $\Gamma=1$, and \Cref{eq:qt-opt-sharp1} reduces to the standard WCP.  \\

\noindent\textbf{Computational Methods for Solving \Cref{eq:qt-opt-sharp1}.~~}    The difficulty for solving the quantile optimization in \Cref{eq:qt-opt-sharp1} is that the objective function is the empirical quantile, which is not continuous. To solve 
this problem, we consider a sequence of simplified optimization with continuous objective.  For a given integer $1 \leq J \leq n+1$,
\bac{
\max_{w_{1:n+1}}  \quad & \frac{\sum_{i=J}^{n+1} w_i}{\sum_{i=1}^{n+1} w_i}  \\
  \text{subject to} \quad& 
 w_{lo}^{\Gamma}(X_i) \leq w_i \leq w_{hi}^{\Gamma}(X_i), ~1 \leq i \leq n, ~~w_{lo}^{\Gamma}(X) \leq w_{n+1} \leq w_{hi}^{\Gamma}(X) \\
& \frac{1}{N_t}\sum_{i: T_i=t} g_k(X_i) w_i = \frac{1}{N} \sum_{i=1}^N \frac{T_i^t(1-T_i)^{1-t}}{\hat{e}(X_i)^t(1-\hat{e}(X_i))^{1-t}}g_k(X_i),  ~~ 1 \leq k \leq K.
\label{eq:qt-opt-simple}
}
The problem in \Cref{eq:qt-opt-simple} is a nonlinear programming with linear constraints, which can be solved efficiently with existing computation tools  (e.g., the sequential quadratic programming in Python package \texttt{scipy} and R package \texttt{nloptr}).  Denote the optima of \Cref{eq:qt-opt-simple} as $\hat{\alpha}_J$. It is clear that $1=\hat{\alpha}_1 \geq \hat{\alpha}_2 \geq \cdots \geq \hat{\alpha}_{n+1}$.  Supposing $V_1 \leq \cdots \leq V_n \leq V_{n+1}$, by the definition of quantile, the solution of \Cref{eq:qt-opt-sharp} is $V_j$ where $j = \max \{j: \hat{\alpha}_j \geq \alpha\}$. We can then solve \Cref{eq:qt-opt-simple} for $J = n+1, n, \cdots, 1$ until the first time $\hat{\alpha}_{J} \geq \alpha$ (say $J=m$) and  the optima of \Cref{eq:qt-opt-sharp1} is $V_m$. This process can be implemented efficiently with binary search because of the ordering in $\hat{\alpha}_{1:n+1}$. Hence we only need to solve \Cref{eq:qt-opt-simple} at most $\mathcal{O}(\log n)$ times. In practice, to further improve the computational efficiency, we can first run \Cref{alg:csa} to find an upper bound of the optimal index $m$ before the binary search. CSSA is summarized in \Cref{alg:cssa}. ~\looseness=-1

{\LinesNumberedHidden
\begin{algorithm}[!t]  
  \caption{CSSA for Estimating a Potential Outcome}
  \label{alg:cssa}
\textbf{Input:} Data $\Z=(X_{i}, Y_{i}, T_{i})_{i=1}^{N}$, where $Y_{i}$ is missing if $T_{i}=1-t$; level $\alpha$, sensitivity parameter $\Gamma$, target point covariates $X$, covariate balancing functions $\{g_k(X)\}_{k=1}^K$
 \vspace{0.2cm}
\begin{algorithmic}[1]
 \State  Run \Cref{alg:csa} with data $\Z$; return the scores $\mathcal{V} = \{V_i\}_{i=1}^{|\mathcal{V}|}$, the score index $k+1$, and the ~~~~~~~~~~~optima $\hat{\wv}=(w_1, \cdots, w_{|\mathcal{V}|})$ from Step II of \Cref{alg:csa}
 \State Let $L=1$, $R=k+1$, M=$\lceil (L+R)/2 \rceil$, and initialization  $\wv_0=\hat{\wv}$
\State \textbf{while} ~$R-L>1$~\textbf{do} 
 \NoNumber{~~~~Solve \Cref{eq:qt-opt-simple} by a nonlinear programming solver (e.g.  \texttt{slsqp}($\cdot$) in \textsf{R}) with $J=M$, \\~~~~$n+1=|\mathcal{V}|$ and initialization $\wv_0$; return the optimal objective value $\hat{\alpha}_J$ and optima $\hat{\wv}_J$
 \State ~~~~\textbf{if} $\hat{\alpha}_J \geq \alpha$, $L \leftarrow M$ \textbf{else} $R \leftarrow M$;  $M \leftarrow \lceil (L+R)/2 \rceil$
  \State~~~~Set initialization $\wv_0 = \hat{\wv}_J$
 }
\end{algorithmic} 
 \textbf{Output:} Compute  $\hat{C}_t^{\Gamma}(X)$ by \Cref{eq:csa-interval} with $\hat{Q}(Z_{1:n},X) = V_{R}$ 

 \end{algorithm}
}

\section{Rejection Sampling for Counterfactual Outcome}
\label{sec:rejection-sampling}
We design an algorithm to generate $Y_i \sim p(Y(t) \g X_i,T_i=1-t)$ that is consistent with the MSM. Let the proposal distribution be $q(y \g x) = p(Y(t)=y \g X=x,T=t)$. The target distribution is 
\ba{
p(y\g x) = \eta(y\g x) q(y\g x) / M(x), 
\label{eq:tilt}
} 
where $ \eta(y \g x)$ is a positive tilting function and $M(x) = \int \eta(y \g x) q(y\g x) dy$ is the normalizing constant. The tilting function is chosen so that $ 1/\Gamma \leq \eta(y\g x) / M(x) \leq \Gamma$.  By construction, the density ratio $p(y\g x)/q(y\g x)$ is bounded between $1/\Gamma$ and $\Gamma$. Let the missing outcome distribution be $ p(Y(t)=y \g X=x,T=1-t) = p(y\g x) $.  By \Cref{lm:sscore}, the corresponding selection score belongs to $\cE(\Gamma)$.

Samples from $p(y\g x)$ can be generated by the rejection sampling. Since the density ratio is bounded by $ \Gamma$, we generate $Y(t) \sim q(y | x)$, and accept it with probability
\ba{
p_{\text{accept}} = \frac{p(Y(t) \g x)}{q(Y(t) \g x) \Gamma } = \frac{\eta(Y(t) \g x)}{\Gamma M(x)}. 
\label{eq:accept}
}
The rejection sampling guarantees the accepted samples are generated from $p(Y(t) \g X, T=1-t)$ with a sensitivity model belonging to $\cE(\Gamma)$. The choice of tilting function decides which sensitivity model that the true data generating distribution $\mathbb{P}_0$ corresponds to. By choosing different tilting functions, we can evaluate CSA over different $\mathbb{P}_0$ which are all compatible with the observed data and the MSM. 

For \Cref{fig:sample}, $q_{l}(x)$ and $q_{r}(x)$ are set to be the $1/(2(\Gamma+1))$ and $1-1/(2(\Gamma+1))$ quantile of the observed outcome distribution $p(Y(1) | X=x, T=1)$, respectively.  The tilting function $\eta(y | x)$  is set as $1/\Gamma$ when $q_{l}(x) \leq y \leq q_{r}(x)$, and as $\Gamma$ otherwise.
To generate samples of the potential outcome, the normalization constant $M(x)=1$ due to the tilting function design.  ~\looseness=-1

\section{Coverage Limit of a Finite-length Interval}
\label{sec:threshold}

We discuss a maximum coverage that a finite-length predictive interval can achieve in conformal inference.  For weighted conformal prediction, the width of the predictive interval is determined by the maximum $(1-\alpha)$-quantile of  the distribution, i.e.\ $ \sum_{i=1}^n p_i^{(h)}\delta_{V_i} + p_{n+1}^{(h)}\delta_{\infty}$. The maximum $(1-\alpha)$-quantile is infinite if there exist a sensitivity model $h \in \mathcal{H}(\lambda)$, such that the probability mass is greater than the miscoverage rate, i.e.\ $p_{n+1}^{(h)} \geq \alpha$.

To infer $Y(1)$, by \Cref{prop:qt-opt}, we get the minimal miscoverage $\alpha^*$ that an informative predictive interval can achieve
\ba{
\alpha^* \vcentcolon& = \max_{h \in \mathcal{H}(\lambda)} p_{n+1}^{(h)}(Z_{1:n}, X; \Gamma, t=1) \notag \\
&=  \frac{w_{hi}^{\Gamma,t=1}(X)}{\sum_{i=1}^n w_{lo}^{\Gamma,t=1}(X_i) + w_{hi}^{\Gamma,t=1}(X)} \notag \\
& =\frac{\Gamma(e(X)^{-1} -1)+1}{n+1+\sum_{i=1}^n(e(X_i)^{-1} -1)/\Gamma  + \Gamma(e(X)^{-1} -1)}.
\label{eq:alpha}
}

 The smaller the $\alpha^*$ is, the higher coverage an informative interval can reach. By setting $\Gamma = 1$ in \Cref{eq:alpha}, we get the minimal miscoverage $\alpha^*$ under unconfoundedness, and by further setting $e(X) = 1/2$, we get  $\alpha^*$ for  randomized trials.  These results are  in \Cref{tab:alpha}. 

\begin{table}[ht]
 \caption{  The minimal miscoverage for a finite-length predictive band. The target is $Y(t)$ of a new unit in population with covariates $X$, and the  observed data are  at treatment level $t$. 
 \label{tab:alpha}  }
 \centering
 \resizebox{0.9\columnwidth}{!}{%
\begin{tabular}{cccc}
\toprule
& 1:1 Randomized Trial  & Unconfoundedness & Confounding Specified by the MSM\\
\midrule 
$\alpha^*$ &   $\frac{1}{n+1}$   & $\frac{p(T=t  \g  X)^{-1}} {\sum_{i=1}^{n+1} p(T=t \g X_i)^{-1} }$ &  $\frac{w_{hi}^{\Gamma,t}(X)}{\sum_{i=1}^n w_{lo}^{\Gamma,t}(X_i) + w_{hi}^{\Gamma,t}(X)}$ \\
\bottomrule
\end{tabular}
}
\end{table}

\Cref{eq:alpha} reveals that to predict $Y(1)$ for a target unit,  $\alpha^*$ decreases when the number of observed units $n$ increases and when the sensitivity parameter $\Gamma$ decreases.  
$\alpha^*$ decreases when the propensity score $e(X_i)$ of the observed units decreases and when  $e(X)$ of the target point increases.
This change in propensity score indicates that there is overlap between the target unit's covariates and the covariates of the training units.

\section{Additional Simulation Details and Results}\label{sec:additional-sim}

The counterfactual distributions in \Cref{sec:exp-missing,sec:simulation-ite} are generated by the rejection sampling method in \Cref{sec:rejection-sampling}. Specifically, the middle figure in \Cref{fig:sample}, which we adopt as the adversarial case in the MSM, is generated by the tilting function in \Cref{eq:tilt} as $\eta(y \g x) = 1/\Gamma$ if $q_l(x) \leq y \leq q_r(x)$ and $\eta(y \g x) = \Gamma$ otherwise. Here, $q_l(x)$ and $q_r(x)$ are the $1/(2(\Gamma + 1))$ and $1-1/(2(\Gamma+1))$ quantiles of the observed outcome distribution, respectively. Such construction makes the normalizing constant $M(x) =1$ in \Cref{eq:tilt} and produces a counterfactual distribution with high density on the low density region of the observed distribution, which is an adversarial case to test the validity of the predictive band by CSA. In  \Cref{sec:simulation-ite}, the counterfactual distribution $p(Y(0) | X=x, T=1)$   is constructed by  the rejection sampling with the observed outcome distribution $p(Y(0) | X=x, T=0)$ and the aforementioned tilting function $\eta(y|x)$.

 \Cref{tab:coverage} shows the coverage of different interval estimates. The estimations that CSA is compared with are 
percentile bootstrap sensitivity analysis of the ATE (Boot.Sens.) \citep{zhao2019sensitivity},  the Bayesian Additive Regression Tree estimation of the CATE (BART) \citep{chipman2010bart}, and ITE estimation under unconfoundedness \citep{lei2020conformal}.  \Cref{fig:fish-2} shows the percentage of positive and negative intervals at different confounding strength and coverage rate. The predictive interval is constructed by the conformal quantile prediction. \Cref{fig:shrink-2} shows that CSA produces valid and relatively sharp predictive interval for the ITE under unmeasured confounding.

\begin{table}[t]
 \caption{ Empirical coverage of the ITE by different interval estimations. CSA is implemented with mean prediction. $\Gamma$ is the sensitivity parameter. For all methods, the target coverage is $0.8$. When there is unmeasured confounding,
CSA-M produces better coverage than alternative methods. }
 \centering
\begin{tabular}{p{2mm}>{\centering\arraybackslash}p{1cm}>{\centering\arraybackslash}p{2cm}>{\centering\arraybackslash}p{2cm}>{\centering\arraybackslash}p{2cm}>{\centering\arraybackslash}p{2cm}}
\toprule
&\multicolumn{1}{c}{$\Gamma$} & \multicolumn{1}{c}{CSA} & \multicolumn{1}{c}{Boot.Sens.}& \multicolumn{1}{c}{BART}& \multicolumn{1}{c}{ITE-NUC} \\
\midrule
 \parbox[t]{2mm}{\multirow{5}{*}{\rotatebox[origin=c]{90}{Homosc.}}} &   1.0 &  0.80 (0.02)  &  0.03 (0.00) &  0.62 (0.01) & 0.80 (0.02)\\
 &   1.5 &  0.83 (0.03)  &  0.16 (0.01) &  0.58 (0.01) &  0.75 (0.03)\\
 &   2.0 & 0.85 (0.02)   &  0.24 (0.01) &  0.54 (0.01)  &  0.69 (0.03)\\
 &   3.0 &  0.87 (0.04) &   0.32 (0.01) &  0.49 (0.01)  & 0.60 (0.03)\\
 &   4.0 &  0.87 (0.04) &  0.37 (0.01) &  0.46 (0.01)  & 0.54 (0.03)\\
 
 \midrule
  \parbox[t]{2mm}{\multirow{5}{*}{\rotatebox[origin=c]{90}{Heterosc.}}} &    1.0 & 0.80 (0.02)  &  0.02 (0.00) &  0.61 (0.02) &  0.81 (0.03)\\
 &  1.5 &  0.82 (0.03) &   0.16 (0.00) &  0.58 (0.01) &  0.74 (0.03)\\
 &  2.0 &  0.85 (0.03) &   0.25 (0.01) &  0.56 (0.01)&  0.70 (0.03)\\
 &  3.0 &  0.87 (0.03) &  0.36 (0.01) &  0.53 (0.02) & 0.64 (0.03) \\
 &  4.0 &  0.89 (0.03) &   0.41 (0.01) &  0.51 (0.01)  & 0.59 (0.04)\\
\bottomrule
\label{tab:coverage}
\end{tabular}
\end{table}

\begin{figure}[htbp]
\centering{
   {{\includegraphics[width=0.41\textwidth]{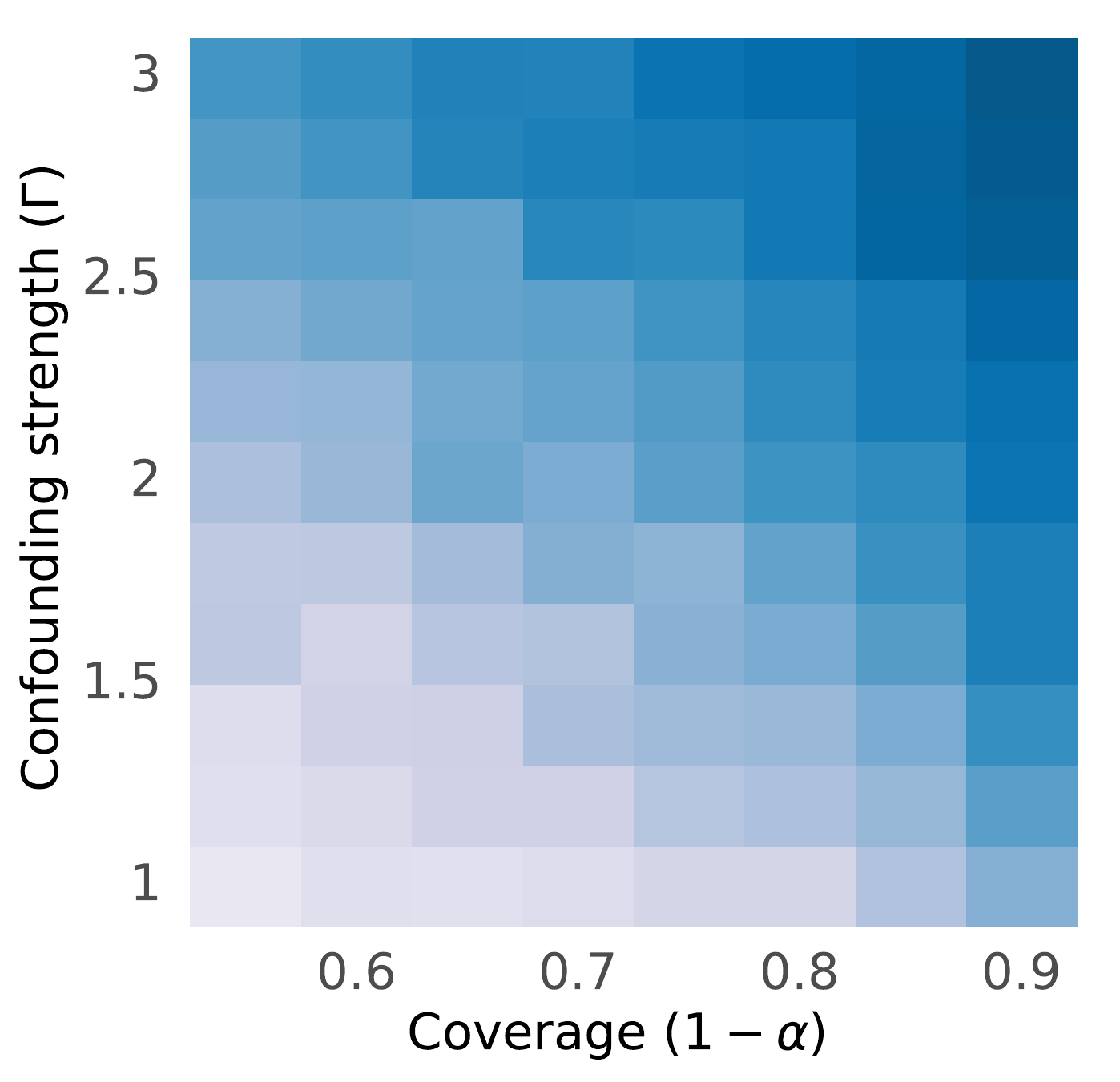} }}  
    {{\includegraphics[width=0.55\textwidth]{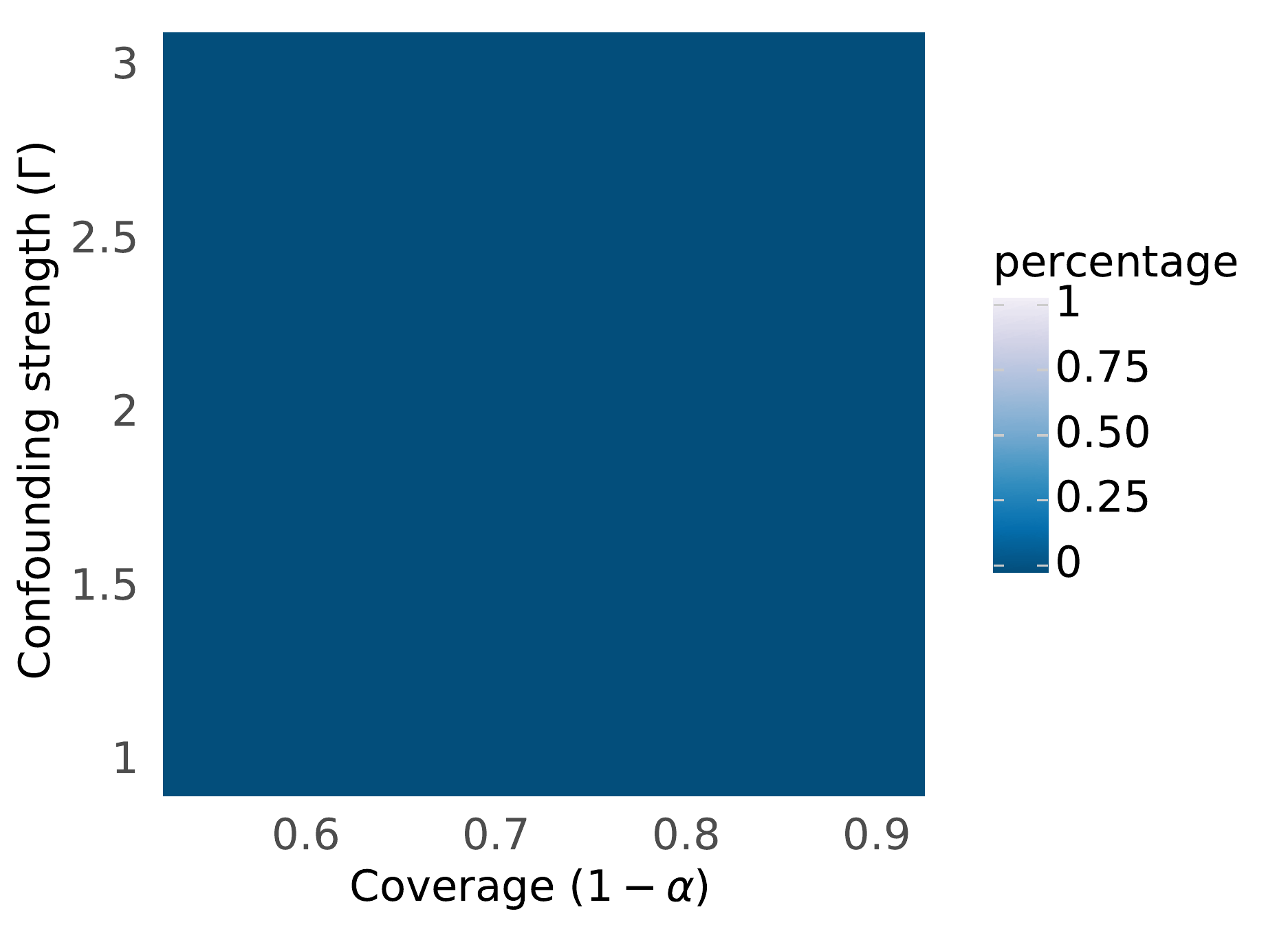} }} 
    \caption{Analysis of NHANES fish consumption data. The predictive interval is estimated by the CSA with quantile prediction. Left panel:  fraction of intervals with positive lower bounds; Right panel: fraction of intervals with negative upper bounds. }
    \label{fig:fish-2}%
 }
\end{figure}

\begin{figure}[ht]
\centering{
   {{\includegraphics[width=0.47\textwidth]{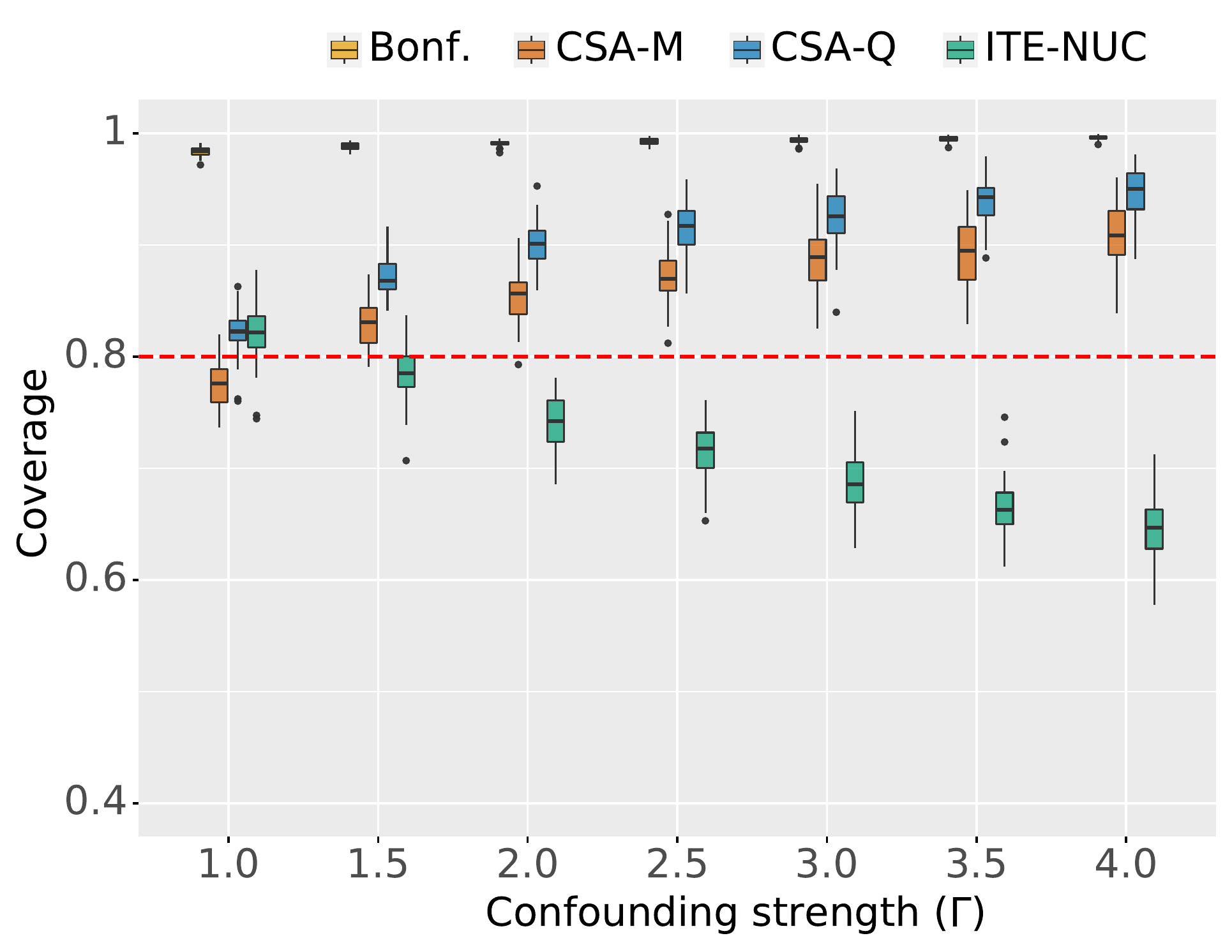} }} ~
   {{\includegraphics[width=0.47\textwidth]{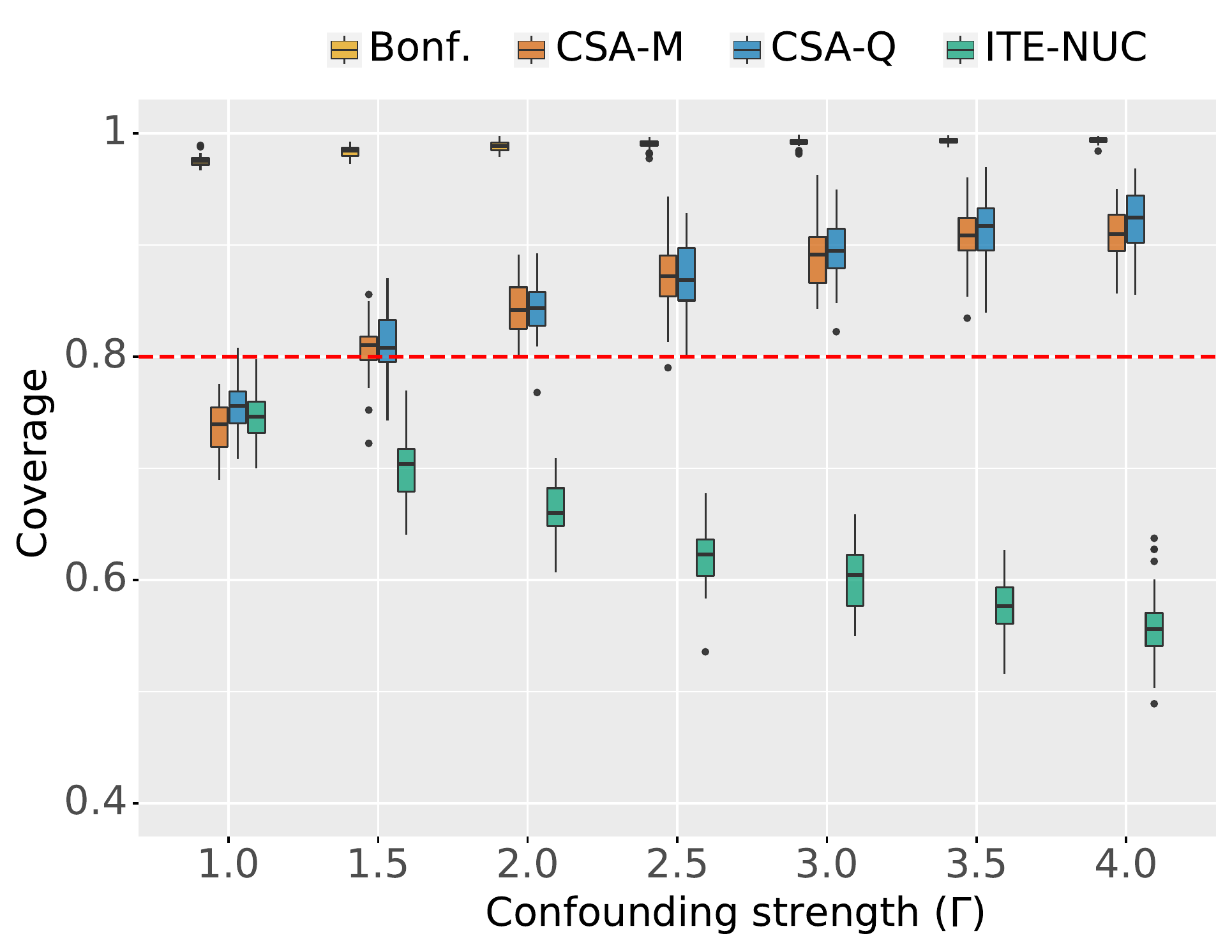} }} \\ \vspace{3mm}

   {{\includegraphics[width=0.47\textwidth]{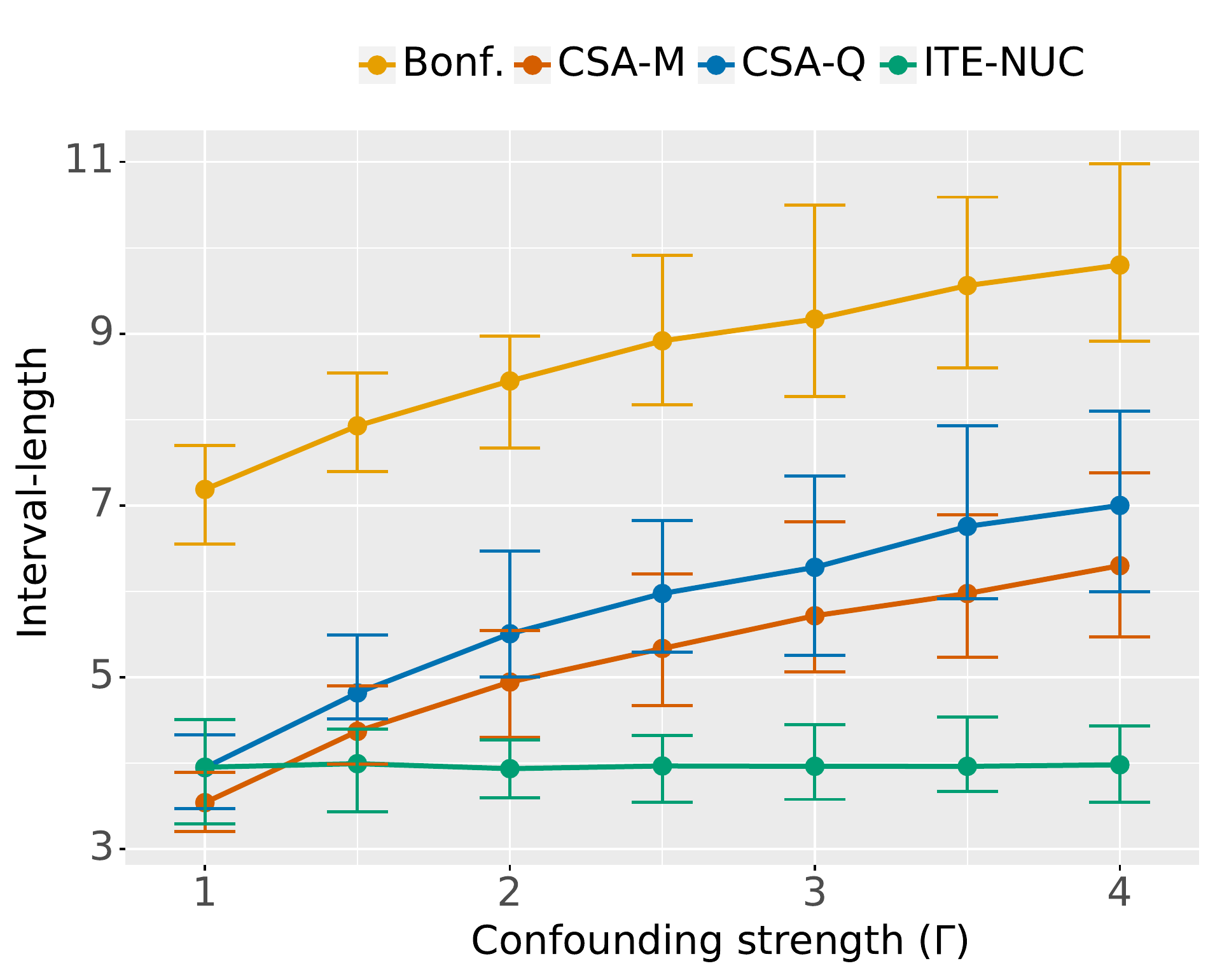} }} ~
   {{\includegraphics[width=0.47\textwidth]{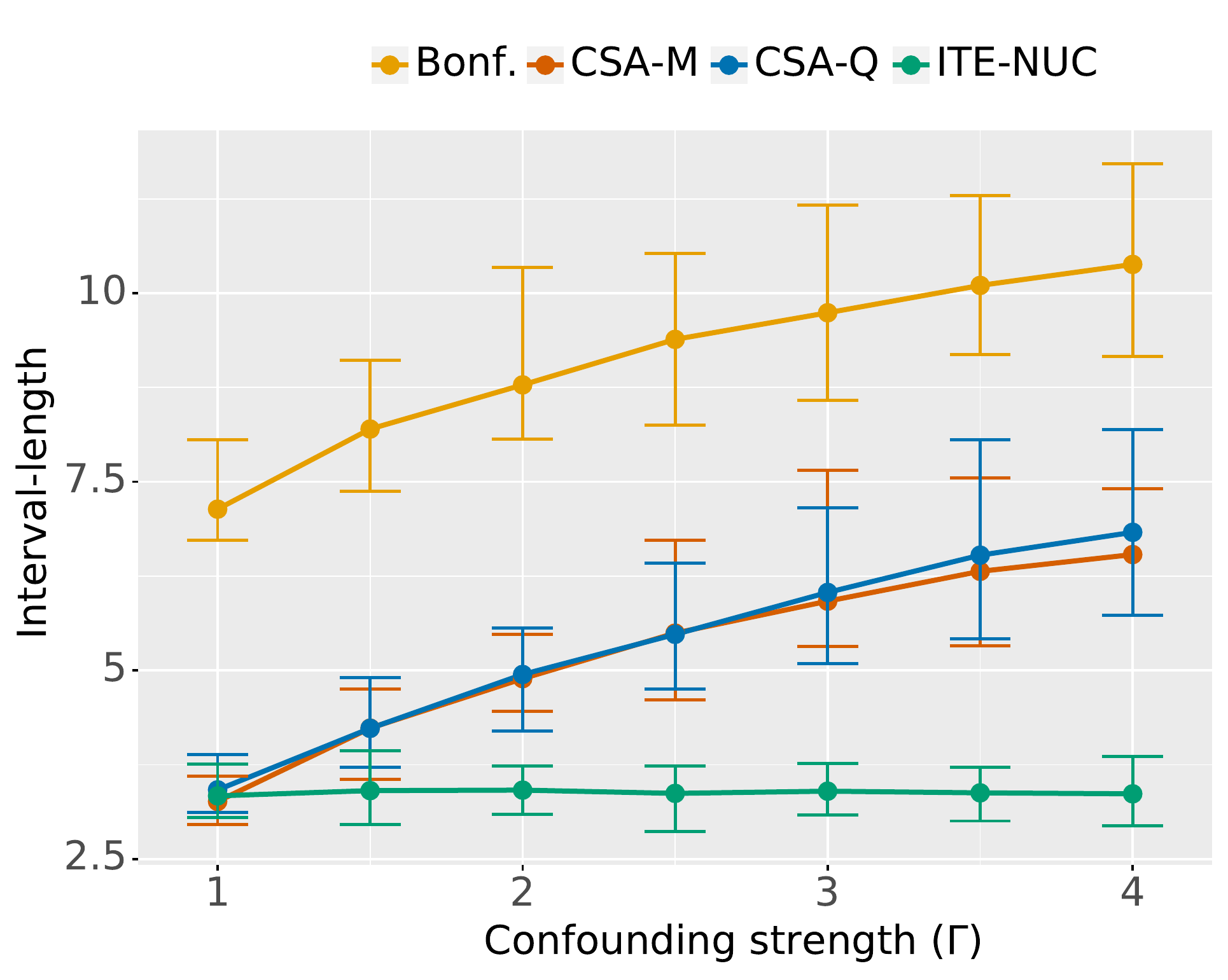} }}  \\
 \subfloat[Homosc.] 
  {{\includegraphics[width=0.47\textwidth]{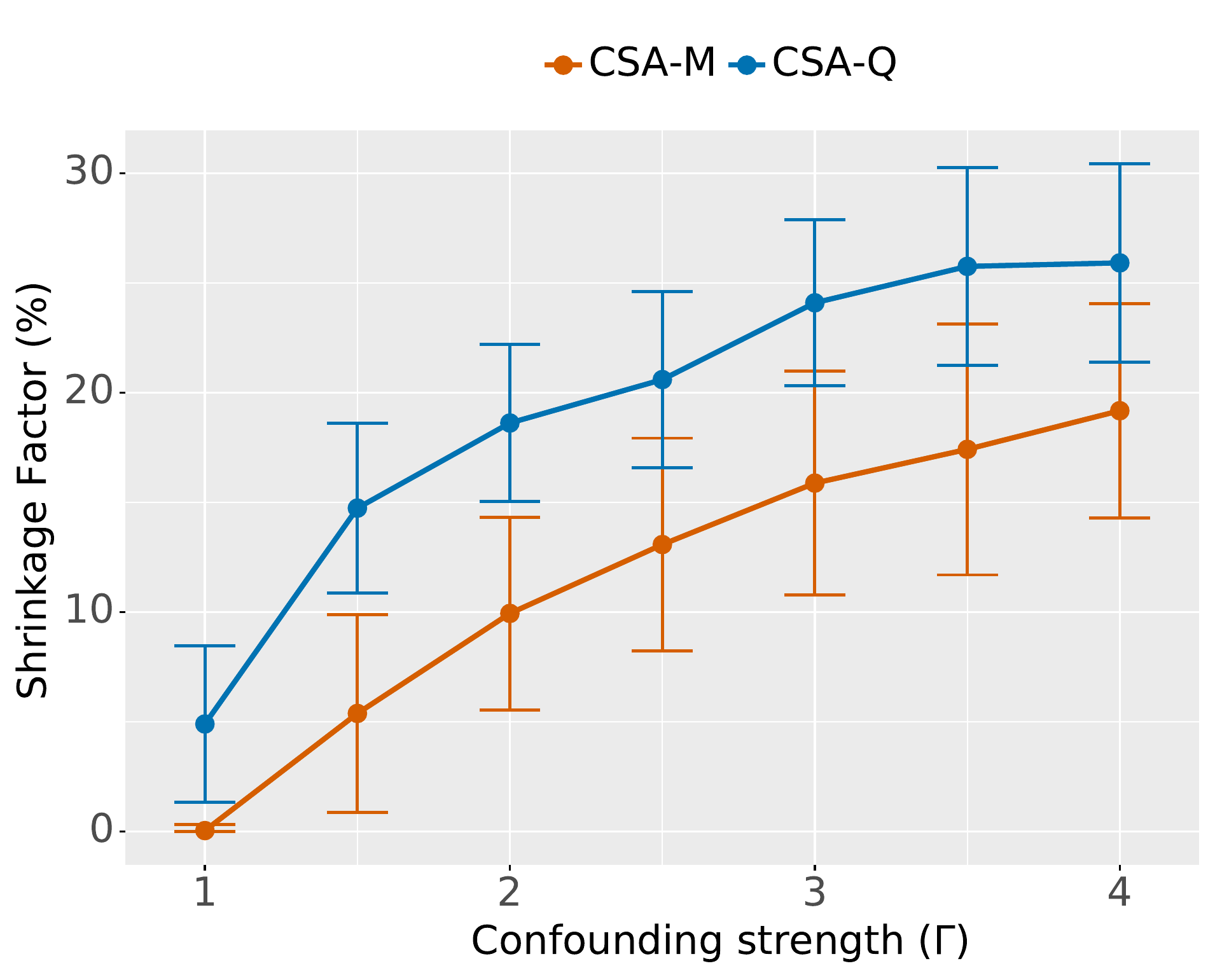}}} ~~~
   \subfloat[Heterosc.]
  {{\includegraphics[width=0.47\textwidth]{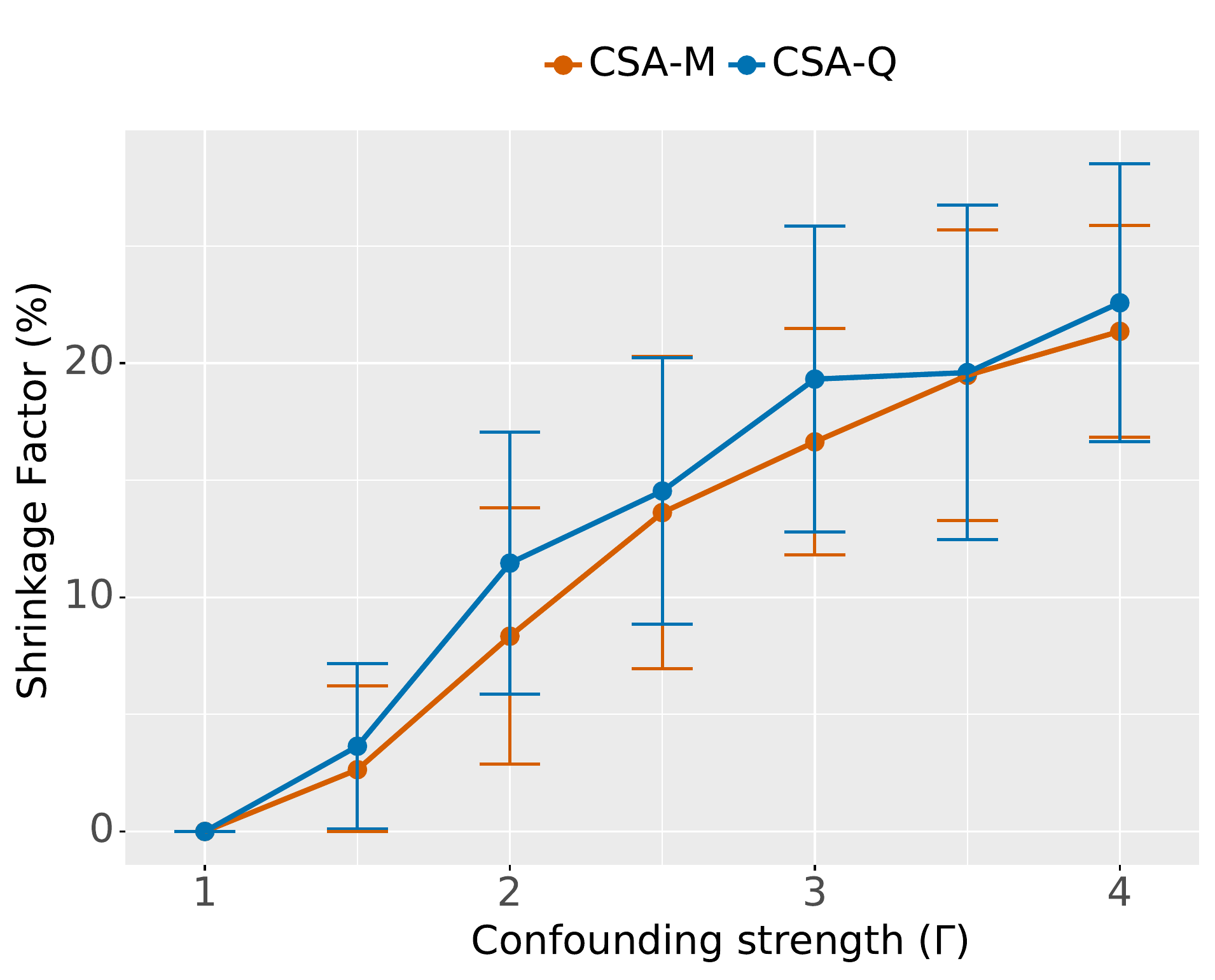} }} 
}
    \caption{\small \textbf{Top:} Empirical coverage of the ITE. The dashed lines is the target coverage.  \textbf{Middle:}  The length of predictive interval.   CSA with Bonferroni correction produces a wide interval and  high coverage. CSA with nested method has valid coverage and provides sharper intervals.
\textbf{Bottom:} The sharpness of  CSA predictions. CSA-M is less conservative than CSA-Q on homoscedastic data. To keep the nominal coverage, the maximal shrinkage factor is below 20\% for CSA-M and is slightly higher for CSA-Q. }
    \label{fig:shrink-2}
\end{figure}

\end{document}